\theoremstyle{plain}
\newtheorem {problem}{Problem}
\theoremstyle{plain}
\theoremstyle{plain}
\newtheorem {theorem}{Theorem}
\theoremstyle{plain}
\newtheorem {proposition}{Proposition}
\theoremstyle{definition}
\newtheorem {definition}{Definition}
\newtheorem{property}{Property}
\theoremstyle{remark}
\newtheorem{remark}{Remark}
\theoremstyle{plain}
\newtheorem {lemma}{Lemma}
\newtheorem{lemsec}{Lemma}[section]
\theoremstyle{plain}
\theoremstyle{plain}
\newtheorem* {assumption}{\edit{Assumption (A)}}
\newcommand\IND{\mathds{1}}
\newcommand\PR {\mathds{P}}
\newcommand\EXP{\mathds{E}}
\newcommand\reals{\mathds{R}}
\newcommand\ALPHABET{\mathcal}
\newcommand{\tBELL}[3]{[ \tilde {\mathscr B}_{#1} {#2} ]({#3})}
\newcommand\edit{\textcolor{blue}}
\let\edit\relax
\begin{document}

\title{Information-Theoretic Privacy for Smart Metering Systems with a Rechargeable Battery}

\author{Simon Li, Ashish Khisti, and Aditya Mahajan
  \thanks{Simon Li  was with the University of Toronto when this work was done. Ashish Khisti (akhisti@ece.utoronto.ca) is with the University of Toronto, Toronto, ON, Canada. Aditya Mahajan (aditya.mahajan@mcgill.ca) is with McGill University, Montreal, QC, Canada. }
  \thanks{Preliminary version of this work was presented at the Signal Processing for Advanced Wireless Communications (SPAWC), Jul.~2015, Stockholm, Sweden,  International Zurich Seminar (IZS), Mar.~2-4, 2016, Zurich, Switzerland and the American Control Conference (ACC), July 6-8, 2016, Boston, MA, USA.}}
\maketitle

\begin{abstract} 
Smart-metering systems report electricity usage of a user to the utility provider on almost real-time basis.
This could leak private information about the user to the utility provider.
 In this work we investigate the use of a rechargeable battery in order
to provide privacy to the user. 

We assume that the user load sequence  is a first-order Markov process, the
battery satisfies ideal charge conservation, and that privacy is measured
using normalized mutual information (leakage rate) between the user load and
the battery output. We study the optimal battery charging policy that minimizes
the leakage rate among the class of battery policies that satisfy causality and
charge conservation. We propose a series 
reductions on the original problem and
ultimately recast it as a Markov Decision Process (MDP) that can be solved
using a dynamic program.

In the special case of i.i.d.\ demand, we explicitly
characterize the optimal policy and show that the associated leakage rate
can be expressed as a single-letter mutual information expression. In this case
we show that the optimal charging policy admits an intuitive interpretation of preserving
a certain invariance property of the state. Interestingly an alternative proof of optimality
can be provided that does not rely on the MDP approach, but is based on purely information
theoretic reductions.
\end{abstract}

\IEEEpeerreviewmaketitle

\section{Introduction}

Smart meters are a critical part of modern power distribution systems because
they provide fine-grained power consumption measurements to utility providers.
These fine-grained measurements improve the efficiency of the power grid by
enabling services such as time-of-use pricing and demand
response~\cite{Ipakchi09}. However, this promise of improved efficiency is
accompanied by a risk of privacy loss. It is possible for the utility
provider---or an eavesdropper---to infer private information including load
taxonomy from the fine-grained measurements provided by smart
meters~\cite{infer1,infer2,infer3}. Such private information could be
exploited by third parties for the purpose of targeted advertisement or
surveillance. Traditional techniques in which an intermediary anonymizes the
data~\cite{anon} are also prone {\edit{to privacy}} loss to an eavesdropper. One possible
solution is to partially obscure the load profile by using a rechargeable
battery~\cite{smart-meter-1}. As the cost of rechargeable batteries decreases
(for example, due to proliferation of electric vehicles), using them for improving privacy is becoming economically viable.

In a smart metering system with a rechargeable battery, the energy consumed
from the power grid may either be less than the user's demand---the rest being
supplied by the battery; or may be more than the user's demand---the excess
being stored in the battery. A rechargeable battery provides privacy because
the power consumed from the grid (rather than the user's demand) gets reported
to the electricity utility (and potentially observed by an eavesdropper). In
this paper, we focus on the mutual information between the user's demand and
consumption (i.e., the information leakage) as the privacy metric. Mutual Information is a widely
used metric in the literature on information theoretic security, as it is often analytically tractable
and provides a fundamental bound on the probability of detecting the true load sequence from the
observation~\cite{Csiszar}. Our
objective is to identify a battery management policy (which determine how much
energy to store or discharge from the battery) to minimize the information
leakage rate.

We briefly review the relevant literature. The use of a rechargeable battery for providing
user privacy has been studied in several recent works, e.g.,~\cite{smart-meter-1,varodayan,Tan12,giaconi,pappas}.  Most
of the existing literature has focused on evaluating the information leakage
rate of specific battery management policies. These include the
``best-effort'' policy~\cite{smart-meter-1}, which tries to maintain a
constant consumption level, whenever possible; and battery conditioned
stochastic charging policies~\cite{varodayan}, in which the conditional
distribution on the current consumption depends only on the current battery
state (or on the current battery state and the current demand).
In~\cite{smart-meter-1}, the information leakage rate was estimated using
Monte-Carlo simulations; in~\cite{varodayan}, it was calculated using the BCJR
algorithm~\cite{bcjr}. The methodology of~\cite{varodayan} was extended
by~\cite{Tan12} to include models with energy harvesting and allowing for a certain
amount of energy waste. Bounds on the
performance of the best-effort policy and hide-and-store policy for models
with energy harvesting and infinite battery capacity were obtained
in~\cite{giaconi}. The performance of the best effort algorithm for an
individual privacy metric was considered in~\cite{pappas}.
None of these papers address the question of choosing the
optimal battery management policy.

Rate-distortion type approaches have also been used to study privacy-utility
trade-off~\cite{sankar,Rajagopalan,gunduz}. These models allow the user to report
a distorted version of the load to the utility provider, subject to a certain average distortion constraint.
Our setup differs from these works as we impose a constraint on the
\emph{instantaneous} energy stored in the battery due to its limited capacity. 
Both our techniques and the qualitative nature of the results are different from these papers.

Our contributions are two-fold. First, when the demand is Markov, we show that
the minimum information leakage rate and optimal battery management policies
can be obtained by solving an appropriate dynamic program. These results are
similar in spirit to the dynamic programs obtained to compute capacity of
channels with memory~\cite{Tatikonda09,goldsmith,permuter}; however, the
specific details are different due to the constraint on the battery state.
Second, when the demand is i.i.d., we obtain a single letter characterization
of the minimum information leakage rate; this expression also gives the
optimal battery management policy. We prove the single letter expression in
two steps. On the achievability side we propose a class of  policies with  a specific
structure that enables a considerable simplification of the leakage-rate expression.
We find a policy that minimizes the leakage-rate within this restricted class.
On the converse side, we obtain lower bounds on the minimal leakage rate and show that these
lower bound match the performance of the best structured policy. We provide
two proofs. One is based on the dynamic program and the other
is based purely on information theoretic arguments.

After the present work was completed, we became aware of~\cite{parv}, where a
similar dynamic programming framework is presented for the infinite horizon
case. However, no explicit solutions of the dynamic program are derived
in~\cite{parv}. To the best of our knowledge, the present paper is
the first work that provides an explicit characterization of the optimal
leakage rate and the associated policy for i.i.d.\ demand.  

\subsection{Notation}

Random variables are denoted by uppercase letters ($X$, $Y$, etc.), their
realization by corresponding lowercase letters ($x$, $y$, etc.), and their
state space by corresponding script letters ($\ALPHABET X$, $\ALPHABET Y$,
etc.). $\ALPHABET P_X$ denotes the space of probability distributions on
$\ALPHABET X$; $\ALPHABET P_{X|Y}$ denotes the space of \edit{condition
distributions} from $\ALPHABET Y$ to $\ALPHABET X$. $X_a^b$ is a short hand for $(X_a,
X_{a+1}, \dots, X_b)$ and $X^b = X_{1}^b$. For a set $\ALPHABET A$,
$\IND_{\ALPHABET A}(x)$ denotes the indicator function of the set that equals
$1$ if $x \in \ALPHABET A$ and zero otherwise. If $\ALPHABET A$ is a singleton
set $\{a\}$, we use $\IND_{a}(x)$ instead of $\IND_{\{a\}}(x)$.

Given random variables $(X,Y)$ with joint distribution $P_{X,Y}(x,y) = P_X(x)
q(y|x)$, $H(X)$ and $H(P_X)$ denote the entropy of $X$, $H(Y|X)$ and
$H(q|P_X)$ denote conditional entropy of $Y$ given $X$ and $I(X;Y)$ and 
$I(q; P_X)$ denote the mutual information between $X$ and $Y$. 

\section{Problem Formulation and Main Results}

\subsection{Model and problem formulation}
\label{subsec:model}
Consider a smart metering system as shown in Fig.~\ref{fig:system}. At each
time, the energy consumed from the power grid must equal the user's demand
plus the additional energy that is either stored in or drawn from the battery. Let $\{X_t\}_{t \ge 1}$, $X_t \in
\ALPHABET X$, denote the user's demand; $\{Y_t\}_{t \ge 1}$, $Y_t \in
\ALPHABET Y$, denote the energy drawn from the grid; and $\{S_t\}_{t \ge 1}$,
$S_t \in \ALPHABET S$, denote the energy stored in the battery. All alphabets
are finite. For convenience, we assume $\ALPHABET X \coloneqq \{0, 1, \dots,
m_x\}$, $\ALPHABET Y \coloneqq \{0, 1, \dots, m_y\}$, and $\ALPHABET S =
\{0,1, \dots, m_s\}$. \edit{Here $m_s$ corresponds to the size of the
battery.} We note that such a restriction is for simplicity of presentation; the results generalize even when $\ALPHABET X$ and
$\ALPHABET Y$ are not necessarily contiguous intervals or integer valued.  To guarantee that
user's demand is always satisfied, we assume $m_x \le m_y$ or that
${\ALPHABET X} \subseteq {\ALPHABET Y}$ holds more generally.

The demand $\{X_t\}_{t \ge 1}$ is a first-order time-homogeneous Markov
chain\footnote{In practice, the energy demand is periodic rather than time
  homogeneous. We are assuming that the total demand may be split into a
  periodic predictable component and a time-homogeneous stochastic component.
  In this paper, we ignore the predictable component because it does not
affect privacy.} with transition probability $Q$. We assume that 
$Q$ is irreducible and aperiodic. The initial state $X_1$ is
distributed according to probability mass function $P_{X_1}$. The initial
charge $S_1$ of the battery is independent of $\{X_t\}_{t \ge 1}$ and
distributed according to probability mass function $P_{S_1}$. 

The battery is assumed to be ideal and has no conversion losses or other
inefficiencies. Therefore, the following conservation equation must be
satisfied at all times:
\begin{equation}
  \label{eq:conservation}
  S_{t+1} = S_t + Y_t - X_t.
\end{equation}

\begin{figure}[t]
  \centering
  \includegraphics[width=\linewidth]{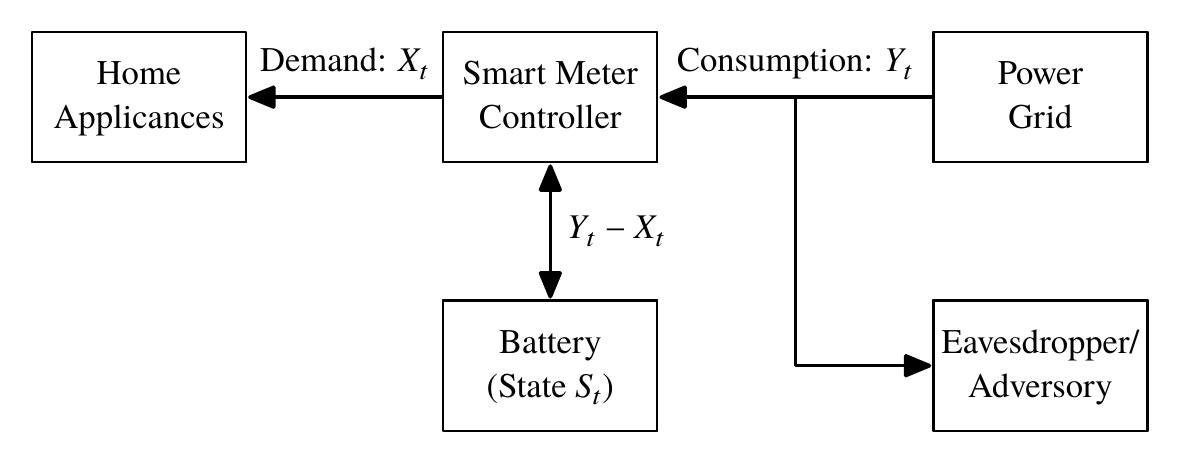}
  \caption{A smart metering system.}
  \label{fig:system}
\end{figure}

Given the history of demand, battery charge, and consumption, a randomized
\emph{battery charging policy} $\mathbf q = (q_1, q_2, \dots)$ determines the
energy consumed from the grid. In particular, given the histories $(x^{t}$,
$s^{t}, y^{t-1})$ of demand, battery charge, and consumption at time~$t$, the
probability that current consumption $Y_t$ equals $y$ is $q_t(y \mid x^{t},
s^{t}, y^{t-1})$. For a randomized charging policy to be feasible, it must
satisfy the conservation equation~\eqref{eq:conservation}. So, given the
current power demand and battery charge $(x_t,s_t)$, the feasible values of
grid consumption are defined by
\begin{align} 
  \ALPHABET Y_\circ(s_t-x_t) = \{y \in \ALPHABET Y : s_t-x_t + y \in \ALPHABET S \}. \label{eq:yo-def}
\end{align}
Thus, we require that
\begin{equation*} 
  q_t(\ALPHABET Y_\circ(s_{t}-x_t) \mid x^{t}, s^{t}, y^{t-1}) 
  := \smashoperator{\sum_{y \in \ALPHABET Y_\circ(s_t - x_{t})}}
  q_t(y \mid x^{t}, s^{t}, y^{t-1}) = 1.
\end{equation*} 
The set of all such feasible policies is denoted by $\ALPHABET
Q_A$\footnote{With a slight abuse of notation, we use $\mathcal Q_A$ to denote
the battery policy for both the infinite and finite-horizon problems}. Note that while
the charging policy $q_t(\cdot)$ can be a function of the entire history, the support 
of $q_t(\cdot)$ only depends on the present value of $x_t$ and $s_t$ through the difference $s_t-x_t$.
This is emphasized in the definition in~\eqref{eq:yo-def}.

The quality of a charging policy depends on the amount of information leaked
under that policy. There are different notions of privacy; in this paper, we
use mutual information as a measure of privacy. Intuitively speaking, given
random variables $(Y,Z)$, the mutual information $I(Y;Z)$ measures the
decrease in the uncertainty about $Y$ given by $Z$ (or vice-versa). Therefore,
given a policy $\mathbf{q}$, the information about $(X^T,S_1)$ leaked to the
utility provider or eavesdropper is captured by $I^\mathbf{q}(X^T, S_1; Y^T)$, where the
mutual information is evaluated according to the joint probability
distribution on $(X^{T}, S^{T}, Y^{T})$ induced by the distribution $\mathbf{q}$
as follows:
\begin{multline*} 
  \PR^{\mathbf{q}}( S^{T} = s^{T}, X^{T} = x^{T}, Y^{T} = y^{T}) \\
  = P_{S_1}(s_1) P_{X_1}(x_1) q_1(y_1 \mid x_1, s_1)  
  \prod_{t=2}^{T} \bigg[ \IND_{s_{t}}\{s_{t-1} - x_{t-1} + y_{t-1} \} \\
    \times Q(x_{t}| x_{t-1}) q_t(y_{t} \mid x^{t}, s^{t}, y^{t-1}) 
  \bigg].
\end{multline*}

We use information leakage \emph{rate} as a measure of the quality of a
charging policy. For a finite planning horizon, the information leakage rate of
a policy $\mathbf{q} = (q_1, \dots, q_T) \in \ALPHABET Q_A$ is given by
\begin{equation}
  \label{eq:leakage-finite}
  L_T(\mathbf{q}) := \frac 1T I^\mathbf{q}(X^T, S_1; Y^T),
\end{equation}
while for an infinite horizon, the worst-case information leakage rate of a
policy $\mathbf{q} = (q_1, q_2, \dots) \in \ALPHABET Q_A$ is given by
\begin{equation}
  \label{eq:leakage-infinite}
  L_\infty(\mathbf{q}) := \limsup_{T \to \infty} L_T(\mathbf{q}).
\end{equation}

We are interested in the following optimization problems:
\begin{problem} \label{prob:original}
  Given the alphabet $\ALPHABET X$ of the demand, the initial distribution
  $P_{X_1}$ and the transistion matrix $Q$ of the demand process, the alphabet
  $\ALPHABET S$ of the battery, the initial distribution $P_{S_1}$ of the
  battery state, and the alphabet $\ALPHABET Y$ of the consumption:
  \begin{enumerate}
    \item For a finite planning horizon $T$, find a battery charging policy
      $\mathbf{q} = (q_1, \dots, q_T) \in \ALPHABET Q_A$ that minimizes the leakage rate
      $L_T(\mathbf{q})$ given by~\eqref{eq:leakage-finite}.
    \item For an infinite planning horizon, find a battery charging policy $\mathbf{q}
      = (q_1, q_2, \dots) \in \ALPHABET Q_A$ that minimizes the leakage rate
      $L_\infty(\mathbf{q})$ given by~\eqref{eq:leakage-infinite}.
  \end{enumerate}
\end{problem}

The above optimization problem is difficult because we have to optimize a
multi-letter mutual information expression over the class of history
dependent probability distributions. In the spirit of results for feedback
capacity of channels with memory~\cite{Tatikonda09,goldsmith,permuter}, we
show that the above optimization problem can be reformulated as a Markov
decision process where the state and action spaces are conditional
probability distributions. Thus, the optimal policy and the optimal leakage
rate can be computed by solving an appropriate dynamic program. We then
provide an explicit solution of the dynamic program for the case of i.i.d.\
demand. 

\begin{remark}
\label{rem:det}
\edit{We note that the class of policies we consider in $\ALPHABET Q_A$ are randomized policies i.e., the output at any given time is governed by 
the conditional distribution $q_t(y_{t} \mid x^{t}, s^{t}, y^{t-1})$. The class of {\em deterministic policies} where $y_t= h_t(x^{t}, s^{t}, y^{t-1})$ is  a deterministic
function of the past inputs, state and outputs is a special case of randomized
policies where $q_t(\cdot)$ is an atomic distribution. As will be apparent
from out results, deterministic policies do not suffice to minimize the
leakage rate and hence we focus on the class of randomized policies. }
\end{remark}
\subsection{Example: Binary Model}
\label{subsec:bin-model}
We illustrate the special case when $\ALPHABET X = \ALPHABET Y = \ALPHABET S = \{0,1\}$ in Fig.~\ref{fig:Binary}.
The input, output,  as well as the state, are all binary valued. When the battery is in state $s_t=0$, there are three possible transitions.
If the input $x_t=1$ then we must select $y_t=1$ and the state changes to $s_{t+1}=0$. If instead $x_t=0$, then there are two possibilities. We can select
$y_t=0$ and have $s_{t+1}=0$ or we can select $y_t=1$ and have $s_{t+1}=1$. In a similar fashion there are three possible transitions from the state $s_t=1$
as shown in Fig.~\ref{fig:Binary}. We will assume that the demand (input)
sequence  is sampled i.i.d.\ from an equiprobable distribution, \edit{i.e.,
$P_X(0) = P_X(1) = \frac 12$.} 
\begin{figure}
  \centering
  \includegraphics{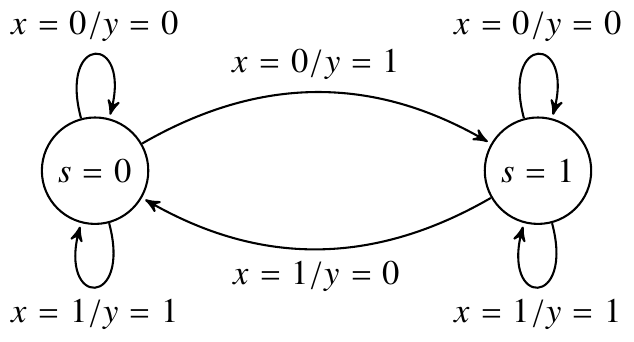}
  \caption{Binary System model. The battery can be either in $s=0$ or $s=1$.
  The set of feasible transitions from each state are shown in the figure.}
\label{fig:Binary}
\end{figure}

Consider a simple policy that sets $y_t=x_t$ and ignores the battery state. It is clear that such a policy will lead to maximum leakage $L_T=1$.
Another feasible policy  is to set $y_t = \edit{1 - {s}_t}$. Thus whenever $s_t=0,$ we will set $y_t=1$ regardless of the value of $x_t,$ and likewise $s_t=1$ will result in $y_t=0$.
It turns out that the leakage rate for this policy also approaches $1$. To see this note that the eavesdropper having access to $y^T$ also in turn knows $s^T$. Using the battery update equation~\eqref{eq:conservation} the sequence $x^{T-1}$ is thus revealed to the eavesdropper, resulting in a leakage rate of at least $1-1/T$. 

In reference~\cite{varodayan}, a probabilistic battery charging policy is introduced that only depends on the current state and input i.e., $q_t(y_t |x^t, s^t) \stackrel{\Delta}{=} q_t(y_t | x_t, s_t)$. Furthermore the policy makes equiprobable decisions between the feasible transitions i.e., 
\begin{equation}\label{eq:varodayan}
  q_t(y_t =0 | x_t, s_t) = q(y_t =1 | x_t, s_t)  = 1/2, \quad \text{\edit{if} } x_t = s_t
\end{equation}
and $q_t(y_t | x_t ,s_t) = \IND_{x_t}(y_t)$ otherwise. The leakage rage for this policy was numerically evaluated in~\cite{varodayan} using the BCJR algorithm and it was shown numerically that $L_\infty = 0.5$. Such numerical techniques seem necessary in general even for the class of memoryless policies and i.i.d.\ inputs, as the presence of the battery adds memory into the system.

As a consequence of our main result it follows that the above policy  admits a
single-letter expression for the leakage rate\footnote{The random variable
$S^*$ is an equiprobable binary valued random variable, independent of $X$.
See Sec.~\ref{sec:revisit}.} $L_\infty = I(S^*-X; X)$, thus  circumventing the need for numerical techniques. Furthermore it also follows that this leakage rate is indeed the minimum possible one among the class of all feasible policies. Thus it is not necessary for the battery system to use more complex policies that take into account the entire history. We note that a similar result was shown in~\cite{spawc} for the case of finite horizon policies. However the proof in~\cite{spawc} is specific to the binary model. In the present paper we provide a complete single-letter solution to the case of general i.i.d.\ demand, and a dynamic programming method for the case of first-order Markovian demands, as discussed next.

\subsection{Main results for Markovian demand}
We identify two \edit{structural} simplifications for the battery charging policies. 
First, we show (see Proposition~\ref{prop:A-B} in Section~\ref{sec:A-B})
that there is no loss of optimality in restricting attention to charging
policies of the form
\begin{equation} \label{eq:qb}
  q_t(y_t | x_t, s_t, y^{t-1}).
\end{equation}
The intuition is that under such a policy, observing $y^t$ gives partial
information only about $(x_t, s_t)$ rather than about the whole history
$(x^t, s^t)$. 

Next, we identify a sufficient statistic for $y^{t-1}$ \edit{when} the charging
policies \edit{are} of the form~\eqref{eq:qb}. For that matter, 
given a policy $\mathbf q$ and any realization $y^{t-1}$ of $Y^{t-1}$, define
the belief state $\pi_t \in \ALPHABET P_{X,S}$ as follows: \edit{for all $x
\in \ALPHABET X$, $s \in \ALPHABET S$},
\begin{equation}\label{eq:pi-def}
  \pi_t(x,s) = \PR^{\mathbf q}(X_t = x, S_t = s | Y^{t-1} = y^{t-1}).
\end{equation}
Then, we show (see Theorems~\ref{thm:DP-fin} and~\ref{thm:DP-inf} below) that there is no loss of
optimality in restricting attention to charging policies of the form
\begin{equation} \label{eq:pi-policy}
  q_t(y_t | x_t, s_t, \pi_t).
\end{equation}
Such a charging policy is Markovian in the belief state~$\pi_t$ and the
optimal policies of such form can be searched using a dynamic program. 

To describe such a dynamic program, we assume that there is a decision maker
that observes $y^{t-1}$ (or equivalently $\pi_t$) and chooses ``actions''
$a_t = q_t(\cdot | \cdot, \cdot, \pi_t)$ using some decision rule $a_t =
f_t(\pi_t)$. We then identify a dynamic program to choose the optimal
decision rules.  

Note that the actions $a_t$ take vales in a subset $\ALPHABET A$ of
$\ALPHABET P_{Y|X,S}$ given by
\begin{multline} \label{eq:A}
  \ALPHABET A = \big\{ a \in \ALPHABET P_{Y|X,S}: a(\ALPHABET Y_\circ(s -
  x)\mid x,s) = 1, \\
  \forall (x,s) \in \ALPHABET X \times \ALPHABET S\big\}.
\end{multline}
\edit{The set $\ALPHABET A$ is convex and compact.\footnote{\edit{If
    $a_1, a_2 \in \ALPHABET A$, then any linear combination $a'$ of them,
    where $a' = \lambda a_1 + (1 -\lambda) a_2$ with $\lambda \in (0, 1)$,
    also satisfies $a'(\ALPHABET Y_\circ(s-x \mid x,s) = 1$ for all $(x,s) \in
  \ALPHABET X \times \ALPHABET S$ and, therefore, belongs to $\ALPHABET A$.
  Hence $\ALPHABET A$ is convex. Moreover, it is easy to see that $\ALPHABET
  A$ is closed and $\ALPHABET P_{Y|X,s}$ is compact. Therefore, $\ALPHABET
  A$ is also compact.}}}

To \edit{succinctly} write the dynamic program, 
for any $a \in \ALPHABET A$, we define the Bellman operator
$\mathscr B_a \colon [ \ALPHABET P_{X,S} \to \reals] \to [ \ALPHABET P_{X,S}
\to \reals]$ as follows: \edit{for any $V \colon \ALPHABET P_{X,S} \to
  \reals$  and
any}  $\pi \in \ALPHABET P_{X,S}$,
\begin{multline} \label{eq:bellman}
    [ \mathscr B_a V ](\pi) = I(a ; \pi)  \\ 
     + \sum_{\substack{ x \in \ALPHABET X, s \in \ALPHABET S, \\ y \in \ALPHABET Y }}
    \pi(x,s) a(y \mid x,s) V(\varphi(\pi, y, a))
\end{multline}
where the function $\varphi$ is a non-linear filtering equation defined in
Sec.~\ref{sec:DPdetails}.

Our main results are the following:
\edit{\begin{theorem}\label{thm:DP-fin}
  We have the following for Problem~\ref{prob:original} with a finite
  planning horizon~$T$:
  \begin{enumerate}
    \item \emph{Value functions}: Iteratively define \emph{value functions}
      $V_t \colon \ALPHABET P_{X,S} \to \reals$ as follows. For any $\pi \in
      \ALPHABET P_{X,S}$, $V_{T+1}(\pi) = 0$, and for $t \in \{T, \dots,
      1\}$,
      \begin{equation} \label{eq:fin-DP}
        V_t(\pi) = \min_{a \in \ALPHABET A} 
        [ \mathscr B_a V_{t+1}](\pi),
        \quad \forall \pi \in \ALPHABET P_{X,S}.
      \end{equation}
      Then, $V_t(\pi)$ are is continuous and concave in $\pi$. 
    \item \emph{Optimal policy}:
      Let $f^*_t(\pi)$ denote the arg min of the right hand side
      of~\eqref{eq:fin-DP}. Then, optimal policy $\mathbf q^* = (q^*_1, \dots,
      q^*_T)$ is given by
      \begin{equation*}
        q^*_t(y_t \mid x_t, s_t, \pi_t) = a_t(y_t \mid x_t, s_t),
        \text{  where } a_t = f^*_t(\pi_t).
      \end{equation*}
      Thus, there is no loss of optimality in restricting attention to
      charging policies of the form~\eqref{eq:pi-policy}.
    \item \emph{Optimal leakage rate}:
      The optimal (finite horizon) leakage rate is given by
      $V_1(\pi_1)/T$, where $\pi_1(x,s) = P_{X_1}(x) P_{S_1}(s)$. 
      \qed
  \end{enumerate}
\end{theorem}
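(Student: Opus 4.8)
The plan is to convert the multi-letter optimization in Problem~\ref{prob:original} into an additive per-stage cost, thereby exposing an MDP structure, and then to invoke the standard finite-horizon dynamic programming principle. The first and most important step is a cost decomposition. By Proposition~\ref{prop:A-B} it suffices to consider policies of the form~\eqref{eq:qb}, for which $Y_t$ is conditionally independent of $(X^T,S_1)$ given $(X_t,S_t,Y^{t-1})$. Applying the chain rule for mutual information and then this Markov property to each term, I would show
\[
  I^{\mathbf q}(X^T, S_1; Y^T) = \sum_{t=1}^T I(X_t, S_t; Y_t \mid Y^{t-1}) = \sum_{t=1}^T \EXP^{\mathbf q}\big[\, I(a_t; \pi_t) \,\big],
\]
where the expectation is over the law of $Y^{t-1}$ (equivalently of $\pi_t$), and $I(a_t;\pi_t)$ is exactly the per-stage term of the Bellman operator~\eqref{eq:bellman}. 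This identifies the running cost with $I(a;\pi)$ and reduces $L_T$ to the average of an additive-cost control problem.

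Next I would establish that $\pi_t$ is an information state. A Bayesian update gives the filtering recursion: conditioning $\pi_t$ on $Y_t=y$ through $a_t$ and then propagating through $X_{t+1}\sim Q(\cdot\mid X_t)$ and the conservation law~\eqref{eq:conservation} yields $\pi_{t+1}=\varphi(\pi_t,Y_t,a_t)$, with $Y_t$ drawn according to the predictive law $\PR_a(y\mid\pi)=\sum_{x,s}\pi(x,s)a(y\mid x,s)$; this is the map $\varphi$ of Sec.~\ref{sec:DPdetails}. Thus $\{\pi_t\}$ is a controlled Markov chain whose transition depends only on $(\pi_t,a_t)$, and the running cost $I(a_t;\pi_t)$ likewise depends only on $(\pi_t,a_t)$. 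With state $\pi_t$, action $a_t\in\ALPHABET A$, additive cost, and terminal cost $V_{T+1}\equiv 0$, the classical finite-horizon MDP verification theorem applies: the value functions~\eqref{eq:fin-DP} are the optimal cost-to-go, an optimal decision rule $a_t=f^*_t(\pi_t)$ exists (the minimum is attained since $\ALPHABET A$ is compact), and the minimal value of $I^{\mathbf q}(X^T,S_1;Y^T)$ equals $V_1(\pi_1)$ with $\pi_1(x,s)=P_{X_1}(x)P_{S_1}(s)$. Dividing by $T$ gives parts~(2) and~(3), and in particular shows that no optimality is lost by restricting to policies of the form~\eqref{eq:pi-policy}.

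The analytically delicate part is~(1), the concavity and continuity of $V_t$, which I would prove by backward induction, the base case $V_{T+1}\equiv 0$ being trivial. Assuming $V_{t+1}$ concave, the key sub-claim is that $\pi\mapsto[\mathscr B_a V_{t+1}](\pi)$ is concave for each \emph{fixed} $a$. For the first term this is the standard concavity of $I(a;\pi)$ in the input distribution $\pi$ for a fixed channel $a$. For the second term, observe that both $\PR_a(y\mid\pi)$ and the unnormalized posterior $\PR_a(y\mid\pi)\,\varphi(\pi,y,a)$ are \emph{linear} in $\pi$; since the perspective of a concave function is concave, each summand $\PR_a(y\mid\pi)\,V_{t+1}(\varphi(\pi,y,a))$ is a concave function of $\pi$, and summation over $y$ preserves concavity. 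The apparent obstacle is that $V_t$ is a pointwise \emph{minimum} over $a$ of these concave functions, and a minimum of concave functions need not be concave. This is circumvented by a two-step bound: for $\pi^\lambda=\lambda\pi^0+(1-\lambda)\pi^1$ and any fixed $a$, fixed-$a$ concavity gives $[\mathscr B_aV_{t+1}](\pi^\lambda)\ge\lambda[\mathscr B_aV_{t+1}](\pi^0)+(1-\lambda)[\mathscr B_aV_{t+1}](\pi^1)$; combining this with the elementary inequality $\min_a\{\lambda A(a)+(1-\lambda)B(a)\}\ge\lambda\min_aA(a)+(1-\lambda)\min_aB(a)$ yields $V_t(\pi^\lambda)\ge\lambda V_t(\pi^0)+(1-\lambda)V_t(\pi^1)$, i.e.\ concavity of $V_t$.

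Continuity then follows from Berge's maximum theorem: $[\mathscr B_aV_{t+1}](\pi)$ is jointly continuous in $(\pi,a)$ (the only subtlety, the discontinuity of $\varphi$ where $\PR_a(y\mid\pi)=0$, is harmless because such terms carry the vanishing weight $\PR_a(y\mid\pi)$ multiplying the bounded $V_{t+1}$, so the product extends continuously), and minimizing a continuous function over the compact set $\ALPHABET A$ produces a continuous $V_t$ together with a measurable selector $f^*_t$. I expect the perspective-function concavity of the look-ahead term, and the bookkeeping needed to make the filtering map $\varphi$ and its continuity precise at the boundary of the simplex $\ALPHABET P_{X,S}$, to be the steps requiring the most care; the cost decomposition and the invocation of the MDP verification theorem are comparatively routine once the controlled-Markov structure of $\pi_t$ is established.
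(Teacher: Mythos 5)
Your proposal is correct and follows essentially the same route as the paper's proof: restriction to $\ALPHABET Q_B$ with the additive cost decomposition (Proposition~\ref{prop:A-B} and Lemma~\ref{lem:cost}), the filtering recursion making $\pi_t$ a controlled Markov state (Lemma~\ref{lem:pi}), the standard finite-horizon dynamic programming argument with compactness of $\ALPHABET A$ yielding attainment of the minimum and continuity of $V_t$, and concavity by backward induction via the perspective-function argument (Appendix~\ref{app:concave}). One minor correction: the pointwise minimum of concave functions \emph{is} always concave---the two-step bound you write out is precisely the general proof of this fact---so the ``apparent obstacle'' you describe never arises.
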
}
See Section~\ref{sec:DP} for proof.

\edit{
\begin{theorem} \label{thm:DP-inf}
  We have the following for Problem~\ref{prob:original} with an infinite
  planning horizon:
  \begin{enumerate}
    \item \emph{Value function}:
      Consider the following fixed point equation
      \begin{equation} \label{eq:inf-DP}
        J + v(\pi) = \min_{a \in \ALPHABET A }
        [\mathscr B_a v](\pi),
        \quad \forall \pi \in \ALPHABET P_{X,S},
      \end{equation}
      where $J \in \reals$ is a constant and $v \colon \ALPHABET P_{X,S} \to
      \reals$.
    \item \emph{Optimal policy}:
      Suppose there exists $(J,v)$ that satisfy~\eqref{eq:inf-DP}. 
           Let $f^*(\pi)$ denote the arg min of the right hand side
      of~\eqref{eq:inf-DP}. Then, the time-homogeneous optimal policy $\mathbf
      q^* = (q^*, q^*, \dots)$ given by 
      \[
        q^*(y_t \mid x_t, s_t, \pi_t) = a(y_t \mid x_t, s_t),
        \text{  where } a = f^*(\pi)
      \]
      is optimal. Thus, there is no loss of optimality in restricting
      attention to charging time-homogeneous policies of the
      form~\eqref{eq:pi-policy}.
    \item \emph{Optimal leakage rate}: The optimal (infinite horizon) leakage
      rate is given by $J$. 
      \qed
  \end{enumerate}
\end{theorem}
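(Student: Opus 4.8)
The plan is to read Theorem~\ref{thm:DP-inf} as a \emph{verification theorem} for the average-cost Markov decision process already set up in the finite-horizon analysis. By Theorem~\ref{thm:DP-fin} (and Proposition~\ref{prop:A-B}) there is no loss of optimality in restricting to belief-Markov policies of the form~\eqref{eq:pi-policy}, and for any such policy the finite-horizon leakage decomposes into the per-stage costs appearing in the Bellman operator, namely
\[
  L_T(\mathbf q) = \frac1T \sum_{t=1}^{T} \EXP\big[ I(a_t;\pi_t) \big],
\]
where $\pi_t$ evolves by the deterministic filter $\pi_{t+1} = \varphi(\pi_t, Y_t, a_t)$ and $Y_t$ is drawn from $\sum_{x,s}\pi_t(x,s)\,a_t(y\mid x,s)$. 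Thus $L_\infty(\mathbf q) = \limsup_{T\to\infty} L_T(\mathbf q)$ is exactly the long-run average cost of this MDP on the compact state space $\ALPHABET P_{X,S}$, and~\eqref{eq:inf-DP} is its average-cost optimality equation. Given a solution $(J,v)$, the goal is to show that the stationary policy $\mathbf q^*$ induced by $f^*$ attains average cost $J$ and that no policy can do better.

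For achievability I would substitute $a_t = f^*(\pi_t)$ into~\eqref{eq:inf-DP}, which turns the minimum into an equality: for every reachable belief $\pi_t$,
\[
  J + v(\pi_t) = I(a_t;\pi_t) + \EXP\big[ v(\pi_{t+1}) \,\big|\, \pi_t \big].
\]
Taking expectations over $\pi_t$, summing over $t = 1, \dots, T$, and telescoping the $v$-terms yields
\[
  \sum_{t=1}^{T} \EXP\big[ I(a_t;\pi_t) \big]
  = JT + \EXP\big[ v(\pi_1) \big] - \EXP\big[ v(\pi_{T+1}) \big].
\]
Dividing by $T$ and letting $T \to \infty$, the two boundary terms vanish because $v$ is continuous on the compact simplex $\ALPHABET P_{X,S}$ and hence bounded; therefore $L_\infty(\mathbf q^*) = J$, giving the achievability half of part~(3).

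For the converse I would use the same equation but keep the inequality $J + v(\pi) \le [\mathscr B_a v](\pi)$ valid for \emph{every} $a \in \ALPHABET A$. For an arbitrary feasible policy (reduced without loss to belief-Markov form via Theorem~\ref{thm:DP-fin}) this gives the one-step bound $J + v(\pi_t) \le I(a_t;\pi_t) + \EXP[v(\pi_{t+1})\mid \pi_t, a_t]$, and the identical telescoping argument produces
\[
  \frac1T \sum_{t=1}^{T} \EXP\big[ I(a_t;\pi_t) \big]
  \ge J + \frac{\EXP[v(\pi_1)] - \EXP[v(\pi_{T+1})]}{T}.
\]
Taking $\limsup_{T\to\infty}$ shows $L_\infty(\mathbf q) \ge J$ for every policy. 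Combining the two bounds establishes that $\mathbf q^*$ is optimal (part~(2)) and that the optimal infinite-horizon leakage rate equals $J$ (part~(3)); part~(1) merely records the optimality equation.

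The main obstacle is making the boundary terms genuinely negligible, which hinges on the boundedness of $v$. If $(J,v)$ is only assumed to solve~\eqref{eq:inf-DP} abstractly, I would have to argue separately that a bounded (ideally continuous) $v$ can be taken---most naturally by realizing $v$ as a vanishing-discount limit of the finite-horizon relative value functions $V_t - V_t(\pi^\circ)$ from Theorem~\ref{thm:DP-fin} and invoking their concavity and continuity, together with compactness of $\ALPHABET A$ and $\ALPHABET P_{X,S}$ to control the differences uniformly. The second delicate point is that the converse must hold against all history-dependent policies in $\ALPHABET Q_A$, not merely belief-Markov ones; this is precisely where Proposition~\ref{prop:A-B} and Theorem~\ref{thm:DP-fin} are indispensable, since they certify that passing to the form~\eqref{eq:pi-policy} costs nothing. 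I expect the interplay between the $\limsup$ and the boundedness of $v$ to be where the real care is required.
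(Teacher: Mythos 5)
Your proposal is correct and is essentially the paper's own argument: the paper proves this theorem by a one-line appeal to the standard average-cost verification theorem (\cite[Theorem~5.2.4 and Eq.~(5.2.18)]{hl:MDP}), and your telescoping/ACOE argument is precisely the proof of that cited result, including the boundedness-of-$v$ caveat that the textbook hypothesis supplies and the reduction to the sequential belief-state MDP via Proposition~\ref{prop:A-B} and Theorem~\ref{thm:DP-fin}.
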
}

\begin{proof}
\edit{Given the result of Theorem~\ref{thm:DP-fin}, the result of
Theorem~\ref{thm:DP-inf} follows from standard dynamic programming
arguments. See, for example, \cite[Theorem~5.2.4 and Eq.~(5.2.18)]{hl:MDP}.}
\end{proof}

\edit{There are various conditions that guarantee the existence of a $(J,v)$
  that satisfies~\eqref{eq:inf-DP}. Most of these conditions require the
  ergodicity of the process $\{\Pi_t\}_{t \ge 1}$ for every standary Markov
  procily $f \colon \ALPHABET P_{X,S} \to \ALPHABET A$. We refer the reader
  to~\cite[Chapter~3]{hl:AdaptiveMDP} and~\cite[Chapter 10]{hl:Topics} for
  more details.}

The dynamic program above resembles the dynamic program for partially
observable Markov decision processes (POMDP)
\edit{(see~\cite[Chapter~5]{bertsekas})} with hidden state~$(X_t, S_t)$,
observation~$Y_t$, and action~$A_t$. However, in contrast to POMDPs, the
expected per-step cost $I(a;\pi)$ is not linear in~$\pi$. Nonetheless, one
could use computational techniques from POMDPs to approximately solve the
dynamic programs of Theorems~\ref{thm:DP-fin} and~\ref{thm:DP-inf}. See
Section~\ref{sec:numerical} for a
brief discussion.

\begin{remark}
  \edit{Although the above results assume that the 
    demand is a first-order Markov chain, they extend
    naturally to the case when the demand is higher-order Markov. In
    particular, suppose the demand $\{X_t\}_{t \ge 1}$ is a $k$-th order
    Markov chain. Then, we can define another process $\{\tilde X_t\}_{t \ge
    1}$ where $\tilde X_t = (X_{t-k+1}, \dots, X_t)$, and use $\tilde X_t$ in
    Theorems~\ref{thm:DP-fin} and~\ref{thm:DP-inf}.}
\end{remark}

\subsection{Main result for i.i.d.\ demand}

Assume the following:
\begin{assumption}
  The demand $\{X_t\}_{t \ge 1}$ is i.i.d.\ with probability
    distribution~$P_X$. 
\end{assumption}
We provide an explicit characterization of optimal policy and optimal leakage
rate for this case. 

Define an auxiliary state variable $W_t = S_t - X_t$ that takes values in
$\ALPHABET W = \{ s - x : s \in \ALPHABET S, x \in \ALPHABET X \}$. 
For any $w \in \mathcal W$, define:
\begin{equation}\label{eq:D}
    \ALPHABET D(w) = \{ (x, s) \in \ALPHABET X \times \ALPHABET S : s - x = w \}.
\end{equation}
Then, we have the following.
\begin{theorem} \label{thm:iid}
  Define
  \begin{equation} \label{eq:opt}
    J^* = \min_{ \theta \in \ALPHABET P_S} I(S - X; X)
    = \edit{\min_{\theta \in \ALPHABET P_S} \big\{H(S-X) - H(S) \big\}}
  \end{equation}
  where $X$ and $S$ are independent with $X \sim P_X$ and $S \sim \theta$. Let
  $\theta^*$ denote the arg min in~\eqref{eq:opt}. Define $\xi^*(w) =
  \sum_{(x,s) \in \ALPHABET D(w)} P_X(x) \theta^*(s)$. Then, under (A)
  \begin{enumerate}
    \item $J^*$ is the optimal (infinite horizon) leakage rate.
    \item Define $b^* \in \ALPHABET P_{Y|W}$ as follows:
      \begin{equation}  \label{eq:bstar}
        b^*(y|w) = 
        \begin{cases} 
           P_X(y) \frac{\theta^*(y+w) } { \xi^*(w)}	& \text{ if } y \in \mathcal X \cap \mathcal Y_\circ(w)\\
          0       	& \text{ otherwise }.
        \end{cases}	
      \end{equation} 
      \edit{We call $b^*$ as a structured policy with respect to $(\theta^*,
      \xi^*)$.} 
      Then,  the memoryless charging policy $\mathbf q^* = (q^*_1, q^*_2,
      \dots)$ given by
      \begin{equation}  \label{eq:qstar}
        q^*_t(y \mid x, s, \pi_t) = b^*(y \mid s- x)
      \end{equation} 
      is optimal and achieves the optimal (infinite horizon) leakage rate.
      \qed
  \end{enumerate}
\end{theorem}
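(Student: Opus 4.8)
The plan is to establish both assertions by a matching converse and achievability, carried out entirely through information-theoretic manipulations (mirroring, but not relying on, the dynamic program of Theorem~\ref{thm:DP-inf}). The single enabling observation is that, because the demand is i.i.d., for \emph{every} feasible policy the conditional law of $(X_t,S_t)$ given $Y^{t-1}$ factors as a product: since $X_t$ is independent of the past $(S_1,X^{t-1},Y^{t-1})$ while $S_t$ is a deterministic function of that past through the conservation recursion~\eqref{eq:conservation}, we have $X_t \perp S_t \mid Y^{t-1}$ with $X_t \mid Y^{t-1} \sim P_X$. Writing $\theta_t(\cdot) = \PR^{\mathbf q}(S_t = \cdot \mid Y^{t-1})$, every reachable belief has the product form $\pi_t = P_X \times \theta_t$, which is exactly the structure over which $J^*$ in~\eqref{eq:opt} is optimized.

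For the converse I would lower-bound $I^{\mathbf q}(X^T,S_1;Y^T)$ for an arbitrary feasible $\mathbf q$. First expand by the chain rule as $\sum_t I(X^T,S_1;Y_t \mid Y^{t-1})$ and drop to $\sum_t I(X_t,S_t;Y_t \mid Y^{t-1})$ by data processing (since $(X_t,S_t)$ is a function of $(X^T,S_1,Y^{t-1})$), then to $\sum_t I(W_t;Y_t \mid Y^{t-1})$ with $W_t := S_t - X_t$. Using that $S_{t+1} = W_t + Y_t$ is in bijection with $W_t$ once $Y^t$ is known, each term equals $H(W_t \mid Y^{t-1}) - H(S_{t+1} \mid Y^t)$. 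The crux is the single-letter bound $H(W_t \mid Y^{t-1}) \ge H(S_t \mid Y^{t-1}) + J^*$: conditioned on any value of $Y^{t-1}$ the pair $(S_t,X_t)$ is independent with $X_t \sim P_X$, so $H(S_t - X_t) - H(S_t) \ge \min_{\theta \in \ALPHABET P_S}\{H(S-X) - H(S)\} = J^*$ by the very definition of $J^*$. Summing and telescoping the entropy differences leaves $TJ^* + H(S_1) - H(S_{T+1}\mid Y^T) \ge TJ^* - \log|\ALPHABET S|$, so $L_\infty(\mathbf q) \ge J^*$ for every $\mathbf q$.

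For achievability I would show that the structured policy $b^*$ of~\eqref{eq:bstar} preserves the invariant belief $\pi^\star = P_X \times \theta^*$. A direct computation of the filtering update $\varphi$ shows that if $\theta_t = \theta^*$ then the posterior of $S_{t+1} = W_t + Y_t$ given $Y^t$ is again $\theta^*$, \emph{independent} of the observed value of $Y_t$ (the factor $\xi^*(w)$ in the prior cancels the denominator of $b^*$). Hence $\pi_t = \pi^\star$ for all $t$, and since under $b^*$ the output depends on $(X_t,S_t)$ only through $W_t$, the per-step term $I(X_t,S_t;Y_t \mid Y^{t-1})$ equals $I(b^*;\pi^\star)$; the change of variables $(W,Y) \mapsto (Y,S_{t+1})$ identifies this with $I(S-X;X)$ for $S \sim \theta^*$, i.e.\ with $J^*$. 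Combined with the decomposition $I^{\mathbf q^*}(X^T,S_1;Y^T) = \sum_t \EXP[I(b^*;\Pi_t)]$ valid for policies of the form~\eqref{eq:pi-policy} (the reduction underlying Theorem~\ref{thm:DP-fin}), this gives $L_T(\mathbf q^*) = J^*$ exactly whenever $S_1 \sim \theta^*$, matching the converse.

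The main obstacle is a general initial distribution $P_{S_1} \ne \theta^*$: then $\theta_1 \ne \theta^*$ and the per-step cost is $\EXP[H(\xi_{\theta_t}) - H(\theta_t)] \ge J^*$ with a possibly strictly positive gap, so achievability requires controlling the transient. I expect to resolve this either by showing the gap is Cesàro-negligible because the belief process driven by $b^*$ is absorbed at (or converges to) $\theta^*$, or---more cleanly---by invoking Theorem~\ref{thm:DP-inf}, whose optimal average rate $J$ is independent of the initial belief: identifying $J = J^*$ and $b^*$ as the fixed-point minimizer would then yield optimality for every $P_{S_1}$ at once. Verifying that $(J^*,v)$ solves the fixed-point equation~\eqref{eq:inf-DP} for a suitable relative value function $v$ is the most delicate step, and is where I would expect to spend the bulk of the effort.
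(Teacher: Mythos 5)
Your converse is correct and is essentially the paper's own information-theoretic converse (Section~\ref{subsec:IT_Conv}): both arguments rest on the same enabling factorization---conditioned on $Y^{t-1}$, $X_t \sim P_X$ and is independent of $S_t$---and both bound the resulting per-step entropy difference by the minimum defining $J^*$ in~\eqref{eq:opt}; your telescoping of $H(S_t \mid Y^{t-1}) - H(S_{t+1}\mid Y^t)$ is only a cosmetic variant of the paper's telescoping into $H(W_1) + \sum_{t\ge 2} I(W_t; X_t \mid Y^{t-1}) - H(W_T\mid Y^T)$. Likewise, your achievability argument for the initial condition $S_1 \sim \theta^*$---the structured policy $b^*$ preserves the product belief $P_X \times \theta^*$, the outputs become i.i.d., and the per-step cost collapses to $I(S-X;X) = J^*$---is exactly the paper's weak achievability (Lemmas~\ref{lem:structured}, \ref{lem:MI}, and~\ref{lem:structure-invariance}).

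The genuine gap is the case $P_{S_1} \neq \theta^*$, which you correctly flag but do not resolve, and neither of your two proposed escape routes is a formality. The paper's actual resolution (strong achievability, Lemma~\ref{lem:converge}) is your first route, and it requires real machinery: one shows that $(S_t, Y_{t-1})$ is a finite-state Markov chain under $\mathbf b^*$, verifies a subrectangularity condition on products of the associated observation matrices, and invokes Kaijser's theorem (Theorem~\ref{thm:Kaijser}) on weak convergence of filters of partially observed Markov chains to conclude $\xi_t \Rightarrow \xi^*$ from \emph{any} initial condition; the belief is not absorbed at $\theta^*$ in finite time, and the argument also needs $\theta^* \in \mathit{int}(\ALPHABET P_S)$ (Property~\ref{prop:convex}) so that $b^*(y\mid w) > 0$ for every feasible $y$, which is what makes the subrectangularity check go through. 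Your second route---verifying the fixed-point equation~\eqref{eq:inf-DP} and then citing Theorem~\ref{thm:DP-inf}---fails with the natural candidate: for $v^*(\xi) = H(\xi)$ one obtains only the inequality $J^* + v^*(\xi) \le [\tilde{\mathscr B}_b v^*](\xi)$, and the paper points out that this inequality is \emph{strict} whenever $\xi$ is not equivalent to $\theta^*$ via~\eqref{eq:xi_theta}. So $(J^*, H)$ is a strict subsolution, not a solution, of the average-cost optimality equation; a subsolution suffices for the lower bound (this is precisely the paper's dynamic-programming converse, Eq.~\eqref{eq:ACOI}), but it cannot be fed into Theorem~\ref{thm:DP-inf} to conclude achievability, since that theorem presupposes an exact solution $(J, v)$. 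Absent one of these two completions, your proof establishes part~1) only as a lower bound together with achievability for the favorable initial distribution, which is weaker than the theorem's claim.
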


Note that the optimal charging policy is \emph{time-invariant} and \emph{memoryless}, i.e., the
distribution on $Y_t$ does not depend on $\pi_t$ (and, therefore on
$y^{t-1}$).

The proof, which is presented in Section~\ref{sec:iid}, is based on the standard arguments of showing achievability and a
converse. On the achievability side we show that the policy in~\eqref{eq:bstar}
belongs to a class of policies that satisfies a certain invariance property. Using
this property the multi-letter mutual information expression can be reduced into
a single-letter expression. For the converse
 we provide two proofs. The first is based on a
simplification of the dynamic program of Theorem~\ref{thm:DP-inf}. The second is
based on purely probabilistic and information theoretic arguments.

\subsection{Binary Model (Revisited)} \label{sec:revisit}

We revisit the binary model in Section~\ref{subsec:bin-model}, \edit{where the
  demand has equiprobable distribution, i.e., 
  ${P_X(x) = \frac 12}$ for $x \in \{0, 1\}$. Consider $\theta(0) = p$, $\theta(1) =
  1-p$. Let $W = S - X$. Then, 
  \begin{equation} \label{eq:W-bin}
    \PR(W = w) = \begin{cases}
      \frac 12 p, & \hbox{if $w = -1$} \\
      \frac 12 , & \hbox{if $w = 0$} \\
      \frac 12 (1-p), & \hbox{if $w = 1$}.
    \end{cases}
  \end{equation}
  Thus, 
  \[
    I(W; X) = H(W) - H(S) = 1 - \tfrac 12 h(p)
  \]
  where $h(p)$ is the binary entropy function: 
  \[
    h(p) = - p \log p - (1-p) \log(1-p).
  \]
  Thus, the value of $p$ that minimizes $I(W;X)$ is $p^* = \frac 12$ and the
  optimal leakage rate is $I(W;X) = \frac 12$.}
  
  \edit{The corresponding $\xi^*$ is obtained by substituting ${p=\frac 12}$
    in~\eqref{eq:W-bin}. For ease of notation, we denote $b^*(\cdot | w)$ as 
    $[ b^*(0|w),  b^*(1|w) ]$. Then, 
    \[
      b^*(\cdot | -1) = [0, 1], \quad
      b^*(\cdot| 0) = [\tfrac 12, \tfrac 12], \quad
      b^*(\cdot | -1) = [1, 0].
    \]
    It can be shown that this strategy is the same as~\eqref{eq:varodayan},
  which was proposed in~\cite{varodayan}.}
This yields an analytical proof of the result in~\cite{varodayan}.

\subsection{Salient features of the result for i.i.d.\ demand}

Theorem~\ref{thm:iid} shows that even if consumption could take larger values
than the demand, i.e., $\ALPHABET Y \supset \ALPHABET X$, under \edit{the
optimal} policy, $Y_t$
takes values only in $\ALPHABET X$. This agrees with the intuition that a
consumption larger that $m_x$ reveals that the battery has low charge and
that the power demand is high. In extreme cases, a large consumption may
completely reveal the battery and power usage thereby increasing the
information leakage. 

We now show some other properties of the optimal policy.
\begin{property} \label{prop:convex}
  The mutual information $I(S - X; X)$ is strictly convex in the distribution
  $\theta$ and, therefore, $\theta^* \in \mathit{int}(\ALPHABET P_S)$.
\end{property}
See Appendix~\ref{app:convex} for proof.

As a consequence, the optimal $\theta^*$ in~\eqref{eq:opt} may be obtained
using the Blahut-Arimoto algorithm~\cite{Blahut:1972, Arimoto:1972}.

\begin{property}
  Under the battery charging policy specified in Theorem~\ref{thm:iid}, 
  the power consumption $\{Y_t\}_{t \ge 1}$ is i.i.d.\ with marginal
  distribution $P_X$. Thus, $\{Y_t\}_{t \ge 1}$ is statistically
  indistinguishable from $\{X_t\}_{t \ge 1}$.
\end{property}
See Remarks~\ref{rem:iid} and~\ref{rem:PY=PX} in
Section~\ref{sec:achievability} for proof.

\begin{property} \label{prop:sym}
  If the power demand has a symmetric PMF, i.e., for any $x \in \ALPHABET
  X$, $P_X(x) = P_X(m_x - x)$, then the optimal $\theta^*$ in
  Theorem~\ref{thm:iid} is also symmetric, i.e., for any $s \in \ALPHABET S$,
  $\theta^*(s) = \theta^*(m_s - s)$.
\end{property}

\begin{proof}
  For $\theta \in \mathcal P_S$, define ${\bar \theta(s) =
  \theta(m_s - s)}$. Let $X \sim P_X$, $S \sim \theta$ and $\bar S \sim \bar
  \theta$. Then, by symmetry
  \begin{equation} \label{eq:sym}
    I(S - X; X) = I(\bar S - X; X).
  \end{equation}
  For any $\lambda \in (0,1)$, let $\theta_\lambda(s) = \lambda \theta(s) +
  (1-\lambda) \bar \theta(s)$ denote the convex combination of $\theta$ and
  $\bar \theta$. Let $S_\lambda \sim \theta_\lambda$. By
  Property~\ref{prop:convex}, if $\theta \neq \bar \theta$, then
\begin{align*}
  I(S_\lambda - X; X) &< \lambda I(S - X; X) + (1-\lambda) I(S-X; X) \\
  &= I(S - X; X),
\end{align*}
where the last equation uses~\eqref{eq:sym}.

Thus, if $\theta \neq \bar \theta$, we can strictly decrease the mutual
information by using $\theta_\lambda$. Hence, the optimal distribution must
have the property that $\theta^*(s) =  \theta^*(m_s - s)$.
\end{proof}

Given a distribution $\mu$ on some alphabet $\ALPHABET M$, we say that the
distribution is \emph{\textbf{almost} symmetric and unimodal}
around $m^* \in \ALPHABET M$ if 
\[
  \mu_{m^*} \ge \mu_{m^* + 1} \ge \mu_{m^* - 1} \ge \mu_{m^* + 2} \ge
  \mu_{m^* -2 } \ge \dots
\]
where we use the interpretation that for $m \not\in \ALPHABET M$, $\mu_m = 0$.
Similarly, we say that the distribution is \emph{symmetric and
unimodal} around $m^* \in \ALPHABET M$ if
\[
  \mu_{m^*} \ge \mu_{m^* + 1} = \mu_{m^* - 1} \ge \mu_{m^* + 2} =
    \mu_{m^* - 2} \ge \dots
\]
Note that a distribution can be symmetric and unimodal only if its support is
odd.

\begin{property} \label{prop:unimodal}
  If the power demand is symmetric and unimodal around $\lfloor m_x/2
  \rfloor$, then the optimal $\theta^*$ in Theorem~\ref{thm:iid} is
  \emph{almost} symmetric \edit{and unimodal  around} $\lfloor m_s/2 \rfloor$.
  In particular, if $m_s$ is even, then 
  \[
    \theta^*_{m^*} \ge \theta^*_{m^* + 1} = \theta^*_{m^* - 1} 
    \ge \theta^*_{m^* + 2} = \theta^*_{m^* - 2} \ge \dots
  \]
  and if $m_s$ is odd, then
  \[
    \theta^*_{m^*} = \theta^*_{m^* + 1} \ge \theta^*_{m^* - 1} 
    = \theta^*_{m^* + 2} \ge \theta^*_{m^* - 2} = \dots
  \]
  where $m^* = \lfloor m_s/2 \rfloor$.
\end{property}
\begin{proof}
  Let $\bar X = - X$. Then, $I(S-X;S) = H(S-X) - H(S) = H(S+\bar X) -
  H(S)$. Note that $P_{\bar X}$ is also symmetric and
  unimodal around $\lfloor m_x/2 \rfloor$.

  Let $S^\circ$ and $\bar X^\circ$ denote the random variables $S - \lfloor
  m_s/2 \rfloor$ and $\bar X - \lfloor m_x/2 \rfloor$. Then $\bar X^\circ$
  is also symmetric and unimodal around \edit{the} origin and 
  \[
    I(S - X; X) = H(S^\circ + \bar X^\circ) - H(S^\circ).
  \]

  Now given any distribution $\theta^\circ$ of $S^\circ$, let $\theta^+$ be a
  permutation of $\theta^\circ$ that is almost symmetric and unimodal with a
  positive bias around origin. Then by~\cite[Corollary III.2]{madiman},
  $H(P_X * \theta^\circ) \ge H(P_X * \theta^+)$. Thus, the optimal
  distribution must have the property that $\theta^\circ = \theta^+$ or,
  equivalently, $\theta$ is almost unimodal and symmetric around $\lfloor
  m_s/2 \rfloor$.

  Combining this with the result of Property~\ref{prop:sym} gives the
  characterization of the distribution when $m_s$ is even or odd.
\end{proof}

\subsection{Numerical Example: i.i.d.\ demand}

Suppose there are $n$ identical devices in the house and each is on with
probability~$p$. Thus, $X \sim \text{Binomial}(n, p)$. We derive the optimal
policy and optimal leakage rate for this scenario under the assumption that
$\ALPHABET Y = \ALPHABET X$. We consider two specific examples, where we
numerically solve~\eqref{eq:opt}.

Suppose $n = 6$ and $p=0.5$.
\begin{enumerate}
  \item Consider $\ALPHABET S = [0{:}5]$. Then, by numerically
    solving~\eqref{eq:opt}, we get that the optimal leakage rate $J^*$ is is
    $0.4616$ and the optimal battery charge distribution $\theta^*$ is
  \[
    \{    
       0.1032,    0.1747,    0.2221,    0.2221,    0.1747,    0.1032
    \}.
  \]
  \item Consider $\ALPHABET S = [0{:}6]$. Then, by numerically
    solving~\eqref{eq:opt}, we get that the optimal leakage rate $J^*$ is is
    $0.3774$ and the optimal battery charge distribution $\theta^*$ is
  \[
    \{    
       0.0773,    0.1364,    0.1847,    0.2031,    0.1847,    0.1364,    0.0773
    \}.
  \]
\end{enumerate}
Note that both results are consistent with Properties~\ref{prop:sym}
and~\ref{prop:unimodal}.

\begin{figure}[t]
  \centering
  \includegraphics[scale=0.625, trim = {0.9cm 0.3cm 1.4cm 0.3cm},clip]{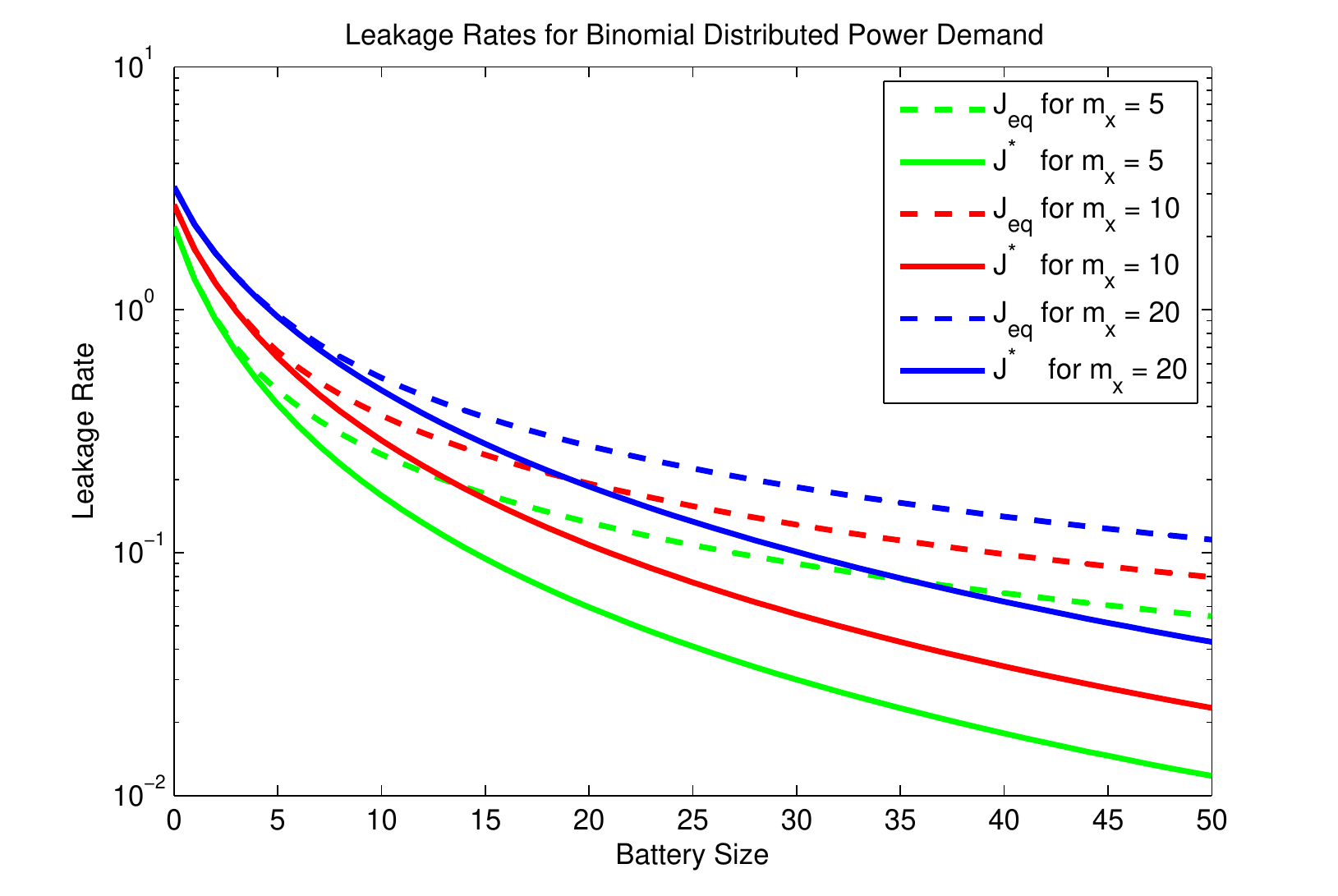}
  \caption{\small{A comparison of the performance of $q_{eq} \in Q_B$ as
  defined in~\eqref{eq:qeq} with the optimal leakage rate for i.i.d.\ Binomial distributed demand Binomial($m_x$, 0.5) for $m_x = \{5,10,20\}$.}}
  \label{largeBattPlot}
\end{figure}

We next compare the performance with the following time-homogeneous benchmark
policy $\mathbf q_{eq} \in \ALPHABET Q_B$: for all $y \in \mathcal Y, w \in
\mathcal W$, 
\begin{equation} \label{eq:qeq}
  q_t(y_t | w_t) = 
  \frac {\IND_{\mathcal Y_\circ(w_t)}\{y_t\}}{|\mathcal Y_\circ(w_t)|}.
\end{equation}
This benchmark policy chooses all feasible values of $Y_t$ with equal
probability. For that reason we call it \emph{equi-probably policy} and
denote its performance by $J_{eq}$.

In Fig. \ref{largeBattPlot}, we compare the performance of $\mathbf q^*$ and
$\mathbf q_{eq}$ as a function of battery sizes for different demand
alphabets.

Under $\mathbf q_{eq}$, the MDP converges to a belief state that is
approximately uniform. Hence, in the low battery size regime, $J_{eq}$ is
close to optimal but its performance gradually worsens with increasing
battery size.

\edit{\subsection{Numerical Example: Continuous Valued Alphabets}}

\edit{Although our setup assumes discrete alphabets, it can shown that Theorem~\ref{thm:iid} can be
extended to continuous alphabets under mild
technical conditions assuming that the density function for $X$, say $f_X(x)$ exists;
see~\cite{simonThesis16} for details.  As such we provide two achievability proofs
for Theorem~\ref{thm:iid}. The {\em weak achievability}  in Section~\ref{sec:achievability} assumes that the distribution of the initial state $S_1$
can be selected by the user. The {\em strong achievability} in Section~\ref{sec:strong_achievability} assumes that the initial state $S_1$ can be arbitrary.
The proof of the weak achievability extends immediately to continuous valued alphabets. Furthermore the proof of the converse based on information theoretic
arguments in Section~\ref{subsec:IT_Conv} also extends to continuous valued alphabets. We provide a numerical example involving a continuous valued input.}

\label{subsec:cont}

\edit{Let ${\mathcal X} = [0,1]$ be continuous valued, and let $f_X(x) = 1$
  for $x \in [0,1]$. We assume that ${\mathcal S} = [0,B]$ where $B$ denotes
  the storage capacity. We will assume that $B \ge 2$ for convenience.
  Following Theorem~\ref{thm:iid}, it suffices to take the output alphabet to
  be ${\mathcal Y} = [0,1]$. Note that for $W = S-X$, we have that
  ${\mathcal W} = [-1, B]$ and that the support of the output for any given $w$ is given by:
\begin{align}
{\mathcal Y_\circ(w)} = \begin{cases}
[-w, 1], & -1 \le w \le 0, \\
[0,1], & 0 \le w \le B-1, \\
[0, B-w], & B-1\le w \le B.
\end{cases}
\label{eq:y0_df}
\end{align}
Let $\theta^\star(\cdot)$ be the  density function that minimizes~\eqref{eq:opt} and $S^\star$ denote the random variable with this density, independent of $X$. Let $W = S^\star - X$ and  $\xi^\star(w)$ be the associated density. Then it follows from~\eqref{eq:bstar} that the optimal policy is given by
\begin{align}
b^\star(y|w) = \frac{\theta^\star(y+w)}{\xi^\star(w)}, \quad y \in {\mathcal Y_\circ(w)}. \label{eq:bs2}
\end{align}
Instead of  computing $\theta^\star(\cdot)$ numerically, which is cumbersome due to the density functions, we provide an analytical lower bound on the leakage rate. Note that the objective in~\eqref{eq:opt} can be expressed as:
\begin{align}
I(S-X; X) = h(S-X) - h(S) \label{eq:mut_inf_Exp}
\end{align}
where $h(\cdot)$ is the differential entropy. Using the entropy power inequality~\cite{cover}, since $S$ and $X$ are independent, we have
for any density $f_S(\cdot)$ that: 
\begin{align}
2^{2 h(S-X)} \ge 2^{2h(S)} + 2^{2h(X)}, \label{eq:epi}
\end{align}
where we use the fact that $h(-X) = h(X)$. Substituting in~\eqref{eq:mut_inf_Exp} we have
\begin{align}
I(S-X; X) &\ge \frac{1}{2}\log_2\left(2^{2h(S)} + 2^{2h(X)} \right) - h(S) \notag \\
&= \frac{1}{2}\log_2\left(1+  2^{2h(X) - 2 h(S)} \right) \notag\\
&=\frac{1}{2} \log_2\left(1+  2^{- 2 h(S)} \right)  \label{eq:hx}\\
&\ge \min_{S} \frac{1}{2} \log_2\left(1+  2^{- 2 h(S)} \right)\notag\\
&=  \frac{1}{2} \log_2\left(1+  2^{- 2 \{ {\max_{S} h(S)\}}} \right)\label{eq:hx2}\\
&= \frac{1}{2}\log_2\left(1+ \frac{1}{B^2}\right) \label{eq:hx3}
\end{align}
where we use the fact that when $X \sim \mathrm{Unif}[0,1]$, we have that $h(X) = 0$, see e.g.,~\cite{cover2} in~\eqref{eq:hx}
and the fact that the expression in~\eqref{eq:hx2} is decreasing in $h(S)$.  Finally in~\eqref{eq:hx3} we use the fact that a uniform
distribution maximizes the differential entropy when ${\mathcal S} = [0,B]$ is fixed and the maximum value is $h(S) = \log_2 B$.
\endgraf
For the achievability, we evaluate the leakage rate by selecting $\theta^\star(s)$ to be a uniform distribution over $[0,B]$
and using~\eqref{eq:bs2}. The resulting leakage rate is given by
\begin{align}
L^+ = h(W) - h(S) = h(\xi) - \log_2(B)
\end{align}
where the density $\xi(w)$ for $W = S-X$ is as follows:
\begin{align*}
\xi(w) = \begin{cases}
\frac{1+w}{B}, & -1 \le w \le 0 \\
\frac{1}{B}, & 0 \le w \le B-1 \\
\frac{B-w}{B}, & B-1 \le w \le B
\end{cases}
\end{align*}
and we use that when $\theta(s)$ is a uniform density over $[0,B]$, it follows that $h(S)=\log_2(B)$. 
Through straightforward computations it can be shown that 
\begin{align}h(\xi) = \frac{1}{2B \ln 2}  + \log_2(B)\end{align} and thus it follows that
$$L^+ = \frac{1}{2B\ln 2}$$
is achievable with a uniform distribution on the state. 
We note that the analytical lower bound on the leakage rate in~\eqref{eq:hx3} decays as $1/B^2$ for large $B$ while the achievable
rate decays as $1/B$. It will be interesting in future to determine the structure of the optimal input distribution
and study the associated leakage rate.}

\section {Proof of Theorem~\ref{thm:DP-fin}} \label{sec:DP}

One of the difficulties in obtaining a dynamic programming decomposition for
Problem~\ref{prob:original} is that the objective function is not of the form
$\sum_{t=1}^T \text{cost}(\text{state}_t, \text{action}_t)$. We show that
there is no loss of optimality to restrict attention to a class of policies
$\ALPHABET Q_B$ and for any policy in $\ALPHABET Q_B$, the mutual information
may be written in an additive form. 

\subsection {Simplification of optimal charging policies} \label{sec:A-B}

Let $\ALPHABET Q_B \subset \ALPHABET Q_A$ denote the set of charging
policies that choose consumption xd only on the consumption history,
current demand, and battery state. Thus, for $\mathbf{q} \in \ALPHABET Q_B$,
at any time t, given history $(x^t$, $s^t$, $y^{t-1})$, the consumption $Y_t$
is $y$ with probability $q_t(y \mid x_t, s_t, y^{t-1})$. Then the joint
distribution on $(X^T, S^T, Y^T)$ induced by $\mathbf{q} \in \mathcal Q_B$ is
given by  
\begin{multline*} 
  \PR^{\mathbf{q}}( S^{T} = s^{T}, X^{T} = x^{T}, Y^{T} = y^{T}) \\
  = P_{S_1}(s_1) P_{X_1}(x_1) q_1(y_1 \mid x_1, s_1)  \prod_{t=2}^{T} \bigg[ \IND_{s_{t}}\{s_{t-1} - x_{t-1} + y_{t-1} \} \\
    \times Q(x_{t}| x_{t-1}) q_t(y_t \mid x_{t}, s_{t}, y^{t-1}) 
  \bigg].
\end{multline*}

\begin{proposition} \label{prop:A-B}
  In Problem~\ref{prob:original}, there is no loss of optimality in
  restricting attention to charging policies in $\ALPHABET Q_B$.  Moreover,
  for any $\mathbf{q} \in \ALPHABET Q_B$, the objective function takes an
  additive form:
  \begin{equation*} 
    L_T(\mathbf q) = \frac 1T \sum_{t=1}^T I^{\mathbf q}(X_t, S_t; Y_t \mid Y^{t-1})   
  \end{equation*}
  where
  \begin{align*}
    &I^\mathbf{q}(X_t, S_t; Y_t \mid Y^{t-1}) \\
    &\qquad = \sum_{ \substack{
      x_t \in \ALPHABET X,  s_t \in \ALPHABET S \\
      y^t \in \ALPHABET Y^t 
    }}
    \begin{lgathered}[t]
      \PR^\mathbf{q}(X_t = x_t, S_t = s_t, Y^t = y^t)
      \\
      \times
      \log \frac {q_t(y_t \mid x_t, s_t, y^{t-1})}
      {\PR^\mathbf{q}(Y_t = y_t \mid Y^{t-1} = y^{t-1})}.
    \end{lgathered}
  \end{align*} 
\end{proposition}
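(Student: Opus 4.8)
The plan is to prove the two claims in sequence: first the additive decomposition of the objective for policies in $\ALPHABET Q_B$, then the reduction from $\ALPHABET Q_A$ to $\ALPHABET Q_B$. I would actually present the additive form first, since it is a clean chain-rule computation and the reduction argument will lean on the structural insight it provides. Starting from the definition $L_T(\mathbf q) = \frac 1T I^{\mathbf q}(X^T, S_1; Y^T)$, the key observation is that under the conservation equation~\eqref{eq:conservation}, the pair $(X^T, S_1)$ is in bijection with $(X^T, S^T)$ given $Y^{T-1}$, so leaking $(X^T,S_1)$ is the same as leaking the joint state trajectory. I would apply the chain rule for mutual information to write $I^{\mathbf q}(X^T,S^T;Y^T) = \sum_{t=1}^T I^{\mathbf q}(X^T, S^T; Y_t \mid Y^{t-1})$, and then argue that for a policy in $\ALPHABET Q_B$ this collapses to $\sum_t I^{\mathbf q}(X_t, S_t; Y_t \mid Y^{t-1})$. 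The reason the future and past states drop out is the Markov structure: given $(X_t, S_t, Y^{t-1})$, the conditional law of $Y_t$ is exactly $q_t(y_t \mid x_t, s_t, y^{t-1})$ and does not depend on the rest of $(X^T, S^T)$. I would verify this via the explicit factorization of $\PR^{\mathbf q}$ displayed just before the proposition, writing out the per-step term in logarithmic form to match the stated expression for $I^{\mathbf q}(X_t, S_t; Y_t \mid Y^{t-1})$.

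For the reduction claim, the approach is a change-of-policy argument: given an arbitrary $\mathbf q \in \ALPHABET Q_A$, I would construct a policy $\tilde{\mathbf q} \in \ALPHABET Q_B$ that induces the same joint distribution on $(Y^T)$ and on the relevant marginals, and show it does no worse. The natural candidate is to define $\tilde q_t(y_t \mid x_t, s_t, y^{t-1})$ as the conditional law of $Y_t$ given $(X_t, S_t, Y^{t-1})$ under the original $\mathbf q$, i.e., by marginalizing out the extra history $(x^{t-1}, s^{t-1})$ against the posterior $\PR^{\mathbf q}(x^{t-1}, s^{t-1} \mid x_t, s_t, y^{t-1})$. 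The claim to establish is that this reduced policy preserves the marginal distribution of $(X_t, S_t, Y^t)$ for every $t$, which is what the objective depends on once it is in the additive form; feasibility with respect to $\ALPHABET Y_\circ(s_t - x_t)$ is inherited automatically since each $q_t(\cdot \mid x^t, s^t, y^{t-1})$ already respects the support constraint and averaging preserves it.

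The main obstacle will be showing that the constructed $\tilde{\mathbf q}$ really reproduces the needed marginals and that the leakage does not increase. The cleanest route is an inductive argument on $t$ establishing that $\PR^{\tilde{\mathbf q}}(X_t = x_t, S_t = s_t, Y^{t-1}=y^{t-1}) = \PR^{\mathbf q}(X_t = x_t, S_t = s_t, Y^{t-1}=y^{t-1})$, using the Markov transition $Q$ and the conservation-driven state update $\IND_{s_t}\{s_{t-1}-x_{t-1}+y_{t-1}\}$ in the induction step; once the time-$(t)$ state-and-output-history marginals agree, the definition of $\tilde q_t$ forces the time-$t$ output marginal to agree as well. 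I expect the subtle point to be the direction of the inequality: I would argue that $I^{\mathbf q}(X^T,S_1;Y^T) \ge I^{\tilde{\mathbf q}}(X^T,S_1;Y^T)$ by invoking the data-processing / convexity structure, namely that replacing the history-dependent kernel by its conditional average (a coarser channel) cannot increase the mutual information $I(X_t,S_t;Y_t\mid Y^{t-1})$ while keeping the same $(X_t,S_t,Y^{t-1})$ law. Concretely, this is an application of the fact that conditioning reduces entropy: the full-history policy can only leak more about $(X^T, S_1)$ than the reduced policy that uses $(X_t, S_t, Y^{t-1})$ alone, so restricting to $\ALPHABET Q_B$ is without loss of optimality.
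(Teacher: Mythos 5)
There is a genuine gap in the first half of your plan, and it breaks the step your second half relies on. The identity $I^{\mathbf q}(X^T,S_1;Y^T)=I^{\mathbf q}(X^T,S^T;Y^T)$ is false. The bijection between $(X^T,S_1)$ and $(X^T,S^T)$ holds only \emph{given} $Y^{T-1}$, and $Y^{T-1}$ sits inside the second argument of the mutual information, so it cannot be promoted to an unconditional identity. Concretely, by \eqref{eq:conservation} the state trajectory together with the demands determines past consumption, $Y_t=S_{t+1}-S_t+X_t$, hence $I^{\mathbf q}(X^T,S^T;Y^T)\ge H(Y^{T-1})$, which is typically far larger than the leakage $I^{\mathbf q}(X^T,S_1;Y^T)$ (if they were equal, the battery would provide no privacy at all). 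The same problem kills your collapse step: in $I^{\mathbf q}(X^T,S^T;Y_t\mid Y^{t-1})$ the \emph{future} state $S_{t+1}$ does not drop out, because given $(X_t,S_t)$ it reveals $Y_t$ exactly; indeed $I(S_{t+1};Y_t\mid X_t,S_t,Y^{t-1})=H(Y_t\mid X_t,S_t,Y^{t-1})$, which is strictly positive for any genuinely randomized policy. Your appeal to ``the conditional law of $Y_t$ is $q_t$'' is only valid for the \emph{past} part of the trajectory. The paper's proof (Lemma~\ref{lem:a1}) avoids both traps by never letting future states into the expression: it writes $I^{\mathbf q}(S_1,X^T;Y^T)=\sum_{t=1}^T I^{\mathbf q}(S_1,X^t;Y_t\mid Y^{t-1})$, where dropping $X_{t+1}^T$ is justified by the Markov property of the demand, then replaces $(S_1,X^t)$ by $(X^t,S^t)$ \emph{termwise inside the conditioning on $Y^{t-1}$}, where the bijection is valid, and finally drops $(X^{t-1},S^{t-1})$ to obtain $\ge\sum_t I^{\mathbf q}(X_t,S_t;Y_t\mid Y^{t-1})$ for \emph{every} $\mathbf q\in\ALPHABET Q_A$, with equality precisely when $\mathbf q\in\ALPHABET Q_B$.

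Your second half is, construction-wise, exactly the paper's Lemma~\ref{lem:a2}: define $\tilde q_t(y_t\mid x_t,s_t,y^{t-1})$ as the conditional law of $Y_t$ given $(X_t,S_t,Y^{t-1})$ under $\mathbf q$, and prove by induction that the $(X_t,S_t,Y^t)$ marginals coincide. But your closing inequality is misstated: under this construction the per-step terms are \emph{exactly equal}, $I^{\mathbf q}(X_t,S_t;Y_t\mid Y^{t-1})=I^{\tilde{\mathbf q}}(X_t,S_t;Y_t\mid Y^{t-1})$, so nothing ``cannot increase'' at the per-step level, and no data-processing argument between the two policies is needed there. The inequality that actually carries the optimality claim compares the multi-letter objective of the \emph{original} policy with its own per-step sum, $I^{\mathbf q}(X^T,S_1;Y^T)\ge\sum_t I^{\mathbf q}(X_t,S_t;Y_t\mid Y^{t-1})$ for arbitrary $\mathbf q\in\ALPHABET Q_A$ --- precisely the corrected version of your first half. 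With that in place, the chain $I^{\mathbf q}(X^T,S_1;Y^T)\ge\sum_t I^{\mathbf q}(X_t,S_t;Y_t\mid Y^{t-1})=\sum_t I^{\tilde{\mathbf q}}(X_t,S_t;Y_t\mid Y^{t-1})=I^{\tilde{\mathbf q}}(X^T,S_1;Y^T)$ completes the proof; as written, your plan cannot close this chain.
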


See Appendix~\ref{app:A-B} for proof. The intuition behind why policies in
$\ALPHABET Q_B$ are \edit{no worse that others} in $\ALPHABET Q_A$ is as follows. For a
policy \edit{in} $\mathcal Q_A$, observing the realization $y^{t}$
of $Y^{t}$ gives partial information about the history $(x^{t}, s^{t})$
while for a policy \edit{in} $\mathcal Q_B$, $y_t$ gives partial
information only about the current state $(x_t, s_t)$. The dependence on
$(x_t, s_t)$ cannot be removed because of the conservation
constraint~\eqref{eq:conservation}. 

Proposition~\ref{prop:A-B} shows that the total cost may be written in an
additive form. Next we use an approach inspired
by~\cite{Tatikonda09,goldsmith,permuter} and formulate an equivalent
sequential optimization problem.

\subsection {An equivalent sequential optimization problem} \label{sec:equiv}

Consider a system with state process $\{ X_t, S_t \}_{t\geq 1}$ where
$\{X_t\}_{t\geq 1}$ is an exogenous Markov process as before and
$\{S_t\}_{t\geq 1}$ is a controlled Markov process as specified below. At
time~$t$, a decision maker observes $Y^{t-1}$ and chooses a distribution valued
action $A_t \in \ALPHABET A$, where $\ALPHABET A$ is given by~\eqref{eq:A},
as follows:
\begin{equation} \label{eq:f-policy}
  A_t = f_t(Y^{t-1})
\end{equation}
where $\mathbf{f} = (f_1, f_2, \dots)$ is called the decision policy. 

Based on this action, an auxiliary variable $Y_t \in \ALPHABET Y$ is chosen
according to the conditional probability $a_t(\cdot \mid x_t, s_t)$ and the
state $S_{t+1}$ evolves according to~\eqref{eq:conservation}. 

At each stage, the system incurs a per-step cost given by
\begin{equation}
  c_t(x_t, s_t, a_t, y^{t}; \mathbf f) := 
  \log \frac {a_t(y_t \mid x_t, s_t)} 
  {\PR^{\mathbf f}(Y_t = y_t \mid Y^{t-1} = y^{t-1})}.
  \label{eq:cost}
\end{equation}

The objective is to choose a policy $\mathbf f = (f_1, \dots, f_T)$ to
minimize the total finite horizon cost given by
\begin{equation} \label{eq:alt-cost}
  \tilde L_T(\mathbf f) := \frac 1T \EXP^{\mathbf f}\left[ \sum_{t=1}^T 
  c_t(X_t, S_t, A_t, Y^{t}; \mathbf f) \right]
\end{equation}
where the expectation is evaluated with respect to the
probability distribution $\PR^{\mathbf f}$ induced by the decision policy
$\mathbf f$.

\begin{proposition} \label{prop:equiv}
  The sequential decision problem described above is equivalent to
  Problem~\ref{prob:original}. In particular,
  \begin{enumerate}
    \item Given $\mathbf{q} = (q_1, \dots, q_T) \in \ALPHABET Q_B$, let
      $\mathbf f = (f_1, \dots, f_T)$ be  
      \[
        f_t(y^{t-1}) = q_t( \cdot \mid \cdot, \cdot, y^{t-1}).
      \]
      Then $\tilde L_T(\mathbf f) = L_T(\mathbf{q})$.

    \item Given $\mathbf f = (f_1, \dots, f_T)$, let $\mathbf{q} = (q_1,
      \dots, q_T) \in \ALPHABET Q_B$ be 
      \[
        q_t(y_t \mid x_t, s_t, y^{t-1}) = a_t(y_t \mid x_t, s_t),
        \text{ where } a_t = f_t(y^{t-1}).
      \]
      Then $L_T(\mathbf{q}) = \tilde L_T(\mathbf f)$.
  \end{enumerate}
\end{proposition}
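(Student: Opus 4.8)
The plan is to verify that the two constructions in parts~(1) and~(2) are inverses of one another at the level of induced joint distributions, and then to match the per-step cost of the sequential problem term-by-term with the conditional mutual information summands appearing in Proposition~\ref{prop:A-B}.

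First I would fix the correspondence $a_t(y_t \mid x_t, s_t) = q_t(y_t \mid x_t, s_t, y^{t-1})$ with $a_t = f_t(y^{t-1})$ and argue that the two policies induce the \emph{same} joint law on $(X^T, S^T, Y^T)$. In the sequential problem, the exogenous process $\{X_t\}$ evolves under $Q$, the control $A_t = f_t(Y^{t-1})$ selects $Y_t$ via $a_t(\cdot \mid x_t, s_t)$, and $S_{t+1}$ is determined deterministically by the conservation equation~\eqref{eq:conservation}. Writing out the resulting factorization and comparing it with the joint distribution displayed for $\mathbf q \in \ALPHABET Q_B$, one sees that the two coincide: the initial factors $P_{S_1}$ and $P_{X_1}$ agree, the Markov kernel $Q$ and the indicator enforcing~\eqref{eq:conservation} appear identically, and the consumption kernel matches by construction. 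Hence $\PR^{\mathbf f} = \PR^{\mathbf q}$ on the whole trajectory space, and in particular the predictive marginals $\PR^{\mathbf f}(Y_t = y_t \mid Y^{t-1} = y^{t-1})$ and $\PR^{\mathbf q}(Y_t = y_t \mid Y^{t-1} = y^{t-1})$ are equal.

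Next I would compute the expected per-step cost. Taking the expectation of $c_t$ in~\eqref{eq:cost} under $\PR^{\mathbf f}$ gives a sum over $(x_t, s_t, y^t)$ of $\PR^{\mathbf f}(X_t = x_t, S_t = s_t, Y^t = y^t)$ times $\log [ a_t(y_t \mid x_t, s_t) / \PR^{\mathbf f}(Y_t = y_t \mid Y^{t-1} = y^{t-1}) ]$. Substituting the equalities from the previous step, this is term-for-term the expression $I^{\mathbf q}(X_t, S_t; Y_t \mid Y^{t-1})$ defined in Proposition~\ref{prop:A-B}. Summing over $t$ and dividing by $T$, the additive cost $\tilde L_T(\mathbf f)$ equals $\frac 1T \sum_t I^{\mathbf q}(X_t, S_t; Y_t \mid Y^{t-1})$, which by Proposition~\ref{prop:A-B} is exactly $L_T(\mathbf q)$. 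This establishes both directions simultaneously, since the map in part~(2) is precisely the inverse of the map in part~(1).

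The main subtlety to handle carefully is that the per-step cost $c_t$ is \emph{not} a fixed function of the state and action but depends on the policy $\mathbf f$ itself through the predictive denominator $\PR^{\mathbf f}(Y_t \mid Y^{t-1})$. I expect the main obstacle to be making precise that this self-referential dependence is consistent---namely, that the $\PR^{\mathbf f}$ appearing inside the cost is the same measure under which the expectation is taken, and that it coincides with $\PR^{\mathbf q}$. Once the joint-law identity $\PR^{\mathbf f} = \PR^{\mathbf q}$ is established in the first step, this dependence causes no difficulty, but it is the one place where a careless argument could conflate two different predictive denominators.
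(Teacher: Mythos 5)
Your proposal is correct and follows essentially the same route as the paper: establish that the induced joint laws $\PR^{\mathbf f}$ and $\PR^{\mathbf q}$ coincide by matching the trajectory factorizations, then identify the expected per-step cost $\EXP^{\mathbf f}[c_t]$ with $I^{\mathbf q}(X_t,S_t;Y_t\mid Y^{t-1})$ and invoke Proposition~\ref{prop:A-B}. Your explicit remark about the self-referential predictive denominator $\PR^{\mathbf f}(Y_t\mid Y^{t-1})$ is exactly the point the paper's proof handles implicitly through the same joint-law identity, so no gap remains.
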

\begin{proof}
For any
history $(x^t$, $s^t$, $y^{t-1})$, $a^t \in \mathcal A$, and $s_{t+1} \in
\ALPHABET S$,
\begin{align}
  \hskip 2em & \hskip -2em
  \PR(S_{t+1} = s_{t+1} \mid X^t = x^t, S^t = s^t, Y^{t} =
  y^{t}, A^t = a^t) \notag \\
  &= \sum_{y_t \in \ALPHABET Y} 
  \IND_{s_{t+1}}\left\{ s_t + y_t - x_t \right\}
  a_t(y_t \mid x_t, s_t) 
  \notag \\
  &= \PR(S_{t+1} = s_{t+1} \mid X_t = x_t, S_t = s_t, A_t = a_t).
  \label{eq:CMDP}
\end{align}
Thus, the probability distribution on $(X^T, S^T, Y^T)$ induced by a decision
policy $\mathbf f = (f_1, \dots, f_T)$ is given by
\begin{multline*} 
  \PR^{\mathbf{f}}( S^{T} = s^{T}, X^{T} = x^{T}, Y^{T} = y^{T}) \\
  = P_{S_1}(s_1) P_{X_1}(x_1) q_1(y_1 \mid x_1, s_1)  \prod_{t=2}^{T} \bigg[ \IND_{s_{t}}\{s_{t-1} - x_{t-1} + y_{t-1} \} \\
    \times Q(x_{t}| x_{t-1}) a_t(y_t | x_{t}, s_{t}) 
  \bigg].
\end{multline*}
where $a_t = f_t(y^{t-1})$. Under the transformations described in
the Proposition, $\PR^{\mathbf f}$ and $\PR^\mathbf{q}$ are identical
probability distributions. Consequently, $\EXP^{\mathbf f}[ c_t(X_t, S_t,
A_t, Y^t; \mathbf f)] = I^\mathbf{q}(S_t, X_t; Y_t \mid Y^{t-1})$. Hence,
$L_T(\mathbf q)$ and $\tilde L_T(\mathbf f)$ are equivalent.
\end{proof}

Eq.~\eqref{eq:CMDP} implies that $\{X_t, S_t\}_{t \ge 1}$ is a controlled
Markov process with control action $\{A_t\}_{t \ge 1}$. In the next section,
we obtain a dynamic programming decomposition for this problem. For the
purpose of writing the dynamic program, it is more convenient to write the
policy~\eqref{eq:f-policy} as 
\begin{equation} \label{eq:f-policy2}
  A_t = f_t(Y^{t-1}, A^{t-1}).
\end{equation}
Note that these two representations are equivalent. Any policy of the
form~\eqref{eq:f-policy} is also a policy of the form~\eqref{eq:f-policy2}
(that simply ignores $A^{t-1}$); any policy of the form~\eqref{eq:f-policy2}
can be written as a policy of the form~\eqref{eq:f-policy} by recursively
substituting $A_t$ in terms of $Y^{t-1}$. Since the two forms are equivalent,
in the next section we assume that the policy is of the
form~\eqref{eq:f-policy2}.

\subsection{A dynamic programming decomposition} \label{sec:DPdetails}

The model described in Section~\ref{sec:equiv} above is similar to a POMDP
(partially observable Markov decision process): the system state $(X_t, S_t)$
is partially observed by a decision maker who chooses action $A_t$. However,
in contrast to the standard cost model used in POMDPs, the per-step cost
depends on the observation history and \emph{past policy}. Nonetheless, if we
consider the belief state as the information state, the problem can be
formulated as a standard MDP.

For that matter, for any realization $y^{t-1}$ of past observations and any
choice $a^{t-1}$ of past actions, define the belief state $\pi_t \in \ALPHABET
P_{X,S}$ as follows: For $s \in \ALPHABET S$ and  $x \in \ALPHABET X$, 
\[
  \pi_{t}(x,s) = 
  \PR^{\mathbf f}(X_{t} = x, S_{t} = s | Y^{t-1} = y^{t-1}, A^{t-1} = a^{t-1}).
\]
If $Y^{t-1}$ and $A^{t-1}$ are random variables, then the belief state is a
$\ALPHABET P_{X,S}$-valued random variable. 

The belief state evolves in a state-like manner as follows.
\begin{lemma}\label{lem:pi}
  For any realization $y_t$ of $Y_t$ and $a_t$ of $A_t$, $\pi_{t+1}$ is given
  by
  \begin{equation} \label{eq:pi}
    \pi_{t+1} = \varphi(\pi_t, y_t, a_t)
  \end{equation}
  where $\varphi$ is given by
  \begin{multline*} 
      \varphi(\pi, y, a)(x',s') \\
       =\frac
       {\sum_{x \in \ALPHABET X} Q(x'|x) 
         a(y|x, s' - x + y) \pi(x, s' - x + y)}
      {\sum_{(x,s) \in \ALPHABET X \times \ALPHABET S}
        a(y|x,s)\pi(x,s)}.
  \end{multline*}
\end{lemma}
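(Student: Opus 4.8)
The plan is to read Lemma~\ref{lem:pi} as the nonlinear Bayesian filter for the controlled hidden-Markov model of Section~\ref{sec:equiv}, specialized to the fact that the battery transition is \emph{deterministic}. I would start from the definition of the belief state and write
\[
  \pi_{t+1}(x',s') = \PR^{\mathbf f}(X_{t+1}=x',\, S_{t+1}=s' \mid Y^{t}=y^{t},\, A^{t}=a^{t}),
\]
then apply Bayes' rule in the most recent observation, splitting this into a numerator $\PR^{\mathbf f}(X_{t+1}=x', S_{t+1}=s', Y_t=y_t \mid Y^{t-1}, A^{t})$ and the normalizer $\PR^{\mathbf f}(Y_t = y_t \mid Y^{t-1}, A^{t})$. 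Because the policy has the form~\eqref{eq:f-policy2}, the action $A_t=f_t(Y^{t-1},A^{t-1})$ is a deterministic function of the conditioning variables, so conditioning on $A^t$ is the same as conditioning on $A^{t-1}$ with the realized value $a_t$ substituted wherever $A_t$ enters; this lets me treat $\pi_t$ as the prior.

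Next I would expand the numerator by marginalizing over the previous state $(X_t,S_t)=(x,s)$ and factor the joint law using the explicit product form of $\PR^{\mathbf f}$ from Section~\ref{sec:equiv} together with the controlled-Markov identity~\eqref{eq:CMDP}. This produces four factors: the prior $\PR^{\mathbf f}(X_t=x,S_t=s\mid Y^{t-1},A^{t-1})=\pi_t(x,s)$; the emission $\PR^{\mathbf f}(Y_t=y_t\mid X_t=x,S_t=s,A_t=a_t)=a_t(y_t\mid x,s)$; the exogenous-demand transition $Q(x'\mid x)$; and the conservation factor $\PR^{\mathbf f}(S_{t+1}=s'\mid X_t=x,S_t=s,Y_t=y_t)=\IND_{s'}\{s+y_t-x\}$. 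The one genuinely model-specific step---the part I expect to be the crux---is this last factor: because $S_{t+1}$ is a \emph{deterministic} function of $(S_t,Y_t,X_t)$ rather than being governed by a stochastic kernel, and because here the emitted $Y_t$ itself drives the transition, the indicator collapses the inner sum over $S_t$ to the single previous battery level $s=s'+x-y_t$ forced by inverting~\eqref{eq:conservation}. Substituting this value into the emission and prior factors reproduces the numerator of $\varphi$.

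Finally I would recognize the denominator $\sum_{(x,s)} a_t(y_t\mid x,s)\,\pi_t(x,s)$ as the one-step predictive probability of $y_t$ under $\pi_t$, which matches the normalizer in the definition of $\varphi$; this completes the identity $\pi_{t+1}=\varphi(\pi_t,y_t,a_t)$. No step requires anything beyond the conditional-independence factorizations, each of which follows directly from the product form of $\PR^{\mathbf f}$ and from~\eqref{eq:CMDP}; in particular the argument is a purely measure-theoretic identity about the induced joint distribution and invokes no optimality or ergodicity considerations.
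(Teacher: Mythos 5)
Your proposal is correct and follows essentially the same route as the paper's proof: Bayes' rule on the newest observation, marginalization over the previous state $(X_t,S_t)$, factorization of the joint law via the product form of $\PR^{\mathbf f}$ and~\eqref{eq:CMDP}, collapse of the sum over $S_t$ by the deterministic conservation indicator, and identification of the denominator as the marginal of the numerator; your upfront remark that conditioning on $A^t$ is the same as conditioning on $A^{t-1}$ (since $A_t = f_t(Y^{t-1},A^{t-1})$ is deterministic given the past) is just a cleaner packaging of the indicator $\IND_{a_t}\{f_t(y^{t-1},a^{t-1})\}$ that the paper carries through and then cancels. One caveat: inverting~\eqref{eq:conservation} as you do gives the previous charge $s = s' + x - y_t$, whereas the displayed $\varphi$ in the lemma (and the indicator $\IND_{s_{t+1}}\{s_t + x_t - y_t\}$ in the paper's proof) uses $s = s' - x + y$, i.e., the paper's formula carries a sign slip relative to~\eqref{eq:conservation}, so your derivation reproduces the corrected filter rather than the expression as literally printed.
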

\begin{proof}
  For ease of notation, we use $\PR(x_t, s_t | y^{t-1}, a^{t-1})$ to denote
  $\PR(X_t = x_t, S_t = s_t | Y^{t-1} = y^{t-1}, A^{t-1} = a^{t-1})$. Similar
  interpretations hold for other expressions as well. Consider
  \begin{align}
    \pi_{t+1}(x_{t+1}, s_{t+1}) &= \PR(x_{t+1}, s_{t+1} | y^t, a^t) \notag \\
    &= \frac {\PR(x_{t+1}, s_{t+1}, y_t, a_t | y^{t-1}, a^{t-1})}
             {\PR(                  y_t, a_t | y^{t-1}, a^{t-1})}
    \label{eq:pi-1}
  \end{align}
  Now, consider the numerator of the right hand side. 
  \begin{align}
    \PR(&x_{t+1}, s_{t+1}, y_t, a_t | y^{t-1}, a^{t-1}) \notag \\
    &= \PR(x_{t+1}, s_{t+1}, y_t, a_t | y^{t-1}, a^{t-1}, \pi_t) \notag \\
    &= \smashoperator[l]{\sum_{ (x_t, s_t) \in \ALPHABET X \times \ALPHABET S }}
    \PR(x_{t+1} | x_t) \IND_{ s_{t+1} }(s_t + x_t - y_t) \notag\\
    &\qquad \times 
    a_t(y_t | x_t, s_t) 
    \IND_{ a_t }(f_t(\edit{y^{t-1}, a^{t-1}})) \pi_t(x_t, s_t)
    \label{eq:pi-2}
  \end{align}

  Substituting~\eqref{eq:pi-2} in~\eqref{eq:pi-1} (and observing that
  the denominator of the right hand side of~\eqref{eq:pi-1} is the marginal
  of the numerator over $(x_{t+1}, s_{t+1})$), we get that $\pi_{t+1}$ can be
  written in terms of $\pi_t$, $y_t$ and $a_t$. Note that if the term
  $\IND_{a_t}(f_t(\edit{y^{t-1}, a^{t-1}}))$ is~$1$, it cancels from both the numerator and the
  denominator; if it is~$0$, we are conditioning on a null event
  in~\eqref{eq:pi-1}, so we can assign any valid distribution to the
  conditional probability. 
\end{proof}

Note that an immediate implication of the above result is that $\pi_t$
depends only on $(y^{t-1}, a^{t-1})$ and not on the policy $\mathbf f$.
This is the main reason that we are working with a policy of the
form~\eqref{eq:f-policy2} rather than~\eqref{eq:f-policy}.

\begin{lemma}\label{lem:cost}
  The cost $\tilde L_T(\mathbf f)$ in~\eqref{eq:alt-cost} can be written as
  \[
    \tilde L_T(\mathbf f) = \frac 1T \EXP \bigg[
      \sum_{t=1}^T I(\edit{A_t}; \edit{\Pi_t})
    \bigg]
  \]
  where $I(a_t; \pi_t)$ does not depend on the policy $\mathbf f$ and is
  computed according to the standard formula
  \[
    I(a_t;\pi_t) = \sum_{\substack{ x \in \ALPHABET X, s \in \ALPHABET S, \\ y \in \ALPHABET Y }}
    \begin{lgathered}[t]
      \pi_t(x,s) a_t(y \mid x,s) 
      \\ \times \log \frac{ a_t(y|x,s) }
      {\strut
	\smashoperator
	{\sum\limits_{(\tilde x,\tilde s) \in \ALPHABET X \times \ALPHABET S }}
      \pi_t(\tilde x,\tilde s) a_t(y \mid \tilde x,\tilde s) }.
    \end{lgathered}
  \]
\end{lemma}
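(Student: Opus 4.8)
The plan is to prove Lemma~\ref{lem:cost} by rewriting the per-step cost $c_t$ defined in~\eqref{eq:cost} so that its conditional expectation, given the belief state $\pi_t$ and the chosen action $a_t$, collapses to the single-letter mutual information $I(a_t;\pi_t)$. The starting point is Proposition~\ref{prop:equiv}, which already tells us that $\tilde L_T(\mathbf f) = \frac1T\sum_{t=1}^T \EXP^{\mathbf f}[c_t(X_t,S_t,A_t,Y^t;\mathbf f)]$ and equals $\frac1T\sum_t I^{\mathbf q}(X_t,S_t;Y_t\mid Y^{t-1})$. So the real content is to show that $\EXP^{\mathbf f}[c_t] = \EXP[I(A_t;\Pi_t)]$, where the inner quantity is the stated closed-form expression depending only on $(\pi_t,a_t)$ and not on the policy.

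First I would condition on the pair $(Y^{t-1},A^{t-1})=(y^{t-1},a^{t-1})$, which by Lemma~\ref{lem:pi} (and the remark immediately after it) fixes the belief state $\pi_t$ deterministically and also fixes the action $a_t = f_t(y^{t-1},a^{t-1})$. The key observation is that the denominator appearing inside the logarithm of $c_t$, namely $\PR^{\mathbf f}(Y_t=y_t\mid Y^{t-1}=y^{t-1})$, can be expanded by the law of total probability over $(x_t,s_t)$ as
\begin{equation*}
  \PR^{\mathbf f}(Y_t=y_t\mid Y^{t-1}=y^{t-1})
  = \smashoperator{\sum_{(\tilde x,\tilde s)\in\ALPHABET X\times\ALPHABET S}}
    \pi_t(\tilde x,\tilde s)\, a_t(y_t\mid \tilde x,\tilde s).
\end{equation*}
This identity is exactly what converts the marginal output distribution into a $\pi_t$-weighted mixture of the action kernel, and it is the step that removes all dependence on the policy $\mathbf f$ beyond what is already encoded in $(\pi_t,a_t)$. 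It holds because, conditioned on $Y^{t-1}=y^{t-1}$, the distribution of the hidden state $(X_t,S_t)$ is precisely $\pi_t$ by definition~\eqref{eq:pi-def}, and $Y_t$ is generated from $(X_t,S_t)$ through the kernel $a_t(\cdot\mid\cdot,\cdot)$.

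Next I would take the conditional expectation of $c_t$ given $(Y^{t-1},A^{t-1})$. Writing out $\EXP^{\mathbf f}[c_t\mid y^{t-1},a^{t-1}]$ as a sum over the remaining variables $(x_t,s_t,y_t)$, the joint conditional law factors as $\pi_t(x_t,s_t)\,a_t(y_t\mid x_t,s_t)$, and the integrand is $\log\big(a_t(y_t\mid x_t,s_t)/\PR^{\mathbf f}(Y_t=y_t\mid y^{t-1})\big)$. Substituting the mixture identity for the denominator yields precisely the stated formula for $I(a_t;\pi_t)$. Finally, taking the outer expectation over $(Y^{t-1},A^{t-1})$ (equivalently over the random belief $\Pi_t$ and random action $A_t$) and using the tower property gives $\EXP^{\mathbf f}[c_t] = \EXP[I(A_t;\Pi_t)]$, and summing over $t$ and dividing by $T$ completes the proof.

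I expect the main obstacle to be bookkeeping rather than conceptual difficulty: one must be careful that conditioning on $(Y^{t-1},A^{t-1})$ makes $\pi_t$ and $a_t$ deterministic (this is exactly the reason the paper switched to the augmented policy form~\eqref{eq:f-policy2}), and one must verify that the conditional joint distribution of $(X_t,S_t,Y_t)$ given the past is the clean product $\pi_t\cdot a_t$ with no leftover dependence on earlier coordinates. The only subtlety worth flagging is the null-event caveat from Lemma~\ref{lem:pi}: on histories of probability zero the belief update is arbitrary, but since these contribute zero weight to the expectation, they do not affect $\EXP^{\mathbf f}[c_t]$.
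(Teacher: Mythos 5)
Your proposal is correct and follows essentially the same route as the paper's proof: condition on $(Y^{t-1},A^{t-1})$, expand $\PR^{\mathbf f}(Y_t = y_t \mid Y^{t-1}=y^{t-1})$ as the $\pi_t$-weighted mixture $\sum_{\tilde x,\tilde s}\pi_t(\tilde x,\tilde s)a_t(y_t\mid\tilde x,\tilde s)$, identify the conditional expectation of $c_t$ with $I(a_t;\pi_t)$, and finish with the law of iterated expectations. Your explicit flagging of the null-event caveat and of why the augmented policy form~\eqref{eq:f-policy2} makes $(\pi_t,a_t)$ deterministic given the conditioning are details the paper leaves implicit, but they do not change the argument.
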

\begin{proof}
  By the law of iterated expectations, we have
  \begin{equation} \label{eq:tilde-L}
    \tilde L_T(\mathbf f) = \frac 1T \EXP
    \bigg[ \sum_{t=1}^T \EXP[ c_t(X_t, S_t, A_t, Y^t; \mathbf f) | Y^{t-1},
    A^{t-1} ] \bigg]
  \end{equation}
  Now, from~\eqref{eq:cost}, each summand may be written as
  \begin{align*}
    &\EXP^{\mathbf f}[ c_t(X_t, S_t, A_t, Y^t; \mathbf f) \mid Y^{t-1} = y^{t-1}, A^{t} = a^{t} ]  \\
    &\quad = \sum_{\substack{ x \in \ALPHABET X, s \in \ALPHABET S, \\ y \in \ALPHABET Y }}
    \begin{lgathered}[t]
      \pi_t(x,s) a_t(y \mid x,s) 
      \\ \times \log \frac{ a_t(y|x,s) }
      {\strut
	\smashoperator
	{\sum\limits_{(\tilde x,\tilde s) \in \ALPHABET X \times \ALPHABET S }}
      \pi_t(\tilde x,\tilde s) a_t(y \mid \tilde x,\tilde s) }
    \end{lgathered}\\
    &\qquad = I(a_t; \pi_t).
  \end{align*}
  \edit{Thus, $\EXP^{\mathbf f}[ c_t(X_t, S_t, Y^t; \mathbf f \mid Y^{t-1}, A^t ] =
      I(A_t, \Pi_t)$. Substituting this back in~\eqref{eq:tilde-L}, we get the
    result of the Lemma.}
\end{proof}

\begin{proof}[Proof of Theorem~\ref{thm:DP-fin}]
  Lemma~\ref{lem:pi}  implies that $\{\Pi_t\}_{t \geq 1}$ is a controlled
  Markov process with control action $A_t$. In addition, Lemma~\ref{lem:cost}
  implies that the objective function can be expressed in terms of the
  \emph{state} $\Pi_t$ and the action $A_t$. \edit{Thus, one can use Markov decision
  theory~\cite{hl:MDP} to identify the optimal policy. Since
both the state space and the action space are continuous valued, we need to
verify the standard technical conditions.}

\edit{Define the stochastic kernel $K \colon \ALPHABET P_{X,S} \times
  \ALPHABET A \to \ALPHABET P_{X,S}$ as follows. For any Borel subset $B$ of
  $\ALPHABET P_{X,S}$ and any $\pi \in \ALPHABET P_{X,S}$ and $a \in \ALPHABET
  A$, 
  \[
    K(B\mid \pi, a) = 
    \sum_{\substack{ x \in \ALPHABET X, s \in \ALPHABET S, \\ y \in \ALPHABET Y }}
    \pi(x,s) a(y \mid x,s) \IND_{B}\{\varphi(\pi, y, a))\}.
  \]
  The sequential model of Sec.~\ref{sec:equiv} has the following properties.
  \begin{enumerate}
    \item Based on~\eqref{eq:CMDP}, we can write the controlled dynamics of
      the state $(X_t, S_t)$ as follows:
      \begin{multline*}
        \PR(X_{t+1} = x_+, S_{t+1} = s_+ \mid X_t = x, S_t = s, A_t = a) 
        \\
        = \sum_{y \in \ALPHABET Y} Q(x_+|x)a(y)\IND_{s_+}\{s + y - x \},
      \end{multline*}
      which is continuous in~$a$. 
    \item The observations are given by
      \[
        \PR(Y_t = y | X_t = x, S_t = s, A_t = a) = a(y)
      \]
      which is continuous in~$a$.
    \item The observations $Y_t$ are discrete.
  \end{enumerate}
  Therefore, from~\cite[Sec~4.4 and Lemma~4.1]{hl:AdaptiveMDP}, we get  the
  following:
  \begin{enumerate}
    \item[4)] The stochastic kernel $K(d\pi_+ \mid \pi, a)$ is weakly
    continuous. 
  \end{enumerate}
  In addition, the model has the following properties:
  \begin{enumerate}
    \item[5)] The action set $\ALPHABET A$ is compact.
    \item[6)] The per-step cost $I(a,\pi)$ is continuous and bounded below.
    (In fact, the per-step cost is also bounded above). 
  \end{enumerate}
  Properties 4)--6) imply~\cite[Condition 3.3.3]{hl:MDP}, which
  by~\cite[Theorem~3.3.5]{hl:MDP}, implies the \emph{measurable selection
  condition}~\cite[Assumption 3.3.1]{hl:MDP}. Under the measurable selection
  condtion, the ``inf'' in the dynamic program can be replaced by a ``min''.
  \endgraf
  The continuity of the value function in $\pi$ follows from the continuity of the
  per-step cost $I(a, \pi)$and the controlled stochastic kernel $K(d\pi_+ \mid
  \pi, a)$.  The concavity of the value functions is proved in
  Appendix~\ref{app:concave}.
  \endgraf
  From standard results in Markov decision theorem (e.g.,
  see~\cite[Theorem~3.2.1]{hl:MDP}), it follows that the policy given in
  part~2) of the theorem is optimal for the sequential model of
  Sec~\ref{sec:equiv}. 
  Proposition~\ref{prop:equiv} implies that this policy is also optimal for
  Problem~\ref{prob:original}. 
}
\end{proof}

\subsection{Remarks about numerical solution} \label{sec:numerical}

The dynamic programs of Theorems~\ref{thm:DP-fin} and~\ref{thm:DP-inf}, both state and action spaces
are distribution valued (and, therefore, subsets of Euclidean space).
Although, an exact solution of the dynamic program is not possible, there are
two approaches to obtain an approximate solution. The first is to treat it as
a dynamic program of an MDP with continuous state and action spaces and use
approximate dynamic programming~\cite{powell,bertsekas}. The second is to
treat it as a dynamic program for a POMDP and use point-based
methods~\cite{pomdp}. The point-based methods rely on concavity of the value
function, which \edit{was established in Theorem~\ref{thm:DP-fin}.}


\section{Proof of Theorem~\ref{thm:iid}} \label{sec:iid}

\edit{We follow the standard approach and show that the proposed leakage rate is
  optimal by showing achievability and a converse. As a preliminary step, we
  first show that under assumption (A), the objective can be rewritten in a
  simpler but equivalent form. To show achievability, we show that the
  proposed optimal policy belongs to a class of policies that satisfies a
  certain invariance property. Using this property the multi-letter mutual
  information expression can be reduced into a single-letter expression. For
  the converse we provide two proofs: the first uses dynamic programming and
  the second uses purely probabilistic and information theoretic arguments.}

\subsection {Simplification of the dynamic program} \label{sec:IID}

\edit{Define
\begin{equation*}
  \theta_t(s) = \PR^{\mathbf f}(S_t = s \mid Y^{t-1} = y^{t-1}, A^{t-1} = a^{t-1}). 
\end{equation*}
Then, under Assumption~(A), we can simplify the belief state $\pi_t$ as
follows:
\begin{align*}
  \pi_t(x,s) &= \PR(X_t = x, S_t = s | Y^{t-1} = y^{t-1}, A^{t-1} = a^{t-1})
  \\
  &\stackrel{(a)}= \PR(X_t = x | S_t = s, Y^{t-1} = y^{t-1}, A^{t-1} =
  a^{t-1}) \\
  & \quad \times 
  \PR(S_t = s \mid Y^{t-1} = y^{t-1}, A^{t-1} = a^{t-1})
  \\
  &\stackrel{(b)}= P_X(x) \theta_t(s)
\end{align*}
where $(a)$ follows from the product rule of probability and $(b)$ uses
Assumption~(A) and the definition of $\theta_t$.}

\edit{Since $\pi_t(x,s) = P_X(x) \theta_t(s)$,}
in principle, we can simplify the dynamic program of
Theorem~\ref{thm:DP-inf} by using $\theta_t$ as an information state. However, for
reasons that will become apparent, we provide an alternative simplification
that uses an information state $\xi_t \in \ALPHABET P_W$. 

Recall that $W_t = S_t - X_t$ which  takes values in $\ALPHABET W = \{ s -
x : s \in \ALPHABET S, x \in \ALPHABET X \}$. For any realization
$(y^{t-1},a^{t-1})$ of past observations and actions, define $\xi_t \in
\ALPHABET P_W$ as follows: for any $w \in \ALPHABET W$, 
\[
  \xi_t(w) = \PR^{\mathbf f}(W_t = w \mid Y^{t-1} = y^{t-1}, A^{t-1} = a^{t-1}).
\]
If $Y^{t-1}$ and $A^{t-1}$ are random variables, then $\xi_t$ is a $\ALPHABET
P_W$-valued random variable. 
As was the case for $\pi_t$, it can be shown that $\xi_t$ does not depend on
the choice of the policy~$\mathbf f$. 

\begin{lemma}\label{lem:equiv}
  Under \edit{Assumption}~(A), $\theta_t$ and $\xi_t$ are related as follows:
\begin{equation}\xi_t(w) = \sum_{(x,s) \in \ALPHABET D(w)} P_X(x) \theta_t(s).\label{eq:xi_theta}\end{equation}
\end{lemma}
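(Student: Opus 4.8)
The plan is to derive~\eqref{eq:xi_theta} directly from the definitions by marginalizing the belief state $\pi_t$ over the fibers of the map $(x,s) \mapsto s - x$. The starting observation is that the event $\{W_t = w\}$ is, by the definition $W_t = S_t - X_t$, precisely the disjoint union of the events $\{X_t = x, S_t = s\}$ over all pairs $(x,s)$ with $s - x = w$, which is exactly the set $\ALPHABET D(w)$ introduced in~\eqref{eq:D}.

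Concretely, I would condition everything on the same realization $(y^{t-1}, a^{t-1})$ used to define both $\theta_t$ and $\xi_t$, and write
\begin{align*}
  \xi_t(w)
  &= \PR^{\mathbf f}(W_t = w \mid Y^{t-1} = y^{t-1}, A^{t-1} = a^{t-1}) \\
  &= \smashoperator{\sum_{(x,s) \in \ALPHABET D(w)}}
     \PR^{\mathbf f}(X_t = x, S_t = s \mid Y^{t-1} = y^{t-1}, A^{t-1} = a^{t-1}) \\
  &= \smashoperator{\sum_{(x,s) \in \ALPHABET D(w)}} \pi_t(x,s),
\end{align*}
where the middle equality is the disjoint-union decomposition above and the last line is just the definition~\eqref{eq:pi-def} of the belief state. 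The only remaining step is to substitute the factorization $\pi_t(x,s) = P_X(x)\,\theta_t(s)$, which was established immediately before the lemma statement as a consequence of Assumption~(A); this yields~\eqref{eq:xi_theta} directly.

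The substance of the argument therefore rests entirely on the already-proved factorization $\pi_t(x,s) = P_X(x)\theta_t(s)$, so there is no genuine obstacle here beyond bookkeeping; the lemma is essentially a restatement of that factorization pushed forward through the deterministic change of variables $W_t = S_t - X_t$. If I wanted to be careful, the one point worth remarking is that $\xi_t$ (like $\pi_t$) depends only on $(y^{t-1}, a^{t-1})$ and not on the policy $\mathbf f$; this is inherited from the analogous property of $\pi_t$ noted after Lemma~\ref{lem:pi}, since $\xi_t$ is a fixed linear functional of $\pi_t$. I would state this inheritance in one sentence rather than re-deriving it.
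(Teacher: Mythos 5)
Your proof is correct and takes essentially the same route as the paper's: the paper's proof likewise starts from $\xi_t(w) = \PR^{\mathbf f}(W_t = w \mid Y^{t-1} = y^{t-1}, A^{t-1} = a^{t-1})$, rewrites $W_t$ as $S_t - X_t$, and passes directly to $\sum_{(x,s) \in \ALPHABET D(w)} P_X(x)\theta_t(s)$, implicitly using the disjoint-union decomposition over $\ALPHABET D(w)$ and the factorization $\pi_t(x,s) = P_X(x)\theta_t(s)$ that you spell out explicitly. Your added remark on policy-independence of $\xi_t$ is a harmless (and accurate) bonus that the paper states separately just before the lemma.
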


\begin{proof}
  \begin{align*}
    \xi_t(w) &= \PR^{\mathbf f}(W_t = w \mid Y^{t-1} = y^{t-1}, A^{t-1} = a^{t-1}) \\
    &= \PR^{\mathbf f}(S_t - X_t = w \mid Y^{t-1} = y^{t-1}, A^{t-1} = a^{t-1}) \\
    &= \sum_{(x,s) \in \ALPHABET D(w)} P_X(x) \theta_t(s).
  \end{align*}
\end{proof}
Since $\pi_t(x,s) = P_X(x)\theta_t(s)$, \edit{Lemma~\ref{lem:equiv} shows that
$\xi_t$ is a function of $\pi_t$. We will show that we can}
simplify the dynamic program of Theorem~\ref{thm:DP-inf} by using
$\xi_t$ as the information state instead of $\pi_t$. For such a simplification
to work, we would have to use charging policies of the form $q_t(y_t|w_t,
y^{t-1})$. We establish that restricting attention to such policies is without
loss of optimality. For that matter, define $\ALPHABET B$ as follows:
\begin{equation} \label{eq:B}
  \ALPHABET B = \left\{ b \in \ALPHABET P_{Y|W} : 
b(\ALPHABET Y_\circ(w) \mid w) = 1,\ \forall w \in \ALPHABET W \right\}.
\end{equation}

\begin{lemma} \label{lem:iid}
  Given $a \in \ALPHABET A$ and $\pi \in \mathcal P_{X,S}$, 
  define the following:
  \begin{itemize}
    \item $\xi \in \ALPHABET P_W$ as $\xi(w) = \sum_{(x,s) \in \ALPHABET
      D(w)} \pi(x,s)$ 

    \item $b \in \mathcal B$ as follows: for
      all $y \in \ALPHABET Y, w \in \ALPHABET W$
        \[
          b(y \mid w) = 
          \frac 
          { 
            \sum_{(x,s) \in \ALPHABET D(w)} 
            a(y \mid x, s) \pi( x, s) 
          }
          { 
            \xi(w)
          }; 
        \]
    \item $\tilde a \in \mathcal A$ as follows: for all $y \in \mathcal Y, x
      \in \mathcal X, s \in \mathcal S$
      \[
        \tilde a(y|x,s) = b(y|s-x).
      \]
  \end{itemize}
  Then under \edit{Assumption}~(A), we have
  \begin{enumerate}
    \item Invariant Transitions: for any $y \in \ALPHABET Y$, 
      $\varphi(\pi,y,a) = \varphi(\pi,y,\tilde a)$.
    \item Lower Cost: $I(a; \pi) \ge I(\tilde a; \pi) = I(b ; \xi)$. 
  \end{enumerate}
  Therefore, in the sequential problem of Sec.~\ref{sec:equiv}, there is no
  loss of optimality in restricting attention to actions $b \in \ALPHABET B$.
\end{lemma}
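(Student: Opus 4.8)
The plan is to establish the three itemized claims in turn—invariant transitions, the cost inequality $I(a;\pi)\ge I(\tilde a;\pi)$, and the cost identity $I(\tilde a;\pi)=I(b;\xi)$—and then assemble them into the optimality reduction using the structure of the Bellman operator. The single structural fact that drives everything is the conservation law~\eqref{eq:conservation} written as $S_{t+1}=S_t+Y_t-X_t=W_t+Y_t$, so that, conditioned on $Y_t=y$, the requirement $S_{t+1}=s'$ forces $W_t=s'-y$: the transition sees the pair $(X_t,S_t)$ only through $W_t=S_t-X_t$ together with the output $Y_t$. The second observation I would foreground is that $b$ is exactly the conditional law of $Y$ given $W$ under the joint distribution $\pi(x,s)a(y\mid x,s)$; indeed $\PR^a(Y=y\mid W=w)=\bigl(\sum_{(x,s)\in\ALPHABET D(w)}a(y\mid x,s)\pi(x,s)\bigr)/\xi(w)=b(y\mid w)$, and $\xi$ is the marginal law of $W$ under $\pi$.

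For the invariant-transition claim I would substitute $\tilde a$ into the filtering map $\varphi$ of Lemma~\ref{lem:pi}. Under Assumption~(A) the demand is i.i.d., so $Q(x'\mid x)=P_X(x')$ factors out of the numerator. Grouping the remaining sum over the fibre $\ALPHABET D(s'-y)$ consistent with $S_{t+1}=s'$ and $Y_t=y$, the numerator becomes $P_X(x')\sum_{(x,s)\in\ALPHABET D(s'-y)}a(y\mid x,s)\pi(x,s)=P_X(x')\,b(y\mid s'-y)\,\xi(s'-y)$. Repeating with $\tilde a$, the factor $\tilde a(y\mid x,s)=b(y\mid s-x)$ is constant on $\ALPHABET D(s'-y)$ and equals $b(y\mid s'-y)$, so the numerator is unchanged. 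The denominator of $\varphi$ is the output marginal $\sum_{(x,s)}a(y\mid x,s)\pi(x,s)=\sum_w b(y\mid w)\xi(w)$, which is likewise identical for $a$ and $\tilde a$. Hence $\varphi(\pi,y,a)=\varphi(\pi,y,\tilde a)$ for every $y$.

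For the cost claims I would use the chain rule together with the fact that $W=S-X$ is a deterministic function of $(X,S)$. Under $\tilde a$ the output depends on $(X,S)$ only through $W$, so $Y$ is conditionally independent of $(X,S)$ given $W$; therefore $I(\tilde a;\pi)=I^{\tilde a}(Y;W)=I(b;\xi)$. Under $a$ the $(W,Y)$ joint law is the same $\xi(w)b(y\mid w)$, so $I^a(Y;W)=I(b;\xi)$ as well, and the chain rule gives $I(a;\pi)=I^a(Y;(X,S))=I^a(Y;W)+I^a(Y;(X,S)\mid W)=I(b;\xi)+I^a(Y;(X,S)\mid W)\ge I(b;\xi)$, the inequality following from nonnegativity of conditional mutual information. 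This yields $I(a;\pi)\ge I(\tilde a;\pi)=I(b;\xi)$.

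To conclude the optimality reduction I would read off the Bellman operator $[\mathscr B_a V](\pi)=I(a;\pi)+\sum_{x,s,y}\pi(x,s)a(y\mid x,s)V(\varphi(\pi,y,a))$. Rewriting its continuation term as $\sum_y\PR^a(Y=y)\,V(\varphi(\pi,y,a))$, the invariant-transition claim together with the coincidence of the output marginals makes this term identical for $a$ and $\tilde a$, while the cost inequality ensures the immediate term does not increase. Hence $[\mathscr B_{\tilde a}V](\pi)\le[\mathscr B_a V](\pi)$ for every $V$ and every $a$, so the minimization over $\ALPHABET A$ is already attained within the subclass of actions of the form $\tilde a$, equivalently over $b\in\ALPHABET B$; this is the asserted no-loss-of-optimality. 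The main obstacle is the bookkeeping in the invariant-transition step: one must correctly track the change of variables in $\varphi$ and verify that both the fibre sum and the output marginal are preserved under $a\mapsto\tilde a$. The role of Assumption~(A) is precisely that $Q(x'\mid x)=P_X(x')$ is constant in~$x$, which is what lets the fibre sum collapse to $b(y\mid w)\xi(w)$; once this is in place, the cost claims and the Bellman assembly are routine.
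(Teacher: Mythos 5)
Your proposal is correct and follows essentially the same route as the paper: both rest on grouping the pairs $(x,s)$ over the fibres $\ALPHABET D(w)$, recognizing $b$ as the conditional law of $Y$ given $W$ (so the joint law of $(W,Y)$ is identical under $a$, $\tilde a$, and $b$), and then getting the cost inequality from the data-processing/chain-rule argument. The only presentational differences are that you verify the invariant-transition claim by substituting directly into the filtering map $\varphi$ and factoring out $Q(x'\mid x)=P_X(x')$, where the paper argues via the equality of the laws of $(S_{t+1},Y_t)$, and that you spell out the Bellman-operator comparison that the paper leaves implicit; both additions are sound.
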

\begin{proof}
  \begin{enumerate}
    \item Suppose $(X,S) \sim \pi$ and $W = S - X$, $S_{+} = W + Y$, $X_{+}
      \sim P_X$. We will compare $\PR(S_{+} | Y)$ when $Y \sim a(\cdot |
      X,S)$ with when $Y \sim \tilde a(\cdot | X,S)$. Given $w \in \ALPHABET
      W$ and $y \in \ALPHABET Y$, 
      \begin{align}
        \PR^a(W = w, Y = y)  &= \sum_{(x,s) \in \ALPHABET D(w)}
        a(y|x,s) \pi(x,s) \notag \\
        &=  \sum_{(x,s) \in \ALPHABET D(w)} b(y|w) \pi(x,s) \notag \\
        &\stackrel{\text{(a)}}=
         \sum_{(x,s) \in \ALPHABET D(w)} \tilde a(y|x,s) \pi(x,s) \notag \\
        &= \PR^{\tilde a}(W = w, Y = y) \label{eq:WY}
      \end{align}
      where (a)~uses that for all $(x,s) \in \ALPHABET D(w)$, $s-x = w$.
      Marginalizing~\eqref{eq:WY} over $W$, we get that $\PR^a(Y = y) =
      \PR^{\tilde a}(Y = y)$. Since $S_{+} = W + Y$, Eq.~\eqref{eq:WY} also
      implies $\PR^{a}(S_{+} = s, Y = y) = \PR^{\tilde a}(S_{+} = s, Y = y)$. 
      Therefore, $\PR^{a}(S_{+} = s | Y = y) = \PR^{\tilde a}(S_{+} = s | Y =
      y)$.

    \item Let $(X,S) \sim \pi$ and $W = S - X$. Then $W \sim \xi$. Therefore,
      we have
      \[
        I(a;\pi) = I^a(X,S; Y) \ge I^a(W;Y).
      \]
      where the last inequality is the data-processing inequality. Under
      $\tilde a$, $(X,S) - W - Y$, therefore,
      \[
        I(\tilde a;\pi) = I^{\tilde a}(X,S; Y) = I^{\tilde a}(W;Y).
      \]

      Now, by construction, the joint distribution of $(W,Y)$ is the same
      under $a$, $\tilde a$, and $b$. Thus, 
      \[
        I^a(W;Y) = I^{\tilde a}(W;Y) = I^b(W;Y).
      \]
      Note that $I^b(W;Y)$ can also be written as $I(b;\xi)$. The result
      follows by combining all the above relations.
      \qedhere
  \end{enumerate}
\end{proof}

Once attention is restricted to actions $b \in \ALPHABET B$, the update of
$\xi_t$ may be expressed in terms of $b \in \ALPHABET B$ as follows:
\begin{lemma}\label{lem:xi}
  For any realization $y_t$ of $Y_t$ and $b_t$ of $B_t$, $\xi_{t+1}$ is given
  by
  \begin{equation} \label{eq:xi}
    \xi_{t+1} = \tilde \varphi(\xi_t, y_t, b_t)
  \end{equation}
  where $\tilde \varphi$ is given by
  \begin{multline*}
      \tilde{\varphi}(\xi,y,b)(w_{+}) \\
      = 
      \frac{ \sum_{x \in \mathcal X, w \in \mathcal W} P_X(x) \IND_{w_+}\{y + w - x\} b(y \mid w)
    \xi(w) } { \sum_{w \in \mathcal W} b(y \mid w) \xi(w)
  }.
  \end{multline*}
\end{lemma}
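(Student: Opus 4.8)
The plan is to mimic the derivation of the filtering recursion $\varphi$ in Lemma~\ref{lem:pi}, but to work directly with the scalar statistic $W_t = S_t - X_t$ rather than with the pair $(X_t, S_t)$. First I would write $\xi_{t+1}(w_+) = \PR(W_{t+1} = w_+ \mid y^t, a^t)$ and apply Bayes' rule to express it as a ratio, with numerator $\PR(W_{t+1} = w_+, Y_t = y_t, B_t = b_t \mid y^{t-1}, a^{t-1})$ and denominator $\PR(Y_t = y_t, B_t = b_t \mid y^{t-1}, a^{t-1})$, exactly as in~\eqref{eq:pi-1}. Since $\xi_t$ is a deterministic function of the past $(y^{t-1}, a^{t-1})$, I would condition on $\xi_t$ and expand the numerator by summing over the hidden value $W_t = w$.

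The crucial step is the deterministic relation linking consecutive values of $W$. From the conservation equation~\eqref{eq:conservation}, $S_{t+1} = S_t + Y_t - X_t = W_t + Y_t$, so that $W_{t+1} = S_{t+1} - X_{t+1} = W_t + Y_t - X_{t+1}$. Thus, given $W_t = w$ and $Y_t = y_t$, the next state is $W_{t+1} = w + y_t - X_{t+1}$, and $X_{t+1}$ enters only through this relation. Under Assumption~(A), $X_{t+1}$ is i.i.d.\ with law $P_X$ and independent of the entire past as well as of $(W_t, Y_t)$; this is precisely what lets $P_X(x)$ factor out and sit inside the sum, producing $\sum_x P_X(x)\,\IND_{w_+}\{w + y_t - x\}$. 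For the observation term, because attention has been restricted to actions $b_t \in \ALPHABET B$ (justified by Lemma~\ref{lem:iid}), the consumption law depends on the state only through $w$, so $\PR(Y_t = y_t \mid W_t = w) = b_t(y_t \mid w)$.

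Collecting these pieces, the numerator becomes $\sum_{w, x} P_X(x)\,\IND_{w_+}\{w + y_t - x\}\, b_t(y_t \mid w)\, \xi_t(w)$; the action indicator $\IND_{b_t}\{f_t(y^{t-1}, a^{t-1})\}$ cancels between numerator and denominator (or, when it vanishes, we are conditioning on a null event), exactly as in the proof of Lemma~\ref{lem:pi}. Finally, I would observe that the denominator is just the marginal of the numerator over $w_+$, which collapses to $\sum_w b_t(y_t \mid w)\, \xi_t(w)$, yielding the claimed expression for $\tilde\varphi$.

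I expect the only substantive point to be justifying that $\{W_t\}$ is itself a controlled Markov chain under policies in $\ALPHABET B$, i.e.\ that one never needs the finer statistic $(X_t, S_t)$ to propagate $\xi_t$. This hinges entirely on the i.i.d.\ assumption: in the general Markov case $X_{t+1}$ would depend on $X_t$, which is not recoverable from $W_t$ alone, and the recursion would fail to close in $\xi$. Everything else is a routine repetition of the Bayes'-rule bookkeeping already carried out for $\varphi$.
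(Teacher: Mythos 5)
Your proof is correct and takes essentially the same route as the paper: the paper's proof of Lemma~\ref{lem:xi} consists entirely of the remark that it is ``similar to the proof of Lemma~\ref{lem:pi},'' and what you have written is precisely that Bayes'-rule bookkeeping carried out for $W_t$, using $W_{t+1} = W_t + Y_t - X_{t+1}$, the restriction to actions in $\ALPHABET B$, and Assumption~(A) to factor out $P_X$ and collapse the denominator. Your closing observation---that the recursion closes in $\xi$ only because the i.i.d.\ assumption makes $X_{t+1}$ independent of $W_t$, and would fail for general Markov demand---is exactly the right reason this simplification is confined to the i.i.d.\ case.
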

\begin{proof}
  \edit{The proof is similar to the proof of Lemma~\ref{lem:pi}.}
\end{proof}

For any $b \in \ALPHABET B$ and $\xi \in \ALPHABET P_W$, let us define the
Bellman operator $\tilde {\mathscr B}_b : [ \ALPHABET P_W \to \reals] \to [
\ALPHABET P_W \to \reals]$ as follows: \edit{for any $\tilde V \colon
\ALPHABET P_W \to \reals$ and any $\xi \in \ALPHABET P_W$,}
\begin{multline*}
  \tBELL{b}{\tilde V}{\xi} = I(b; \xi)\ + 
  \sum_{\substack{ y \in \ALPHABET Y, w \in \ALPHABET W }}
  \xi(w)b(y\mid w) \tilde V\big( \tilde \varphi(\xi, y, b) \big).
\end{multline*}

\begin{theorem} \label{thm:iid-DP}
   Under assumption~{(A)}, there is no loss of optimality in
  restricting attention to optimal policies of the form $q_t(y_t | w_t,
  \xi_t)$ in Problem~\ref{prob:original}. 
  \begin{enumerate}
    \item For the finite horizon case, we can identify the optimal policy
      $\mathbf q^* = (q^*_1, \dots, q^*_T)$ by iteratively defining 
      \emph{value functions} $\tilde V_t \colon \ALPHABET P_{W} \to \reals$.
      For any $\xi \in \ALPHABET P_{W}$, $\tilde V_{T+1}(\xi) = 0$, and for
      $t = T, T-1, \dots, 1$,
      \begin{equation} \label{eq:fin-DP-iid}
        \tilde V_t(\xi) = \min_{b \in \ALPHABET B}
        [ \tilde {\mathscr B}_b \tilde V_{t+1}](\xi).
      \end{equation}
      Let $f^\circ_t(\xi)$ denote the arg min of the right hand side
      of~\eqref{eq:fin-DP-iid}. Then, the optimal policy $\mathbf q^* =
      (q^*_1, \dots, q^*_T)$ is given by
      \[
        q^*_t(y_t | w_t, \xi_t) = b_t(y_t | w_t), 
        \text{ where } b_t = f^\circ_t(\xi_t).
      \]
      Moreover, the optimal (finite horizon) leakage rate is given
      by $\tilde V_1(\xi_1)/T$, where $\xi_1(w) = \sum_{(x,s) \in \ALPHABET
      D(w)} P_X(x) P_{S_1}(s)$. 

    \item For the infinite horizon, \edit{suppose that there exists a 
      constant $\tilde J \in \reals$ and a function $\tilde v :
    \ALPHABET P_{S} \to \reals$ which satisfies} 
      the following fixed point equation:
      \begin{equation} \label{eq:inf-DP-iid}
        \tilde J + \tilde v(\xi) = \min_{b \in \ALPHABET B}
        [\tilde {\mathscr B}_b \tilde v](\xi),\ \forall \xi \in \ALPHABET P_{W}.
      \end{equation}
      Let $\mathbf f^\circ(\xi)$ denote the arg min of the right hand side
      of~\eqref{eq:inf-DP-iid}. Then, the time-homogeneous policy $\mathbf
      q^* = (q^*, q^*, \dots)$ given by
      \[
        q^*(y_t | w_t, \xi_t) = b_t(y_t | w_t), 
        \text{ where } b_t = f^\circ(\xi_t)
      \]
      is optimal. Moreover, the optimal (infinite horizon) leakage rate is
      given by $\tilde J$.
      \qed
  \end{enumerate}
\end{theorem}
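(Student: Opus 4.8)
The plan is to show that under Assumption~(A) the dynamic program of Theorems~\ref{thm:DP-fin} and~\ref{thm:DP-inf}, which is stated over the belief state $\pi_t \in \ALPHABET P_{X,S}$, collapses onto the lower-dimensional state $\xi_t \in \ALPHABET P_W$, and that the collapsed program is precisely~\eqref{eq:fin-DP-iid}--\eqref{eq:inf-DP-iid}. I would begin from the structural fact, already derived above, that under~(A) the belief factorizes as $\pi_t(x,s) = P_X(x)\theta_t(s)$. Substituting this product form into the filter $\varphi$ of Lemma~\ref{lem:pi} and using $Q(x'\mid x) = P_X(x')$ shows that $\varphi(\pi_t, y_t, a_t)$ again has product form $P_X(x')\theta_{t+1}(s')$; hence the product structure is preserved along every trajectory, and by Lemma~\ref{lem:equiv} the pair $(\theta_t,\xi_t)$ carries all the information contained in $\pi_t$.

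Next I would use Lemma~\ref{lem:iid} to shrink the action set from $\ALPHABET A$ to $\ALPHABET B$. For any $a \in \ALPHABET A$ the induced action $\tilde a$ (equivalently $b \in \ALPHABET B$) leaves the successor belief unchanged, $\varphi(\pi,y,a) = \varphi(\pi,y,\tilde a)$, while costing no more, $I(a;\pi) \ge I(\tilde a;\pi) = I(b;\xi)$. Consequently, replacing $a$ by $\tilde a$ inside the Bellman operator $\mathscr B_a$ can only decrease $[\mathscr B_a V](\pi)$, so the minimization in~\eqref{eq:fin-DP} (and in~\eqref{eq:inf-DP}) is attained without loss on the subclass $\{\tilde a : b \in \ALPHABET B\}$. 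This is exactly what justifies searching only over policies of the announced form $q_t(y_t\mid w_t,\xi_t)$.

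The core step is to verify that, on value functions depending on $\pi$ only through $\xi$, the operator $\mathscr B_{\tilde a}$ equals $\tilde{\mathscr B}_b$. Writing $V(\pi) = \tilde V(\xi)$, the per-step cost matches by the identity $I(\tilde a;\pi) = I(b;\xi)$ of Lemma~\ref{lem:iid}, and the continuation term matches because (i)~the joint law of $(W,Y)$ equals $\xi(w)b(y\mid w)$ by~\eqref{eq:WY}, and (ii)~the successor belief $\varphi(\pi,y,\tilde a)$ has $W$-marginal exactly $\tilde\varphi(\xi,y,b)$ by Lemma~\ref{lem:xi}, so that $V(\varphi(\pi,y,\tilde a)) = \tilde V(\tilde\varphi(\xi,y,b))$. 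Granting this operator identity, the finite-horizon claim follows by backward induction: the terminal functions agree, $V_{T+1}\equiv 0\equiv\tilde V_{T+1}$, and the inductive step gives $V_t(\pi) = \min_{b\in\ALPHABET B}[\tilde{\mathscr B}_b \tilde V_{t+1}](\xi) = \tilde V_t(\xi)$, with the minimizer yielding $q^*_t(y_t\mid w_t,\xi_t) = b_t(y_t\mid w_t)$. Evaluating at $\pi_1(x,s) = P_X(x)P_{S_1}(s)$ gives the initial state $\xi_1$ and the optimal leakage rate $\tilde V_1(\xi_1)/T$. The infinite-horizon statement then follows from this finite-horizon recursion together with the same standard average-cost dynamic-programming arguments used for Theorem~\ref{thm:DP-inf}, applied to a solution $(\tilde J,\tilde v)$ of~\eqref{eq:inf-DP-iid}.

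The main obstacle I expect is part~(ii) above: confirming that the reduced filter $\tilde\varphi$ is genuinely the $W$-marginal projection of the full filter $\varphi$ restricted to product beliefs. This is exactly where Assumption~(A) is indispensable, since $X_{t+1}$ is a fresh i.i.d.\ draw independent of $S_{t+1} = W_t + Y_t$, which forces the successor belief to split as $P_X(x')\theta_{t+1}(s')$ and makes its $W_{t+1}$-marginal a function of $(\xi_t, b_t, y_t)$ alone. Carrying out this marginalization carefully, and checking that the two parametrizations $\theta_t \leftrightarrow \xi_t$ remain consistent under the dynamics, is the one genuine computation; everything else is bookkeeping assembled from Lemmas~\ref{lem:equiv}--\ref{lem:xi}.
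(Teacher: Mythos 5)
Your proposal is correct and follows essentially the same route as the paper: it rests on the factorization $\pi_t(x,s)=P_X(x)\theta_t(s)$ under Assumption~(A), the reduction to actions in $\ALPHABET B$ via Lemma~\ref{lem:iid}, and the controlled-Markov property of $\xi_t$ from Lemma~\ref{lem:xi}, after which standard Markov-decision-theory arguments yield the dynamic program. In fact, your explicit operator identity $[\mathscr B_{\tilde a} V](\pi)=[\tilde{\mathscr B}_b \tilde V](\xi)$ and the backward induction $V_t(\pi)=\tilde V_t(\xi)$ spell out the details that the paper compresses into a citation of standard MDP results.
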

\begin{proof}
  Lemma~\ref{lem:xi} implies that $\{\xi_t\}_{t \ge 1}$ is a controlled
  Markov process with control action $b_t$. Lemma~\ref{lem:iid}, part~2),
  implies that the per-step cost can be written as 
  \[
    \frac 1T \EXP\bigg[ \sum_{t=1}^T I(b_t;\xi_t) \bigg].
  \]
  Thus, by standard results in Markov decision
  \edit{theory}~\cite{hl:MDP}, the
  optimal solution is given by the dynamic program described above.
\end{proof}

\subsection{Weak achievability} \label{sec:achievability}

To simplify the analysis, we assume that we are free to choose the initial
distribution of the state of the battery, which could be done by, for example,
initially charging the battery to a random value according to that
distribution. In principle, such an assumption could lead to a lower
achievable leakage rate. For this reason, we call it \emph{weak}
achievability. In the next section, we will show achievability starting from
an arbitrary initial distribution, which we call \emph{strong} achievability.

\begin{definition}[Constant-distribution policy]
  A time-homogeneous policy $\mathbf f^\circ = (f^\circ, f^\circ, \dots)$ is 
\edit{  called} a constant-distribution policy if for all $\xi \in \ALPHABET P_W$,
  $f^\circ(\xi)$ is a constant. If $f^\circ(\xi) = b^\circ$, then with a
  slight abuse of notation, we refer to $\mathbf b^\circ = (b^\circ, b^\circ,
  \dots)$ as a constant-distribution policy.
\end{definition}

Recall that under a constant-distribution policy $b \in \ALPHABET B$,  for any
realization $y^t$ of $Y^t$, $\theta_t$ and $\xi_t$ are given as follows:
\edit{\begin{align*}
  \theta_t(s) &= \PR(S_t = s \mid Y^{t-1} = y^{t-1}, B^{t-1} = b^{t-1}) \\
  \xi_t(w) &= \PR(W_t = w \mid Y^{t-1} = y^{t-1}, B^{t-1} = b^{t-1}).
\end{align*}}

\subsubsection{Invariant Policies}

We next impose an invariance property on the class of policies.
Under this restriction the leakage rate expression will simplify substantially.
Subsequently we will show that the optimal policy belongs to this restricted class.

\begin{definition}[Invariance Property]
  For a given distribution $\theta_1$ of the initial battery state, a
  constant-distribution policy $b \in \ALPHABET B$ is called \edit{an} \emph{invariant} policy if for all~$t$, $\theta_t = \theta_1$ and $\xi_t = \xi_1$,
  \edit{where $\xi_1$ and $\theta_1$ satisfy~\eqref{eq:xi_theta}}.
\label{def:inv}
\end{definition}

\begin{remark}\label{rem:iid}
  An immediate implication of the above definition is that
  under any invariant policy~$\mathbf b$, the conditional distribution
  $\PR^{\mathbf b}(X_t, S_t, Y_t | Y^{t-1})$ is the same as the joint
  distribution $\PR^{\mathbf b}(X_1, S_1, Y_1)$. Marginalizing over $(X,S)$ we
  get that $\{Y_t\}_{t \ge 1}$ is an i.i.d.\ sequence.
\end{remark}

\begin{lemma} \label{lem:structured}
  If the system starts with an initial distribution $\theta$ of the battery
  state, and \edit{$\xi$ and $\theta$ satisfy~\eqref{eq:xi_theta}}, then an invariant policy
  $\mathbf b = (b, b, \dots)$ corresponding to $(\theta, \xi)$ achieves a
  leakage rate 
  \[
    L_T(\mathbf b) = I(W_1; Y_1) = I(b;\xi)
  \]
  for any horizon~$T$.
\end{lemma}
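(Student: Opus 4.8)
The plan is to exploit the invariance property directly to collapse the multi-letter leakage expression into a single-letter one. By Proposition~\ref{prop:A-B}, since $\mathbf b \in \ALPHABET B \subset \ALPHABET Q_B$, the finite-horizon leakage rate admits the additive form
\[
  L_T(\mathbf b) = \frac 1T \sum_{t=1}^T I^{\mathbf b}(X_t, S_t; Y_t \mid Y^{t-1}).
\]
So the whole argument reduces to showing that each summand equals $I(W_1; Y_1)$, independent of $t$, after which the $1/T \sum_{t=1}^T$ averaging trivially yields the claimed constant.

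First I would invoke Remark~\ref{rem:iid}: under an invariant policy, the conditional law $\PR^{\mathbf b}(X_t, S_t, Y_t \mid Y^{t-1})$ coincides with the unconditioned joint law $\PR^{\mathbf b}(X_1, S_1, Y_1)$ for every $t$. This is the crux of the reduction. Because the conditional distribution of $(X_t,S_t,Y_t)$ given $Y^{t-1}$ does not actually depend on the realization $y^{t-1}$, the conditional mutual information $I^{\mathbf b}(X_t, S_t; Y_t \mid Y^{t-1})$ equals the \emph{unconditional} mutual information $I^{\mathbf b}(X_1, S_1; Y_1)$. Concretely, I would write $I^{\mathbf b}(X_t,S_t; Y_t \mid Y^{t-1}) = \EXP_{Y^{t-1}}\big[ I^{\mathbf b}(X_t,S_t;Y_t \mid Y^{t-1} = y^{t-1}) \big]$ and observe that the inner quantity is computed entirely from the fixed distribution $\PR^{\mathbf b}(X_1,S_1,Y_1)$, hence equals $I(X_1,S_1;Y_1)$ regardless of $y^{t-1}$, so the expectation is the same constant.

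Next I would argue the data-processing step that replaces $(X_1,S_1)$ by $W_1 = S_1 - X_1$. Under a policy $b \in \ALPHABET B$, the output $Y_1$ depends on $(X_1,S_1)$ only through $W_1 = S_1 - X_1$, i.e.\ we have the Markov chain $(X_1,S_1) - W_1 - Y_1$. This is exactly the structure established in the proof of Lemma~\ref{lem:iid}, part~2), where $I(\tilde a;\pi) = I^{\tilde a}(X,S;Y) = I^{\tilde a}(W;Y)$ precisely because the action factors through $W$. Since the initial state satisfies $\pi_1(x,s) = P_X(x)\theta(s)$ and $W_1 \sim \xi$ with $\xi,\theta$ related by~\eqref{eq:xi_theta}, the chain rule gives $I(X_1,S_1;Y_1) = I(W_1;Y_1) + I(X_1,S_1;Y_1 \mid W_1)$, and the last term vanishes by the Markov property. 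Finally, $I(W_1;Y_1)$ is by definition $I(b;\xi)$, which completes the chain of equalities.

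The main obstacle—really the only nontrivial point—is justifying rigorously that the per-step conditional mutual informations are genuinely identical across all $t$ and do not merely share a common marginal. This rests entirely on the invariance hypothesis of Definition~\ref{def:inv}, which forces $\theta_t = \theta_1$ and $\xi_t = \xi_1$ for all $t$, so that the belief state never drifts; I would want to check that invariance controls not just the marginals $\theta_t,\xi_t$ but the full conditional joint $\PR^{\mathbf b}(X_t,S_t,Y_t\mid Y^{t-1})$, which is where Remark~\ref{rem:iid} does the heavy lifting. Once that stationarity of the conditional law is in hand, the averaging over $t$ is immediate and the horizon $T$ drops out, yielding $L_T(\mathbf b) = I(W_1;Y_1) = I(b;\xi)$ for every $T$ as claimed.
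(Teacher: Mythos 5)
Your proof is correct, but it follows a different route from the paper's primary proof. The paper proves this lemma as a near-immediate corollary of the dynamic program of Theorem~\ref{thm:iid-DP}: that theorem already establishes that any policy with $b_t \in \ALPHABET B$ has performance $L_T(\mathbf b) = \frac 1T \EXP\big[\sum_{t=1}^T I(b_t;\xi_t)\big]$, so invariance (Definition~\ref{def:inv}) freezes $\xi_t = \xi$ and every summand is $I(b;\xi)$; no data-processing step is needed because the reduction from $(X_t,S_t)$ to $W_t$ was already absorbed into that theorem via Lemma~\ref{lem:iid}. Your argument instead bypasses the DP machinery entirely: you use the additive decomposition of Proposition~\ref{prop:A-B}, the stationarity of the conditional joint law $\PR^{\mathbf b}(X_t,S_t,Y_t \mid Y^{t-1})$ from Remark~\ref{rem:iid}, and then the Markov chain $(X_1,S_1) - W_1 - Y_1$ to collapse $I(X_1,S_1;Y_1)$ to $I(W_1;Y_1) = I(b;\xi)$. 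The paper itself records essentially this derivation in the remark immediately following its proof (flagging it as independent of Theorem~\ref{thm:iid-DP}), so your route is sanctioned but not the one the paper foregrounds. What each buys: the DP route is a one-liner given the already-developed sequential-decision framework, and it situates the lemma inside the machinery later used for the converse; your route is more elementary and self-contained, requiring only Proposition~\ref{prop:A-B} and the invariance definition, and it makes explicit the data-processing equality $I(X_1,S_1;Y_1) = I(W_1;Y_1)$ that the paper's remark asserts without justification. One cosmetic point: $\ALPHABET B$ is a set of kernels in $\ALPHABET P_{Y|W}$, not of policies, so strictly one should say that the constant-distribution policy induced by $b$ (i.e., $q_t(y_t \mid x_t,s_t,y^{t-1}) = b(y_t \mid s_t - x_t)$) lies in $\ALPHABET Q_B$, rather than writing $\ALPHABET B \subset \ALPHABET Q_B$; this does not affect the argument.
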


\begin{proof}
The proof is a simple corollary of the invariance property in Definition~\ref{def:inv}.
Recall from the dynamic program of Theorem~\ref{thm:iid-DP}, that the
  performance of any policy $\mathbf b = (b_1, b_2, \dots)$ such that $b_t \in
  \ALPHABET B$, is given by
  \[
    L_T(\mathbf b) = \frac 1T \EXP\bigg[ \sum_{t=1}^T I(b_t; \xi_t) \bigg].
  \]
  Now, we start with an initial distribution $\xi_1 = \xi$ and follow the
  constant-distribution structured policy $\mathbf b = (b, b, \dots)$. Therefore,
  by \edit{Definition~\ref{def:inv}}, $\xi_t = \xi$ for all~$t$. Hence, the
  leakage rate under policy $\mathbf b$ is 
  \begin{equation}
    L_T(\mathbf b) = I(b; \xi).
    \tag*{\qedhere}
  \end{equation}

\end{proof}

\begin{remark}
Note that Lemma~\ref{lem:structured} can be easily derived independently of Theorem~\ref{thm:iid-DP}. 
From  Proposition~\ref{prop:A-B}, we have that:
  \begin{equation} \label{eq:I-b-1}
    L_T(\mathbf b) = \frac 1T \sum_{t=1}^T I^{\mathbf b}(S_t, X_t; Y_t |
    Y^{t-1}).
  \end{equation}
  From Remark~\ref{rem:iid}, we have that $I^{\mathbf b}(S_t,
  X_t; Y_t | Y^{t-1}) = I^{b}(S_1, X_1; Y_1) = I^{b}(W_1; Y_1)$, which immediately results in
Lemma~\ref{lem:structured}.
\end{remark}

For invariant policies we can express the leakage rate in the following fashion, which is useful in the proof of optimality.

\begin{lemma} \label{lem:MI}
  For any invariant policy $b$, 
  \[ I^{b}(W_1; Y_1) = I^{b}(W_1; X_1). \]
\end{lemma}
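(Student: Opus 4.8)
The plan is to show that both $I^{b}(W_1;X_1)$ and $I^{b}(W_1;Y_1)$ equal the same quantity, namely $H(W_1) - H(S_1)$, so that the two must coincide. The only two ingredients are the deterministic relations $W_1 = S_1 - X_1$ and $S_2 = W_1 + Y_1$, which link the relevant variables by additive shifts, together with the invariance property of Definition~\ref{def:inv}, which I would use to pin down the conditional law of $S_2$ given $Y_1$.

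First I would evaluate $I^{b}(W_1;X_1) = H(W_1) - H(W_1 \mid X_1)$. Since $W_1 = S_1 - X_1$, conditioning on $X_1 = x_1$ makes $W_1$ a deterministic shift of $S_1$; because $S_1$ and $X_1$ are independent with $S_1 \sim \theta_1$, the conditional law of $W_1$ given $X_1 = x_1$ is $\theta_1$ shifted by $-x_1$, whose entropy equals $H(\theta_1) = H(S_1)$ for every $x_1$. Averaging over $x_1$ gives $H(W_1 \mid X_1) = H(S_1)$, so $I^{b}(W_1;X_1) = H(W_1) - H(S_1)$. (This also recovers the identity $I(S-X;X) = H(S-X) - H(S)$ used in Theorem~\ref{thm:iid}.)

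Next I would evaluate $I^{b}(W_1;Y_1) = H(W_1) - H(W_1 \mid Y_1)$. The decisive step is invariance: specializing Definition~\ref{def:inv} to $t = 2$ gives $\theta_2 = \theta_1$ as a $\ALPHABET P_S$-valued random variable. Under a constant-distribution policy $\theta_2(s) = \PR(S_2 = s \mid Y_1 = y_1)$, so $\theta_2 = \theta_1$ forces this posterior to equal $\theta_1$ for every realization $y_1$; equivalently, $S_2 = W_1 + Y_1$ is independent of $Y_1$ with $S_2 \sim \theta_1$. Then, given $Y_1 = y_1$, the relation $W_1 = S_2 - y_1$ exhibits $W_1$ as a deterministic shift of $S_2$, whence $H(W_1 \mid Y_1 = y_1) = H(S_2 \mid Y_1 = y_1) = H(\theta_1) = H(S_1)$ for every $y_1$. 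Averaging gives $H(W_1 \mid Y_1) = H(S_1)$, so $I^{b}(W_1;Y_1) = H(W_1) - H(S_1)$.

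Comparing the two results yields $I^{b}(W_1;Y_1) = I^{b}(W_1;X_1)$, as claimed. I expect the only genuine obstacle to be the correct use of invariance in the third step: one must recognize that the statement $\theta_2 = \theta_1$, which a priori only asserts time-invariance of a posterior distribution, is precisely what makes $S_2$ independent of $Y_1$ and thereby collapses $H(W_1 \mid Y_1)$ to $H(S_1)$. Everything else is routine bookkeeping relying on shift-invariance of entropy and the independence of $X_1$ and $S_1$.
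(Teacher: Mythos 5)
Your proof is correct and is essentially the paper's own argument: the paper proves the lemma by a single chain of equalities $I^b(W_1;Y_1) = H(W_1) - H(S_2\mid Y_1) = H(W_1)-H(S_1) = H(W_1)-H(W_1\mid X_1) = I^b(W_1;X_1)$, which passes through exactly the common value $H(W_1)-H(S_1)$ you identify and uses the same three ingredients (the conservation equation for the shift $W_1+Y_1=S_2$, the invariance property to conclude $H(S_2\mid Y_1)=H(S_1)$, and the independence of $S_1$ and $X_1$). The only difference is presentational: you split the chain at its midpoint into two halves, and you spell out more explicitly why $\theta_2=\theta_1$ makes $S_2$ independent of $Y_1$, a step the paper justifies with the single phrase ``because $b$ is invariant.''
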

\begin{proof}
  Consider the following sequence of simplifications:
  \begin{align*}
    I^b(W_1; Y_1) 
    &= H^b(W_1) - H^b(W_1 | Y_1) \\
    &= H^b(W_1) - H^b(W_1 + Y_1 | Y_1) \\
    &\stackrel{\text{(a)}}= H^b(W_1) - H^b(S_2 | Y_1) \\
    &\stackrel{\text{(b)}}= H^b(W_1) - H^b(S_1) \\
    &\stackrel{\text{(c)}}= H^b(W_1) - H^b(S_1 | X_1) \\
    &\stackrel{\text{(d)}}= H^b(W_1) - H^b(W_1 | X_1) \\
    &= I^b(W_1; X_1).
  \end{align*}
  where~(a) is due to the battery update equation~\eqref{eq:conservation};
  (b)~is because $b$ is an invariant ; (c)~is because $S_1$ and
  $X_1$ are independent; and (d)~is because $S_1 = W_1 + X_1$.
\end{proof}

\subsubsection{Structured Policy}

We now introduce a class of policies that satisfy the invariance property in Def.~\ref{def:inv}.
This will be then used in the proof of Theorem~\ref{thm:iid}.
\begin{definition}[Structured Policy] \label{def:structured}
  Given $\theta \in \ALPHABET P_S$ and $\xi \in \ALPHABET P_W$, a
  constant-distribution policy $\mathbf b = (b, b, \dots)$ is called a
  \emph{structured policy} with respect to $(\theta, \xi)$ if:
  \[
    b(y|w) = \begin{cases}
      P_X(y) \frac{\theta(y+w)}{\xi(w)},
      & y \in \ALPHABET X \cap \ALPHABET Y_\circ(w) \\
      0, & \text{otherwise}.
    \end{cases}
  \]
\end{definition}

Note that it is easy to verify that the distribution~$b$ defined above is a
valid conditional probability distribution. 

\begin{lemma}\label{lem:structure-invariance}
  For any $\theta \in \ALPHABET P_S$ and $\xi \in \ALPHABET P_W$, the
  structured policy $\mathbf b = (b,b,\dots)$ given in
  Def.~\ref{def:structured} is an invariant  policy.
\end{lemma}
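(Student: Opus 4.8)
The plan is to show that the belief state is a fixed point of the filtering recursion driven by the structured policy, so that it never changes no matter which observation occurs. Since $\mathbf b = (b,b,\dots)$ is a constant-distribution policy, Lemma~\ref{lem:xi} gives $\xi_{t+1} = \tilde\varphi(\xi_t, y_t, b)$ for every realized $y_t$. Starting from $\xi_1 = \xi$ (the distribution satisfying~\eqref{eq:xi_theta} together with $\theta$), it therefore suffices to verify the single fixed-point identity $\tilde\varphi(\xi, y, b) = \xi$ for every $y$ occurring with positive probability; invariance for all $t$ then follows by a one-line induction.

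The heart of the argument is an explicit computation of $\tilde\varphi$. From Definition~\ref{def:structured}, for $y \in \ALPHABET X \cap \ALPHABET Y_\circ(w)$ we have the crucial product identity $b(y\mid w)\,\xi(w) = P_X(y)\,\theta(y+w)$, which decouples the dependence on $y$ from that on $w$. Substituting this into the denominator of $\tilde\varphi$ and reindexing $s = y + w$ over $\ALPHABET S$ gives $\sum_{w} b(y\mid w)\xi(w) = P_X(y)\sum_{s \in \ALPHABET S}\theta(s) = P_X(y)$. For the numerator at a target $w_+$, the indicator $\IND_{w_+}\{y+w-x\}$ forces $w = w_+ - y + x$, so that $y + w = w_+ + x$ and the numerator collapses to $P_X(y)\sum_{x \in \ALPHABET X} P_X(x)\,\theta(w_+ + x)$. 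Dividing, the factor $P_X(y)$ cancels and I am left with $\tilde\varphi(\xi,y,b)(w_+) = \sum_{x}P_X(x)\theta(w_+ + x) = \xi(w_+)$, where the last equality is exactly~\eqref{eq:xi_theta} after setting $s = w_+ + x$ in $\ALPHABET D(w_+)$. This establishes the fixed-point property.

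It remains to handle the $\theta_t$ half of Definition~\ref{def:inv}, since $\xi_t = \xi$ by itself need not pin down $\theta_t$ (the map $\theta \mapsto \xi$ is a correlation against $P_X$ and need not be injective). I would close this gap with a parallel computation on $\theta_t$: using $S_{t+1} = W_t + Y_t$, the posterior of $W_t$ given $Y_t = y$ equals $\theta(w+y)$ (the same Bayes step, whose normalizer is again $P_X(y)$), hence $\theta_{t+1}(s) = \PR(W_t = s - y \mid Y_t = y) = \theta(s)$, i.e.\ $\theta_{t+1} = \theta$. Induction then gives $\theta_t = \theta$ for all $t$, and~\eqref{eq:xi_theta} recovers $\xi_t = \xi$, so both conditions of Definition~\ref{def:inv} hold.

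The only genuine obstacle is the bookkeeping in the fixed-point computation: recognizing that the structured policy is engineered precisely so that $b(y\mid w)\xi(w) = P_X(y)\theta(y+w)$, which is what makes the normalizing constant telescope to $P_X(y)$ and the reindexing $y+w = w_+ + x$ go through cleanly. Everything else—the induction and the passage between $\theta$ and $\xi$—is routine.
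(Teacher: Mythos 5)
Your proof is correct, and its engine is the same identity that drives the paper's proof: the structured policy is engineered so that $b(y\mid w)\,\xi(w) = P_X(y)\,\theta(y+w)$, i.e., the joint law of $(W_t, Y_t)$ (equivalently, of $(S_{t+1}, Y_t)$) factorizes as a product of $P_X$ and $\theta$. The difference is organizational. The paper first observes, via Lemma~\ref{lem:equiv}, that $\xi_t$ is a function of $\theta_t$, so for a time-homogeneous policy invariance reduces to the single check $\theta_2 = \theta_1$; it then performs exactly the Bayes computation of your third paragraph: $\PR^{\mathbf b}(S_2 = s_2, Y_1 = y_1) = \xi_1(s_2 - y_1)\, b(y_1 \mid s_2 - y_1) = P_X(y_1)\,\theta_1(s_2)$, whence $Y_1 \sim P_X$ and $\theta_2 = \theta_1$. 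Your fixed-point verification $\tilde\varphi(\xi, y, b) = \xi$ is consequently redundant: once $\theta_{t+1} = \theta$ is in hand, $\xi_{t+1} = \xi$ follows from Lemma~\ref{lem:equiv}, which holds at every $t$ under Assumption~(A), and no injectivity of the map $\theta \mapsto \xi$ is ever needed because the implication is only used in the direction $\theta \to \xi$. Your instinct that $\xi_t = \xi$ alone might not pin down $\theta_t$ is precisely why the paper runs the argument on $\theta$ rather than on $\xi$; you resolved the same worry by doing both computations, which is perfectly valid but costs the extra filter bookkeeping. (As an aside, the map $\theta \mapsto \xi$ is in fact injective---in the transform domain it is multiplication by a nonzero Laurent polynomial---so either variable could in principle serve as the induction variable.)
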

\begin{proof}
 \edit{ Since $(\theta_t, \xi_t)$ are related according to Lemma~\ref{lem:equiv}}, in
  order to check whether a policy is invariant it is sufficient to check
  that $\theta_t = \theta_1$ for all~$t$. Furthermore, to check if a
  \emph{time-homogeneous} policy is an invariant policy, it is sufficient to
  check that either $\theta_2 = \theta_1$. 

  Let the initial distributions $(\theta_1, \xi_1) = (\theta, \xi)$ and the
  system variables be defined as usual. Now consider a realization $s_2$ of
  $S_2$ and $y_1$ of $Y_1$. This means that $w_1 = s_2 - y_1$. Since $Y_1$ is
  chosen according to \edit{$b(\cdot|w_1)$}, it must be that $y_1 \in \ALPHABET
  X \cap \ALPHABET Y_\circ(w_1)$. Therefore, 
  \begin{align}
    \PR^{\mathbf b}(S_2 = s_2, Y_1 = y_1) &= 
    \PR^{\mathbf b}(S_2 = s_2, Y_1 = y_1, W_1 = s_2 - y_1) \notag \\
    &= \xi_1(s_2 - y_1) b(y_1 | s_2 - y_1) \notag \\
    &= P_X(y_1) \theta_1(s_2),
  \end{align}
  where in the last equality we use the fact that $y_1 \in \ALPHABET X \cap
  \ALPHABET Y_\circ(s_2 - y_1)$.  Note that if $y_1 \not\in \ALPHABET X \cap
  \ALPHABET Y_\circ(s_2 - y_1)$, then $\PR^{\mathbf b}(S_2 = s_2, Y_1 = y_1) =
  0$.  Marginalizing over $s_2$, we get $\PR^{\mathbf b}(Y_1 = y_1) = P_X(y_1)$. 
  
  Consequently, $\theta_2(s_2) = \PR^{\mathbf b}(S_2 = s_2 | Y_1 = y_1) =
  \theta_1(s_2)$. Hence, $\mathbf b$ is invariant as required. 
\end{proof}

\begin{remark}\label{rem:PY=PX}
  As argued in Remark~\ref{rem:iid}, under any invariant policy, $\{Y_t\}_{t
  \ge 1}$ is an i.i.d.\ sequence. As argued in the proof of
  Lemma~\ref{lem:structure-invariance}, for a structured policy the marginal
  distribution of $Y_t$ is $P_X$. Thus, an eavesdropper cannot statistically
  distinguish between $\{X_t\}_{t \ge 1}$ and $\{Y_t\}_{t \ge 1}$.
\end{remark}

\begin{proposition}
  Let $\theta^*$, $\xi^*$, and $b^*$ be as defined in Theorem~\ref{thm:iid}.
  Then,
  \begin{enumerate}
    \item \edit{$(\theta^*, \xi^*)$ satisfy~\eqref{eq:xi_theta}};
    \item $b^*$ is a structured policy with respect to $(\theta^*, \xi^*)$.
    \item If the system starts in the initial battery state $\theta^*$ and
      follows the constant-distribution policy $\mathbf b^* = (b^*,
      b^*,\dots)$, the leakage rate  is given by $J^*$.
  \end{enumerate}
  Thus, the performance $J^*$ is achievable. 
\end{proposition}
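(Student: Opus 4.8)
The plan is to observe that parts (1) and (2) hold essentially by unwinding definitions, and to reserve the real work for part (3), which I would assemble from the invariance machinery already established. For part (1), the quantity $\xi^*(w) = \sum_{(x,s) \in \ALPHABET D(w)} P_X(x)\theta^*(s)$ defined in Theorem~\ref{thm:iid} is literally the right-hand side of~\eqref{eq:xi_theta} with $\theta_t$ taken to be $\theta^*$; hence the pair $(\theta^*, \xi^*)$ satisfies~\eqref{eq:xi_theta} by construction. For part (2), I would place the definition of $b^*$ in~\eqref{eq:bstar} next to the definition of a structured policy in Definition~\ref{def:structured}: both assign $b(y\mid w) = P_X(y)\,\theta(y+w)/\xi(w)$ on $\ALPHABET X \cap \ALPHABET Y_\circ(w)$ and $0$ elsewhere. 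Specializing to $(\theta,\xi) = (\theta^*,\xi^*)$ they coincide, so $b^*$ is exactly the structured policy with respect to $(\theta^*,\xi^*)$.

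For part (3), the plan is to chain four earlier results. First, since $b^*$ is a structured policy (part (2)), Lemma~\ref{lem:structure-invariance} shows that $\mathbf b^* = (b^*, b^*, \dots)$ is an invariant policy for the initial battery distribution $\theta^*$. Second, Lemma~\ref{lem:structured} then gives, for every horizon $T$, the single-letter leakage rate $L_T(\mathbf b^*) = I(W_1; Y_1) = I(b^*; \xi^*)$, and since this value is independent of $T$, the infinite-horizon rate $L_\infty(\mathbf b^*)$ equals the same quantity. Third, I would invoke Lemma~\ref{lem:MI}, which holds for any invariant policy, to trade the output for the demand: $I^{b^*}(W_1; Y_1) = I^{b^*}(W_1; X_1)$. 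Finally, I would identify this with the optimization defining $J^*$ in~\eqref{eq:opt}.

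The only step requiring any care is the bookkeeping that ties the conditional mutual information back to the objective of~\eqref{eq:opt}. Starting the battery in $\theta^*$ makes $S_1 \sim \theta^*$ independent of $X_1 \sim P_X$, so that $W_1 = S_1 - X_1$ and
\[
  I^{b^*}(W_1; X_1) = I(S_1 - X_1; X_1),
\]
which is precisely the objective in~\eqref{eq:opt} evaluated at the minimizer $\theta = \theta^*$, and therefore equals $J^*$. Combining the chain yields $L_T(\mathbf b^*) = J^*$ for every $T$ and hence $L_\infty(\mathbf b^*) = J^*$, establishing that the performance $J^*$ is achievable. I do not expect a genuine obstacle here: all of the analytic content—the invariance of structured policies, the reduction to a single-letter expression, and the identity converting $I(W_1;Y_1)$ into $I(W_1;X_1)$—has already been carried out in the preceding lemmas, so the proposition reduces to matching notation and verifying that the independence and marginal hypotheses of those lemmas indeed hold under the stated initialization.
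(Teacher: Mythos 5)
Your proposal is correct and follows the paper's own argument: parts (1) and (2) are definitional, and part (3) chains Lemma~\ref{lem:structure-invariance} (structured implies invariant) with Lemmas~\ref{lem:structured} and~\ref{lem:MI} to get $L_T(\mathbf b^*) = I(b^*;\xi^*) = I(W_1;X_1) = J^*$. Your write-up is merely more explicit than the paper's one-line proof in spelling out the invariance step and the final identification of $I(S_1 - X_1; X_1)$ with the minimum in~\eqref{eq:opt}.
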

\begin{proof}
  The proofs of parts 1) and 2) follows from the definitions. The proof of
  part~3) follows from Lemmas~\ref{lem:structured} and~\ref{lem:MI}.
\end{proof}

This completes the proof of the achievability of Theorem~\ref{thm:iid}.

\subsection{Strong achievability}
\label{sec:strong_achievability}
\begin{lemma}\label{lem:converge}
  Assume that for any $x \in \ALPHABET X$, $P_X(x) > 0$. 
  Let $(\theta^\circ, \xi^\circ)$ \edit{be a pair satisfying~\eqref{eq:xi_theta}} and $\mathbf b^\circ =
  (b^\circ, b^\circ, \dots)$ be the corresponding structured policy. 
  
  Assume that $\theta^\circ \in \mathit{int}(\ALPHABET P_S)$ or equivalently,
  for any $w \in \ALPHABET W$ and $y \in \ALPHABET X \cap \ALPHABET
  Y_\circ(w)$, $b^\circ(y|w) > 0$. Suppose the system starts in the initial
  state $(\theta_1, \xi_1)$ and follows policy $\mathbf b^\circ$.
  Then:
  \begin{enumerate}
    \item the process $\{\theta_t\}_{\ge 1}$ converges weakly to
      $\theta^\circ$;
    \item the process $\{\xi_t\}_{\ge 1}$ converges weakly to $\xi^\circ$; 
    \item for any continuous function $c \colon \ALPHABET P_W
      \to \reals$, 
      \begin{equation}\label{eq:limit}
        \lim_{T \to \infty} \frac 1T \sum_{t=1}^T E[c(\xi_t)] = c(\xi^\circ).
      \end{equation}
    \item Consequently, the infinite horizon leakage rate under $\mathbf
      b^{\circ}$~is 
      \[
        L_\infty(\mathbf b^{\circ}) = I(b^\circ, \xi^\circ).
      \]
  \end{enumerate}
\end{lemma}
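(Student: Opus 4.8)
The plan is to derive everything from part~1, on which all the real work sits; parts~2--4 are soft consequences. So I would first reduce the statement to proving the weak convergence $\theta_t \Rightarrow \theta^\circ$.

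First I would make the belief recursion explicit for the structured policy. Substituting $b^\circ$ from Definition~\ref{def:structured} into the filter of Lemma~\ref{lem:xi} (or computing $\PR^{\mathbf b^\circ}(S_{t+1}=s'\mid Y^t)$ directly) gives the tilted form
\[
  \theta_{t+1}(s') = \frac{\theta^\circ(s')\, r_t(s'-y_t)}
  {\sum_{s''}\theta^\circ(s'')\, r_t(s''-y_t)},
  \qquad r_t(w) := \frac{\xi_t(w)}{\xi^\circ(w)},
\]
where, by Lemma~\ref{lem:equiv}, $r_t(w)=\sum_{x}\mu_w(x)\,\phi_t(w+x)$ is a convex average of the likelihood ratios $\phi_t:=\theta_t/\theta^\circ$ with strictly positive weights $\mu_w(x)=P_X(x)\theta^\circ(w+x)/\xi^\circ(w)$. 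The hypotheses of the lemma, $P_X>0$ and $\theta^\circ\in\mathit{int}(\ALPHABET P_S)$, are exactly what make every weight positive, and Lemma~\ref{lem:structure-invariance} already identifies $\theta^\circ$ as a fixed point of this recursion.

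The heart of the argument is that this random recursion drives $\theta_t$ to $\theta^\circ$, and I would prove it by tracking the projective (Hilbert) gauge $M_t/m_t$ with $M_t=\max_s\phi_t(s)$, $m_t=\min_s\phi_t(s)$. Since each $r_t(w)$ is a convex average of the $\phi_t$-values over the window $\{w,\dots,w+m_x\}$ and $\theta_{t+1}$ is a renormalized selection of the $r_t$, the ratio $M_t/m_t$ is non-increasing along every sample path. The substantive step is \emph{strict} contraction: because $m_x\ge 1$ these windows overlap and connect all of $\ALPHABET S$, so the belief map composed over a fixed number $n_0$ of steps acts through a strictly positive kernel, whose Birkhoff contraction coefficient is below one; this forces $M_t/m_t\to 1$, i.e.\ $\phi_t\to\mathbf{1}$ and $\theta_t\to\theta^\circ$. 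The main obstacle is the boundary of the simplex: if $\theta_t$ has a zero coordinate then $m_t=0$, the gauge is infinite, and the contraction estimate is vacuous until the belief enters $\mathit{int}(\ALPHABET P_S)$. I would dispatch this by showing the belief reaches the interior in finite time almost surely---at each step the support-spreading observation has probability bounded below, so the event of remaining on the boundary has vanishing probability. Equivalently, one may note that under the stated positivity the hidden chain $\{S_t\}$ has a primitive transition kernel and invoke the corresponding exponential filter-stability results, together with the special feature visible in Lemma~\ref{lem:structure-invariance} that at $\theta^\circ$ the observations are uninformative about the state, which is precisely why the attracting point is the \emph{constant} $\theta^\circ$. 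This gives part~1; part~2 then follows at once, since $\xi_t$ is the image of $\theta_t$ under the continuous linear map of Lemma~\ref{lem:equiv}, so $\theta_t\Rightarrow\theta^\circ$ yields $\xi_t\Rightarrow\xi^\circ$.

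Finally, parts~3 and~4 are routine. For any continuous $c\colon\ALPHABET P_W\to\reals$, continuity on the compact simplex makes $c$ bounded, so weak convergence gives $\EXP[c(\xi_t)]\to c(\xi^\circ)$, and the Ces\`aro average of a convergent sequence shares its limit, which is exactly~\eqref{eq:limit}. Taking $c(\xi)=I(b^\circ;\xi)$, which is continuous in $\xi$, and recalling from Proposition~\ref{prop:A-B} and Theorem~\ref{thm:iid-DP} that $L_T(\mathbf b^\circ)=\frac1T\sum_{t=1}^T\EXP[I(b^\circ;\xi_t)]$, the limit superior in~\eqref{eq:leakage-infinite} is in fact a genuine limit equal to $I(b^\circ;\xi^\circ)$, establishing part~4.
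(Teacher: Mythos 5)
Your reduction of parts 2--4 to part 1 is fine (and matches the paper), your tilted recursion $\theta_{t+1}(s') \propto \theta^\circ(s')\,r_t(s'-y_t)$ is correct, and so is the pathwise monotonicity of the gauge $M_t/m_t$. The gap is in the one step that carries all the weight: the claim that ``the belief map composed over a fixed number $n_0$ of steps acts through a strictly positive kernel.'' This is false, because it ignores that the realized observation $y_t$ selects which entries of $r_t$ survive into $\theta_{t+1}$, and an unfavorable observation string can keep the composed kernel degenerate forever. Concretely, in the paper's own binary example ($\ALPHABET X = \ALPHABET Y = \ALPHABET S = \{0,1\}$, $P_X$ equiprobable, $\theta^\circ = (\tfrac12,\tfrac12)$), the unnormalized one-step filter map for the observation $y=1$ has matrix $\bigl(\begin{smallmatrix} 1/2 & 0 \\ 1/4 & 1/4 \end{smallmatrix}\bigr)$ (rows indexed by $s'$, columns by $s$), which is lower triangular; along the observation string $1,1,1,\dots$ every finite composition stays lower triangular, is never strictly positive, and has Birkhoff contraction coefficient equal to one. (In this instance the filter happens to converge along that string anyway, via the scalar iteration $p \mapsto 2p/(2p+1)$, but that is not delivered by the mechanism you invoke.) So no fixed $n_0$ works uniformly over observation sequences, and $M_t/m_t \to 1$ does not follow from Birkhoff's theorem as stated. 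The repair must be probabilistic: one has to show that observation blocks whose matrix product is strictly positive (or at least subrectangular) occur almost surely with conditional probability bounded below, and that each such block contracts the gauge by a factor bounded away from one. That probabilistic overlay is exactly what the paper gets from Kaijser's theorem: it sets $U_t = (S_t, Y_{t-1})$, exhibits one explicit string ($m_s$ ones followed by $m_s$ zeros) whose product of observation-restricted matrices is subrectangular and reachable with positive probability from every state, lets the theorem conclude weak convergence of the filter to a limit independent of the initialization, and then pins the limit to $\theta^\circ$ by the invariance property (which you also use).

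Your fallback sentence has the same hole: ``the hidden chain has a primitive transition kernel, invoke the corresponding exponential filter-stability results'' is not a theorem. Ergodicity or primitivity of the signal alone does not imply filter stability; every known sufficient condition couples the signal with the observation structure (one-step mixing kernels, or Kaijser-type subrectangularity of observation-matrix products). Here the kernel of $\{S_t\}$ under $b^\circ$ is banded --- its $(s,s')$ entry vanishes whenever $|s-s'| > m_x$ --- so mixing-based stability results do not apply when $m_s > m_x$, and the condition you would need to verify is precisely the one the paper verifies. The rest of your write-up (the interior of $\ALPHABET P_S$ being absorbing for the filter, almost-sure escape from the boundary, part 2 from part 1 via the continuous linear map of Lemma~\ref{lem:equiv}, and parts 3--4 by boundedness, Ces\`aro averaging, and $L_T(\mathbf b^\circ) = \frac1T\sum_{t=1}^T \EXP[I(b^\circ;\xi_t)]$) is sound.
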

\begin{proof}
  The proof of parts 1) and 2) is presented in Appendix~\ref{app:converge}.
  From~2), $\lim_{t \to \infty} \EXP[ c(\xi_t) ] = c(\xi^\circ)$, which
  implies~\eqref{eq:limit}. Part~4) follows from part~3) by setting $c(\xi_t)
  = I(b^\circ, \xi_t)$.
\end{proof}

Proposition~\ref{prop:convex} implies that $\theta^*$ defined in
Theorem~\ref{thm:iid} lies in $\mathit{int}(\ALPHABET P_S)$. Then, by
Lemma~\ref{lem:converge}, the constant-distribution policy $\mathbf b^* = (b^*,
b^*, \dots)$ (where $b^*$ is given by Theorem~\ref{thm:iid}), achieves the
leakage rate $I(b^*,\xi^*)$. By Lemma~\ref{lem:MI}, $I(b^*,\xi^*)$ is same as
$J^*$ defined in Theorem~\ref{thm:iid}. Thus, $J^*$ is achievable starting
from any initial state $(\theta_1, \xi_1)$. 

\subsection{Dynamic programming converse}

We provide two converses. One is based on the dynamic program of
Theorem~\ref{thm:iid-DP}, which is presented in this section; the other is
based purely on information theoretic arguments, which is presented in the
next section. 

In the dynamic programming converse, we show that for $J^*$ given in
Theorem~\ref{thm:iid}, $v^*(\xi) = H(\xi)$, and any $b \in \ALPHABET B$, 
\begin{equation} \label{eq:ACOI}
  J^* + v^*(\xi) \le \tBELL{b}{v^*}{\xi}, \quad \forall \xi \in 
\ALPHABET P_W,
\end{equation}
Since $H(\xi)$ is bounded, from~\cite[Lemma~5.2.5(b)]{hl:MDP}, we get that 
$J^*$ is a lower bound of the optimal leakage rate.

To prove~\eqref{eq:ACOI}, pick any $\xi \in \ALPHABET P_W$ and $b \in
\ALPHABET B$. Suppose $W_1 \sim \xi$, $Y_1 \sim b(\cdot | W_1)$, $S_2 = Y_1 +
W_1$, $X_2$ is independent of $W_1$ and $X_2 \sim P_X$ and $W_2 = S_2 - X_2$.
Then,
\begin{align}
  \tBELL{b}{v^*}{\xi} &= I(b;\xi) +
  \smashoperator{\sum_{(w_1, y_1) \in \ALPHABET W \times \ALPHABET Y}}
  \xi(w_1) b(y_1|w_1) v^*(\hat \varphi(\xi, y_1, b)) \notag \\
  &= I(W_1;Y_1) + H(W_2|Y_1)
  \label{eq:DPc-1}
\end{align}
where the second equality is due to the definition of conditional entropy. 
Consequently,
\begin{align} 
  \hskip 2em & \hskip -2em
  \tBELL{b}{v^*}{\xi} - v^*(\xi) = H(W_2|Y_1) - H(W_1|Y_1) \notag \\
  &= H(W_2 | Y_1) - H(W_1 + Y_1 | Y_1) \notag \\
  &\stackrel{\text{(a)}}= H(S_2 - X_2 | Y_1) - H(S_2 | Y_1)\notag \\
  &\stackrel{\text{(b)}}\ge \min_{\theta_2 \in \ALPHABET P_S} 
    \big[ H(\tilde S_2 - X_2) - H(\tilde S_2) \big],
    \qquad \tilde S_2 \sim \theta_2 
    \notag \\
  &= J^*
  \label{eq:DPc-2}
\end{align}
where (a)~uses $S_2 = Y_1 + W_1$ and $W_2 = S_2 - X_2$; (b)~uses the fact
that $H(A_1 | B) - H(A_1 - A_2 |B) \ge \min_{P_{A_1}} \big[ H(A_1) - H(A_1 -
A_2) \big]$ for any joint distribution on $(A_1, A_2, B)$.

The equality in~\eqref{eq:DPc-2} occurs when $b$ is an invariant policy
and $\theta_2$ is same as $\theta^*$ defined in Theorem~\ref{thm:iid}. For
\edit{$\xi$ that are not equivalent to $\theta^*$ via~\eqref{eq:xi_theta}}, the inequality
in~\eqref{eq:DPc-2} is strict. 

We have shown that Eq.~\eqref{eq:ACOI} is true. Consequently, $J^*$ is a
lower bound on the optimal leakage rate $\tilde J$.

\subsection{Information theoretic converse}
\label{subsec:IT_Conv}

Consider the following inequalities: for any admissible policy $\mathbf q \in
\mathcal Q_B$, we have
\begin{align}
  I(S_1,X^T;Y^T) 
  &= \sum_{t = 1}^{T} I(S_t,X_t;  Y_t | Y^{t-1})  \notag \\
  &\stackrel{\text{(a)}}\geq 
  \sum_{t = 1}^{T} I(W_t;  Y_t | Y^{t-1})  \label{eq:converse-1}
\end{align}
where (a) follows from \edit{the fact that $W_t = X_t- S_t$ is a deterministic function of $(X_t, S_t)$ and that the mutual information is non-negative.}

Now consider
\begin{align}
  \hskip 2em & \hskip -2em
  I(W_t; Y_t | Y^{t-1}) = H(W_t|Y^{t-1}) - H(W_t|Y^t) \notag \\
  &= H(W_t|Y^{t-1}) - H(W_t + Y_t | Y^t) \notag \\
  &\stackrel{\text{(b)}}= H(W_t|Y^{t-1}) - H(S_{t+1} | Y^t) \notag \\
  &\stackrel{\text{(c)}}= H(W_t|Y^{t-1}) - H(S_{t+1} | Y^t, X_{t+1}) \notag \\
  &= H(W_t|Y^{t-1}) - H(S_{t+1} - X_{t+1} | Y^t, X_{t+1}) \notag \\
  &\stackrel{\text{(d)}}= H(W_t|Y^{t-1}) - H(W_{t+1} | Y^t, X_{t+1}) 
  \label{eq:converse-2}
\end{align}
where (b)~follows from~\eqref{eq:conservation}; (c)~follows because of
assumption~(A); and (d)~also follows from~\eqref{eq:conservation}.

Substituting~\eqref{eq:converse-2} in~\eqref{eq:converse-1} (but expanding
the last term as $H(W_T|Y^{T-1}) - H(W_T |Y^T)$, we get
\begin{align}
  \hskip 1em & \hskip -1em
  I(S_1, X^T; Y^T) \ge \sum_{t = 1}^{T} I(W_t;  Y_t | Y^{t-1})  \\
&=  \sum_{t=1}^{T-1} \big[ H(W_t|Y^{t-1}) - H(W_{t+1} | Y^t, X_{t+1}) \big] \notag \\
     &\qquad \edit{+ H(W_T|Y^{T-1}) - H(W_T |Y^T) } \label{eq:indices_2_T_1} \\
    &= H(W_1) + \sum_{\edit{t=2}}^{\edit{T}} \big[  - H(W_t | Y^{t-1}, X_t) + H(W_t |
  Y^{t-1}) \big] \notag \\
  &\qquad - H(W_T | Y^T)  \label{eq:indices_2_T} \\
  &= H(W_1) + \sum_{t=2}^{T} I(W_t ; X_t | Y^{t-1}) - H(W_T | Y^T).
  \label{eq:converse-3}
\end{align}

Now, we take the limit $T \to \infty$ to obtain a lower bound to the leakage
rate: 
\begin{align*}  
  &L_\infty(\mathbf q) = \limsup_{T \rightarrow \infty}   
  \frac 1T I(S_1, X^T; Y^T) \\
  &\geq \limsup_{T \rightarrow \infty}   
  \frac 1T \bigg[ H(W_1) + \sum_{t=2}^{T} I(W_t ; X_t | Y^{t-1}) - H(W_T |
  Y^T)\bigg] \\
  &\stackrel{\text{(a)}}= {\edit{\limsup}_{T \rightarrow \infty}}   
  \frac 1T \left[ \sum_{t=2}^{T} I(W_t ; X_t | Y^{t-1}) \right] \\
  &\stackrel{\text{(b)}}\geq \min_{P_S \in \mathcal P_S} I(S - X;X) = J^*
\end{align*}
where (a) is because the entropy of any discrete random variable is bounded
\edit{and (b) follows from the fact that each term in the summation satisfies:
\begin{equation}
\label{eq:b_lb}
I(W_t ; X_t | Y^{t-1}) \ge \min_{P_S \in \mathcal P_S} I(S - X;X),
\end{equation}
which can be justified as follows. First note that for any policy ${\mathbf q}$ we have that:
$$I(W_t ; X_t | Y^{t-1}) = I(S_t- X_t; X_t | Y^{t-1})$$
depends on the joint distribution $P^{\mathbf q}_{S_t, X_t, Y^{t-1}}(s_t,x_t,y^{t-1})$ which  factors as:
\begin{align}
P^{\mathbf q}_{S_t, X_t, Y^{t-1}}(s_t,x_t,y^{t-1}) = P_X(x_t) P^{\mathbf q}_{S_t, Y^{t-1}}(s_t, y^{t-1})  \label{eq:P_Fac}
\end{align}
as $X_t$ is sampled i.i.d.\ from the distribution $P_X(\cdot)$ and from the state update equation~\eqref{eq:conservation}, we have that 
$S_t$ is a function of $(X^{t-1}, Y^{t-1})$, and thus independent of $X_t$. Now note that:
{\allowdisplaybreaks{\begin{align}
& I(W_t ; X_t | Y^{t-1}) \notag \\ &= \sum_{{y^{t-1} \in {\ALPHABET Y}^{t-1} }} I(S_t-X_t ; X_t | Y^{t-1} = y^{t-1})p(y^{t-1})\notag\\
&\ge \min_{y^{t-1} \in {\ALPHABET Y}^{t-1}} I(S_t-X_t ; X_t | Y^{t-1} = y^{t-1}) \notag \\
&\ge \min_{y^{t-1} \in {\ALPHABET Y}^{t-1}}  \min_{p_{S_t, X_t | Y^{t-1}}(\cdot| y^{t-1})} I(S_t-X_t ; X_t | Y^{t-1} = y^{t-1}) \notag\\
&\ge \min_{y^{t-1} \in {\ALPHABET Y}^{t-1}, {p_{S_t | Y^{t-1}}(\cdot| y^{t-1})}} I(S_t-X_t ; X_t | Y^{t-1} = y^{t-1}) \label{eq:x_is_ind}
\end{align}}}where~\eqref{eq:x_is_ind} follows from~\eqref{eq:P_Fac} so that $X_t$ is independent of $(S_t, Y^{t-1})$ and distributed according to $P_X(\cdot)$
Since~\eqref{eq:x_is_ind} is simply equivalent to minimizing over the distribution $P_S(\cdot)$, the relation in~\eqref{eq:b_lb} holds. This completes the proof.
 This shows that
$J^*$ is a lower bound to the minimum (infinite horizon) leakage rate.}

\section{Conclusions and Discussion}

In this paper, we study a smart metering system that uses a rechargeable
battery to partially obscure the user's power demand. Through a series of
reductions, we show that the problem of finding the best battery charging
policy can be recast as a Markov decision process. Consequently, the
optimal charging policies and the minimum information leakage rate are
given by the solution of an appropriate dynamic program.

For the case of i.i.d.\ demand, we provide an explicit characterization of the
optimal battery policy and the leakage rate.  In this special case it
suffices to choose a memoryless policy where the distribution of $Y_t$
depends only on $W_t$. Our achievability results rely on restricting
attention to a class of invariant policies. Under an invariant policy, the
consumption $\{Y_t\}_{t \ge 1}$ is i.i.d.\ and the leakage rate is
characterized by a single-letter mutual information expression. We then
further restrict attention to what we call structured policies under which
the marginal distribution of $\{Y_t\}_{t \ge 1}$ is~$P_X$. Thus, under the
structured policies, an eavesdropper cannot statistically distinguish between
$\{X_t\}_{t \ge 1}$ and $\{Y_t\}_{t \ge 1}$. We provide two converses; one is
based on the dynamic programming argument while the other is based on a
purely information theoretic argument. 

Extending of our MDP formulation to incorporate an additive cost, such as
the price of consumption, is rather immediate. However, the approach
presented in this work for explicitly characterizing the optimal leakage rate
in the i.i.d.\ case may not immediately extend to such general cost
functions. In another direction one can allow for a certain controlled wastage of energy drawn from the grid
to increase privacy. It would be interesting to see how the leakage rate decreases with the wasted energy.
The study of such problems, as well as finer implementation details of the
proposed system, remains an interesting future direction.

\appendices

\section{Proof of Property~\ref{prop:convex}} \label{app:convex}

For any ${\theta \in \mathit{int}(\mathcal P_S)}$ and ${\delta(s): \mathcal S
\rightarrow \reals}$ such that ${\sum_{s \in \mathcal S} \delta(s) = 0}$.
Let $\theta_\alpha(s) := \theta(s) + \alpha \delta(s)$. Then for small enough
$\alpha$, $\theta_\alpha \in \mathcal P_S$. Given such a $\theta_\alpha$, let
$\PR_{W,X}(w,x) = \PR_{W|X}(w|x)P_X(x) = \theta_{\alpha}(w+x)P_X(x)$. Then to
show that $I(W; X)$ is strictly convex on $\mathcal P_S$ we require
$\frac{d^2 I(W;X)}{d \alpha^2} > 0$. \edit{Due to independence of $X$ and $S$}, $I(W;X) =
H(W) - H(S)$. Therefore,
\begin{align*} 
  &\frac{d I(W;X)}{d \alpha} = \frac{d \left[- H(S) + H(W) \right]}{d \alpha} \\
  &=\sum_{\tilde s} \delta(\tilde s)\ln \theta_\alpha(\tilde s)
  - \sum_{w \in \mathcal W, s \in \mathcal S} 
  P_X(s - w) \delta(s) \ln P_W(w)  \\
  &\frac{d^2 I(W;X)}{d \alpha^2} = \sum_{s}
  \frac{\delta(s)^2}{\theta_\alpha(s)}  -\sum_{w \in \mathcal W}
  \frac{\left(\sum_{\tilde s \in \mathcal S}P_X(\tilde s - w)\delta(\tilde
s)\right)^2 } {P_W(w)}. 
\end{align*}
Let ${a_w(s) = \delta(s) \sqrt{\frac{P_X(s-w)}{\theta_{\alpha}(s)}}}$
and ${b_w(s) = \sqrt{\theta_{\alpha(s)}P_X(s-w)}}$. Using the Cauchy-Schwarz inequality, we can show that
\begin{align*} 
  &\frac{d^2 I(W;X)}{d \alpha^2} = \sum_{s}
  \frac{\delta(s)^2}{\theta_\alpha(s)}  -\sum_{w \in \mathcal W}
  \frac{\left(\sum_{\tilde s \in \mathcal S} a_w(\tilde s) b_w(\tilde s) \right)^2} {P_W(w)} \\
  &> \sum_{s} \frac{\delta(s)^2} {\theta_\alpha(s)}  
  -\sum_{w \in \mathcal W}
  \frac{\left(\sum_{\tilde s \in \mathcal S} a_w(\tilde s)^2 \right)
  \left(\sum_{\hat s \in \mathcal S} b_w(\hat s)^2 \right)} {P_W(w)} \\
  &= \sum_{s} \frac{\delta(s)^2} {\theta_\alpha(s)}  
  -\sum_{w \in \mathcal W}
  \left(\sum_{\tilde s \in \mathcal S} a_w(\tilde s)^2 \right) = 0.
\end{align*}

The strict inequality is because $a$ and $b$ cannot be linearly dependent. To
see this, observe that  $\frac{a(s)}{b(s)} = \frac{\delta(s)}{\theta(s) +
\alpha \delta(s)}$ cannot be equal to a constant for all $s \in \mathcal S$
since $\delta$ must contain negative as well as positive elements.

\section{Proof of Proposition \ref{prop:A-B}} \label{app:A-B}

The proof of Proposition~\ref{prop:A-B} relies on the following intermediate
results (which are proved later):
\begin{lemsec} \label{lem:a1}  
  For any $\mathbf {q} \in Q_A$,
  \[
    I^{\mathbf q}(S_1, X^T; Y^T) \ge \sum_{t=1}^T I^{\mathbf q}(X_t, S_t ; Y_t | Y^{t-1})
  \] 
  with equality if and only if $q \in Q_B$. 
\end{lemsec}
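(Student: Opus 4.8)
The plan is to combine the chain rule for mutual information with the fact that the present state $(X_t,S_t)$ is a deterministic function of the conditioning history, and then to identify the equality case with the defining structure of $\ALPHABET Q_B$. First I would decompose the left-hand side by the chain rule,
\[
  I^{\mathbf q}(S_1, X^T; Y^T) = \sum_{t=1}^T I^{\mathbf q}(S_1, X^T; Y_t \mid Y^{t-1}),
\]
and record the structural observation that drives everything: iterating the conservation equation~\eqref{eq:conservation} gives $S_t = S_1 + \sum_{i=1}^{t-1}(Y_i - X_i)$, so once a realization $Y^{t-1} = y^{t-1}$ is fixed, the pair $(X_t,S_t)$ is a deterministic function of $(S_1, X^T)$ (indeed of $(S_1, X^{t-1})$ and $X_t$).

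Using this, I would bound each summand by adjoining $(X_t,S_t)$ to the conditioning-determined variables and expanding:
\begin{align*}
  I^{\mathbf q}(S_1, X^T; Y_t \mid Y^{t-1})
  &= I^{\mathbf q}(S_1, X^T, X_t, S_t; Y_t \mid Y^{t-1}) \\
  &= I^{\mathbf q}(X_t, S_t; Y_t \mid Y^{t-1})
     + I^{\mathbf q}(S_1, X^T; Y_t \mid X_t, S_t, Y^{t-1}) \\
  &\ge I^{\mathbf q}(X_t, S_t; Y_t \mid Y^{t-1}),
\end{align*}
where the first equality holds because adjoining a deterministic function of the conditioning variables does not change mutual information, the second is the chain rule, and the inequality is nonnegativity of conditional mutual information. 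Summing over $t$ yields the asserted bound.

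For the equality statement, note that every summand obeys the displayed per-term inequality, so equality in the aggregate holds if and only if the residual term vanishes for each $t$, i.e. if and only if
\[
  Y_t \perp (S_1, X^T) \mid (X_t, S_t, Y^{t-1})
\]
on every history of positive probability. I would then show this conditional independence is exactly $\mathbf q \in \ALPHABET Q_B$. In the ``if'' direction, under a $\ALPHABET Q_B$ policy $Y_t$ is drawn from $q_t(\cdot \mid x_t, s_t, y^{t-1})$, and since the policy is causal $Y_t$ is in any case conditionally independent of the future demands $X_{t+1}^T$ given past and present; because $S^{t-1}$ is itself determined by $(S_1, X^{t-2}, Y^{t-2})$, nothing else in $(S_1,X^T)$ is left over, so the displayed independence follows. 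Conversely, the independence forces $q_t(y_t \mid x^t, s^t, y^{t-1}) = \PR^{\mathbf q}(Y_t = y_t \mid X_t=x_t, S_t=s_t, Y^{t-1}=y^{t-1})$ on all positive-probability histories, which says precisely that $q_t$ depends on $(x^t,s^t)$ only through $(x_t,s_t)$, i.e. $\mathbf q \in \ALPHABET Q_B$.

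I expect the forward inequality to be routine; the delicate part is the ``only if'' half of the equality characterization. There the care lies in (i) invoking causal screening-off so that conditioning on the full $X^T$ rather than $X^t$ introduces no extra dependence, and (ii) correctly handling histories of probability zero, on which the policy may be redefined freely, so that ``equality $\iff \mathbf q \in \ALPHABET Q_B$'' should be read up to such null modifications.
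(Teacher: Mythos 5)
Your proof is correct, and its skeleton matches the paper's: chain rule over time, the observation that the conservation equation makes $(X_t,S_t)$ a deterministic function of $(S_1,X^T,Y^{t-1})$, and nonnegativity of conditional mutual information. Two differences are worth recording. First, the paper's proof proceeds by dropping the future demands as an \emph{equality} step---it rewrites $I^{\mathbf q}(S_1,X^T;Y_t \mid Y^{t-1})$ as $I^{\mathbf q}(S_1,X^t;Y_t \mid Y^{t-1})$, which requires a causality/Markov conditional-independence argument---then identifies $(S_1,X^t)$ with $(X^t,S^t)$ given $Y^{t-1}$, and only then discards the strict past. You instead adjoin $(X_t,S_t)$ directly and discard the single residual term $I^{\mathbf q}(S_1,X^T;Y_t \mid X_t,S_t,Y^{t-1})$, so your inequality uses no conditional-independence property whatsoever (it would hold verbatim even for non-Markov demand); the causal factorization of $\PR^{\mathbf q}$ enters your argument only when showing the residual vanishes for policies in $\ALPHABET Q_B$. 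Second, the paper's displayed proof establishes only the inequality chain (a)--(c) and leaves the ``equality iff'' claim implicit, whereas you spell it out: equality is equivalent to $Y_t \perp (S_1,X^T) \mid (X_t,S_t,Y^{t-1})$ for every $t$, which holds under $\ALPHABET Q_B$ because $\PR^{\mathbf q}(Y_t = y_t \mid S_1,X^T,Y^{t-1}) = q_t(y_t \mid x^t,s^t,y^{t-1})$, and which conversely forces $q_t$ to agree with $\PR^{\mathbf q}(Y_t = y_t \mid X_t,S_t,Y^{t-1})$ on positive-probability histories. Your null-set caveat is also the correct reading of the lemma: the literal ``only if'' is false as stated, since a $\ALPHABET Q_B$ policy can be modified on zero-probability histories to leave $\ALPHABET Q_B$ without disturbing any mutual information, so equality characterizes membership in $\ALPHABET Q_B$ only up to almost-sure modification---a point the paper does not address.
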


\begin{lemsec} \label{lem:a2}
  For any $\mathbf q_a \in Q_A$, there exists a $\mathbf q_b \in Q_B$, such that
  \[
    \sum_{t=1}^T I^{\mathbf q_a}(X_t, S_t ; Y_t | Y^{t-1}) 
    =
    \sum_{t=1}^T I^{\mathbf q_b}(X_t, S_t ; Y_t | Y^{t-1}).
  \]  
\end{lemsec}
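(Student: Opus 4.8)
The plan is to construct the desired $\mathbf q_b$ by \emph{marginalizing} $\mathbf q_a$ onto the current state. Specifically, I would set, for each $t$,
\[
  q_{b,t}(y_t \mid x_t, s_t, y^{t-1})
  := \PR^{\mathbf q_a}(Y_t = y_t \mid X_t = x_t, S_t = s_t, Y^{t-1} = y^{t-1})
\]
whenever the conditioning event has positive probability under $\PR^{\mathbf q_a}$, and as an arbitrary feasible distribution otherwise. By construction $q_{b,t}$ depends on the history only through $(x_t, s_t, y^{t-1})$, so $\mathbf q_b \in \ALPHABET Q_B$; and since $\mathbf q_a$ places all its mass on $\ALPHABET Y_\circ(s_t - x_t)$ given any history with $(X_t, S_t) = (x_t, s_t)$, the conditional distribution inherits this support restriction, so $\mathbf q_b$ is feasible.

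The crux is then to show that this reassignment leaves the relevant distributions unchanged. I would prove by induction on $t$ that
\[
  \PR^{\mathbf q_b}(X_t = x_t, S_t = s_t, Y^t = y^t)
  = \PR^{\mathbf q_a}(X_t = x_t, S_t = s_t, Y^t = y^t)
\]
for every $t$. The base case $t = 1$ holds because both policies share the same initial law $P_{X_1} P_{S_1}$. For the inductive step I would split it into two moves. First, assuming the marginals of $(X_t, S_t, Y^{t-1})$ agree, multiplying by $q_{b,t}$ and invoking the definition of $q_{b,t}$ as the $\mathbf q_a$-conditional law of $Y_t$ shows the marginals of $(X_t, S_t, Y^t)$ agree. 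Second, the transition from $(X_t, S_t, Y^t)$ to $(X_{t+1}, S_{t+1}, Y^t)$ is governed solely by the demand kernel $Q(x_{t+1} \mid x_t)$ and the deterministic conservation rule $S_{t+1} = S_t + Y_t - X_t$ of~\eqref{eq:conservation}, neither of which depends on the policy; hence the agreement propagates to $(X_{t+1}, S_{t+1}, Y^t)$, closing the induction.

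Once the joint laws of $(X_t, S_t, Y^t)$ coincide for every $t$, the per-step conditional mutual information $I^{\mathbf q}(X_t, S_t; Y_t \mid Y^{t-1})$---which is a functional of that joint law alone---is identical under $\mathbf q_a$ and $\mathbf q_b$ term by term, which is in fact stronger than the claimed equality of the two sums. The main obstacle I anticipate is bookkeeping rather than conceptual: I must handle histories of zero $\mathbf q_a$-probability carefully, so that the conditional distribution defining $q_{b,t}$ is well posed and the arbitrary choice made there is harmless; and I must verify cleanly that the state recursion preserves the marginal, which hinges precisely on the fact that $(X_t, S_t)$ forms a controlled Markov chain whose transition, given the current state and the chosen $Y_t$, is policy-independent.
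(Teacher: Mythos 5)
Your proof is correct and takes essentially the same approach as the paper: the paper also defines $q^{b}_{t}(y_t \mid x_t, s_t, y^{t-1}) = \PR^{\mathbf q_a}_{Y_t \mid X_t, S_t, Y^{t-1}}(y_t \mid x_t, s_t, y^{t-1})$ and then shows by induction on $t$ that $\PR^{\mathbf q_a}_{X_t, S_t, Y^{t}} = \PR^{\mathbf q_b}_{X_t, S_t, Y^{t}}$, from which the term-by-term equality of the conditional mutual informations follows. Your explicit handling of zero-probability histories and your splitting of the inductive step into the $Y_t$-update and the policy-independent $(X_{t+1},S_{t+1})$-transition are just slightly more careful bookkeeping of the paper's single chain of equalities.
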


Combining Lemmas~\ref{lem:a1} and~\ref{lem:a2}, we get that for any 
$\mathbf q_a \in \mathcal Q_A$, there exists a $\mathbf q_b \in \mathcal Q_B$
such that
\[
  I^{\mathbf q_a}(S_1, X^T; Y^T) \ge I^{\mathbf q_b}(S_1, X^T; Y^T).
\]
Therefore there is no loss of optimality in restricting attention to charging
policies in $\mathcal Q_B$.  Furthermore, Lemma~\ref{lem:a1} shows that for
any $q \in \mathcal Q_B$, $L_T(q)$ takes the additive form as given in
the statement of the proposition. 

\begin{proof}[Proof of Lemma~\ref{lem:a1}]
  For any $\mathbf q \in \mathcal Q_A$, we have
  \begin{align*}
    I^{\mathbf q}(S_1, X^n ; Y^n) 
    &\stackrel{\text{(a)}}=
    \sum_{t=1}^n I^{\mathbf q}(S_1, X^t ; Y_t | Y^{t-1}) \\
    &\stackrel{\text{(b)}}=
    \sum_{t=1}^n I^{\mathbf q}(X^t, S^t ; Y_t | Y^{t-1}) \\
    &\stackrel{\text{(c)}}\geq 
    \sum_{t=1}^n I^{\mathbf q}(X_t, S_t ; Y_t | Y^{t-1})
  \end{align*}
  where (a) uses the chain rule of mutual information and the fact that 
  $(\edit{S^{t-2}}, Y^{t-1}) \rightarrow X_{t-1} \rightarrow X_t$;\footnote{The notation
    ${A \rightarrow B \rightarrow C}$ is used to indicate that $A$ is
  conditionally independent of $C$ given $B$.}
  (b) uses the fact that the battery process $S^t$ is a
  deterministic function of $S_1$, \edit{$X^{t-1}$}, and \edit{$Y^{t-1}$} given
  by~\eqref{eq:conservation}; and (c) uses the fact that removing terms
  \edit{does not reduce} the mutual information.
\end{proof}

\begin{proof} [Proof of Lemma~\ref{lem:a2}]
  For any ${\mathbf q_a = (q^a_1, q^a_2,\dots, q^a_T) \in \mathcal Q_A}$, 
  construct a ${\mathbf q_b  = (q^b_1, q^b_2,\dots, q^b_T) \in \mathcal
  Q_B}$ as follows: for any $t$ and realization $(x^t, s^t, y^t)$ of $(X^t,
  S^t, Y^t)$ let
  \begin{equation} \label{eq:qb-def}
    q^{b}_{t}(y_t | x_t, s_t, y^{t-1} ) 
    = 
    \PR^{\mathbf q_a}_{Y^{}_{t} | X_t, S_t, Y^{t-1}}(y_t | x_t, s_t, y^{t-1}). 
  \end{equation}

  To prove the Lemma, we show that for any~$t$, 
  \begin{equation} \label{eq:pr-eq}
    \PR^{\mathbf q_a}_{X_t, S_t, Y^{t}} 
    = \PR^{\mathbf q_b}_{X_t, S_t, Y^{t}}. 
  \end{equation}
  By definition of $\mathbf q_b$ given by~\eqref{eq:qb-def}, 
  to prove~\eqref{eq:pr-eq}, it is sufficient to show that 
  \begin{equation} \label{eq:pr-eq-2}
    \PR^{\mathbf q_a}_{X_t, S_t, Y^{t-1}} 
    = \PR^{\mathbf q_b}_{X_t, S_t, Y^{t-1}}.
  \end{equation}
  We do so using induction.

  For $t=1$, $\PR^{\mathbf q_a}_{X_1, S_1}(x,s) = P_{X_1}(x) P_{S_1}(s) =
  \PR^{\mathbf q_b}_{X_1, S_1}(x,s)$. This forms the basis of induction.
  Now assume that \eqref{eq:pr-eq-2} hold for~$t$. 
  
  In the rest of the proof, for ease of notation, we denote $\PR^{\mathbf q_a}_{X_{t+1}, S_{t+1},
  Y^{t}}(x_{t+1}, s_{t+1}, y^t)$ simply by $\PR^{\mathbf q_a}(x_{t+1},
  s_{t+1}, y^t)$. For $t+1$, we have
  \begin{align*}
    \hskip 1em & \hskip -1em
    \PR^{\mathbf q_a}(x_{t+1}, s_{t+1}, y^t) =
    \smashoperator{\sum_{(x_t, s_t) \in \ALPHABET X \times \ALPHABET S}}
    \PR^{\mathbf q_a}(x_{t+1}, x_t, s_{t+1}, s_t, y^t) \\
    &=
    \smashoperator{\sum_{(x_t, s_t) \in \ALPHABET X \times \ALPHABET S}}
    Q(x_{t+1}|x_t) \IND_{s_{t+1}}\{s_t - x_t + y_t \} 
    q_a(y_t| x_t, s_t, y^{t-1}) \\
    &\qquad \times \PR^{\mathbf q_a}(x_t, s_t, y^{t-1}) \\
    &\stackrel{\text{(a)}}=
    \smashoperator{\sum_{(x_t, s_t) \in \ALPHABET X \times \ALPHABET S}}
    Q(x_{t+1}|x_t) \IND_{s_{t+1}}\{s_t - x_t + y_t \} 
    q_b(y_t| x_t, s_t, y^{t-1}) \\
    &\qquad \times \PR^{\mathbf q_b}(x_t, s_t, y^{t-1}) \\
    &= 
    \PR^{\mathbf q_b}(x_{t+1}, s_{t+1}, y^t)
  \end{align*}
  where (a) uses~\eqref{eq:qb-def} and the induction hypothesis. Thus,
  \eqref{eq:pr-eq-2} holds for $t+1$ and, by the principle of induction,
  holds for all $t$. Hence~\eqref{eq:pr-eq} holds and, therefore, $I^{\mathbf
  q_a}(X_t, S_t; Y_t | Y^{t-1}) = I^{\mathbf q_b}(X_t, S_t; Y_t | Y^{t-1})$.
  The statement in the Lemma follows by adding over~$t$.
\end{proof}

\section{Proof of \edit{concavity of the value function}} \label{app:concave}

To prove the result, we show the following:

\begin{lemsec} \label{lem:b1}
  For any action $a \in A$, if $V :\mathcal P_{X,S} \to \reals$ is concave, then $\mathscr B_a
  V$ is concave \edit{on $\mathcal P_{X,S}$}.
\end{lemsec}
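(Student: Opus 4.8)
The plan is to prove Lemma~\ref{lem:b1} by writing $\mathscr B_a V$ as a sum of two terms—the per-step cost $I(a;\pi)$ and the continuation term $\sum_{x,s,y}\pi(x,s)\,a(y\mid x,s)\,V(\varphi(\pi,y,a))$—and showing that each is concave in $\pi$. Since a sum of concave functions is concave, this suffices.

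For the per-step cost I would invoke the standard fact that mutual information $I(a;\pi)$ is concave in the input distribution $\pi$ for a fixed channel $a$. Concretely, writing $I(a;\pi)=H(Y)-H(Y\mid X,S)$ where $Y$ has channel $a$ driven by $(X,S)\sim\pi$, the conditional entropy $H(Y\mid X,S)=\sum_{x,s}\pi(x,s)\,H(a(\cdot\mid x,s))$ is linear in $\pi$, while the output entropy $H(Y)$ is a concave function of the output marginal $\pi\mapsto\sum_{x,s}\pi(x,s)a(\cdot\mid x,s)$, which is itself linear in $\pi$; hence $I(a;\pi)$ is concave.

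For the continuation term, the key observation is a linearity hidden inside the nonlinear filter $\varphi$. Define the unnormalized update $\mu_{a,y}(\pi)(x',s')=\sum_{x}Q(x'\mid x)\,a(y\mid x,s'-x+y)\,\pi(x,s'-x+y)$ and the observation likelihood $\lambda_{a,y}(\pi)=\sum_{x,s}a(y\mid x,s)\pi(x,s)$; both are \emph{linear} in $\pi$, and by definition $\varphi(\pi,y,a)=\mu_{a,y}(\pi)/\lambda_{a,y}(\pi)$, while $\sum_{x,s}\pi(x,s)a(y\mid x,s)=\lambda_{a,y}(\pi)$ because $\varphi(\pi,y,a)$ does not depend on $(x,s)$. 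Thus the continuation term equals $\sum_{y}\lambda_{a,y}(\pi)\,V\big(\mu_{a,y}(\pi)/\lambda_{a,y}(\pi)\big)$. For each fixed $y$, the summand is the \emph{perspective} of $V$ evaluated along a linear map of $\pi$: since $V$ is concave, its perspective $(\mu,\lambda)\mapsto\lambda\,V(\mu/\lambda)$ is jointly concave on $\{\lambda>0\}$, and precomposing this concave map with the linear map $\pi\mapsto(\mu_{a,y}(\pi),\lambda_{a,y}(\pi))$ preserves concavity. Summing the concave summands over $y$ gives concavity of the continuation term.

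The main point requiring care—rather than a deep obstacle—will be the degenerate observations $y$ for which $\lambda_{a,y}(\pi)=0$: the corresponding summand is taken to be $0$ (the event has probability zero), and I would note that the perspective of a bounded concave $V$ extends continuously by zero as $\lambda\downarrow 0$, so these terms do not disturb concavity. Combining the concavity of $I(a;\pi)$ with that of the continuation term yields that $\mathscr B_a V$ is concave on $\ALPHABET P_{X,S}$, completing the proof; applying this with $V=V_{t+1}$ (and noting $V_{T+1}\equiv 0$ is trivially concave) then propagates concavity backward through the dynamic program~\eqref{eq:fin-DP} by induction on $t$.
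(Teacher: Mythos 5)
Your proof is correct and follows essentially the same route as the paper's: both decompose $[\mathscr B_a V](\pi)$ into the per-step cost $I(a;\pi)$ and the continuation term, and both establish concavity of the latter by recognizing each summand as the perspective of $V$ composed with the linear maps $\pi \mapsto \mu_{a,y}(\pi)$ and $\pi \mapsto \lambda_{a,y}(\pi)$. Your additional details---the explicit $H(Y)-H(Y\mid X,S)$ argument for concavity of $I(a;\pi)$, which the paper merely asserts, and the treatment of observations with $\lambda_{a,y}(\pi)=0$---are sound refinements of the same argument rather than a different approach.
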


The \edit{concavity of the value functions} follows from backward induction. $V_{T+1}$ is a
constant and, therefore, also concave. Lemma~\ref{lem:b1} implies that
$V_{T},V_{T-1}, \dots, V_1$ are concave.

\begin{proof} [Proof of Lemma~\ref{lem:b1}]
  The first term $I(a ; \pi)$ of $[\mathscr B_a V](\pi)$ is a concave function
  of $\pi$. We show the same for the second term.

  Note that if a function $V$ is concave, then it's perspective 
  ${g(u,t) := tV(u/t)}$ is concave in the domain 
  ${\{(u,t) : u/t \in \text{Dom}(V), t > 0\}}$. The second term in the
  definition of the Bellman operator~\eqref{eq:bellman} 
  \begin{align*}
    \sum_{y \in Y} \bigg[ \sum_{(x,s) \in \mathcal X \times \mathcal S} a(y|x,s) \pi(x,s) \bigg] V(\varphi(\pi,y,a)) 
  \end{align*}
  has this form because the numerator of $\varphi(\pi,y,a)$ is linear in $\pi$ and the denominator is $\sum_{x,s} a(y|x,s) \pi(x,s)$ (and corresponds to $t$ in the definition of perspective). Thus, for each $y$, the summand is concave in $\pi$, and the sum of concave functions is concave. Hence, the second term of the Bellman operator is concave in $\pi$. Thus we conclude that concavity is preserved under $\mathscr B_a$.
\end{proof}

\section{Proof of Lemma \ref{lem:converge}} \label{app:converge}

The proof of the convergence of $\{\xi_t\}_{t\geq 1}$ relies on a result on
the convergence of partially observed Markov chains due to
Kaijser~\cite{Kaijser} that we restate below.

\begin{definition} A square matrix $D$ is called \emph{subrectangular} if for
  every pair of indices $(i_1, j_1)$ and $(i_2, j_2)$ such that $D_{i_1, j_1}
  \neq 0$ and $D_{i_2, j_2} \neq 0$, we have that $D_{i_2, j_1} \neq 0$ and
  $D_{i_1, j_2} \neq 0$.
\end{definition}

\begin{theorem}[Kaijser \cite{Kaijser}] 
  \label{thm:Kaijser} 
  Let $\left\{ U_t \right\}_{t \geq 1}$, $U_t \in \ALPHABET U$, be a finite
  state Markov chain  with transition matrix $P^u$. The initial state $U_1$ is
  distributed according to probability mass function $P_{U_1}$. Given a finite set
  $\ALPHABET Z$ and an observation function $g \colon \ALPHABET U \to
  \ALPHABET Z$, define the following:
  \begin{itemize}
    \item The process $\{Z_t\}_{t \ge 1}$, $Z_t \in \ALPHABET Z$,  given by
      \[  Z_t = g(U_t). \] 
    \item The process $\{\psi_t\}_{t \ge 1}$, $\psi_t \in \ALPHABET
      P_{\ALPHABET U}$, given by 
      \[ \psi_t(u) = \PR(U_t = u \mid Z^t). \]
    \item A square matrix $M(z)$, $z \in \ALPHABET Z$, given by
      \begin{equation*}
	\left[ M(z)\right]_{i,j} = 
	\begin{cases} 
	  P^u_{ij}	& \text{ if } g(j) = z\\
	  0       	& \text{ otherwise }
	\end{cases}	
	\qquad i,j \in \ALPHABET U.
      \end{equation*} 
  \end{itemize}

  If there exists a finite sequence $z^m_1$ such that $\prod_{t=1}^m M(z_t)$
  is subrectangular, then $\{\psi_t\}_{t \ge 1}$ converges in distribution to
  a limit that is independent of the initial distribution $P_{U_1}$.
\end{theorem}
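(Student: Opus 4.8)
The plan is to analyze the filter recursion through the lens of the Hilbert projective metric and Birkhoff's contraction theorem. First I would record the elementary recursion for the conditional law $\psi_t$. Writing $\psi_t$ as a row vector on $\ALPHABET U$, a one-step predict-and-update computation gives
\[
  \psi_{t+1} = \frac{\psi_t M(Z_{t+1})}{\psi_t M(Z_{t+1}) \mathbf 1},
\]
so that the \emph{unnormalized} filter evolves linearly, $\psi_t \propto \psi_1 \prod_{s=2}^{t} M(Z_s)$. Hence comparing two filters driven by the same observation sequence but started from different initial laws $P_{U_1}$, $P'_{U_1}$ amounts to comparing the images of two positive vectors under a common product of the matrices $M(z)$, which is exactly the setting in which projective contraction arguments apply.

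Second, I would bring in the Hilbert metric $d_H(x,y) = \log\bigl(\max_u (x_u/y_u)\big/\min_u (x_u/y_u)\bigr)$ on rays of the positive cone. The key point, and the reason the hypothesis is phrased in terms of subrectangularity, is the following identification. Birkhoff's theorem states that a nonnegative matrix $D$ acts as a non-expansion under $d_H$ and as a strict contraction with coefficient $\tau(D) = \tanh(\Delta(D)/4) < 1$ whenever its projective image-diameter $\Delta(D) = \sup_{i_1,i_2,j_1,j_2} \log\frac{D_{i_1 j_1} D_{i_2 j_2}}{D_{i_2 j_1} D_{i_1 j_2}}$ is finite. A short combinatorial lemma shows $\Delta(D) < \infty$ precisely when $D$ is subrectangular: the cross-ratio above is finite exactly when the presence of the nonzero entries $D_{i_1 j_1}$ and $D_{i_2 j_2}$ forces $D_{i_2 j_1}$ and $D_{i_1 j_2}$ to be nonzero, which is the stated definition. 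Thus the product $\bar M := \prod_{t=1}^m M(z_t)$ is a genuine $d_H$-contraction.

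Third, I would turn this single contraction into asymptotic forgetting. Each factor $M(z)$ is a non-expansion, so the Hilbert distance between the two running filters is non-increasing, and it is multiplied by $\tau(\bar M) < 1$ every time the triggering word $z_1^m$ appears as a block in the observed sequence. One must first check that after one occurrence of the word both filters share the common column-support of $\bar M$, so that their Hilbert distance is finite and the contraction estimate is meaningful. Because the word $z_1^m$ is producible (the product is nonzero) and the observation process inherits recurrence from the finite chain, the word recurs infinitely often almost surely (a Borel--Cantelli / ergodicity argument); consequently $d_H(\psi_t, \psi'_t) \to 0$ almost surely, i.e., the filter asymptotically forgets its initialization.

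Finally, I would upgrade almost-sure forgetting to convergence in distribution via a coupling-from-the-past construction. Embedding the chain into a stationary two-sided version $(U_t, Z_t)_{t \in \integers}$, let $\psi_0^{(n)} = \PR(U_0 = \cdot \mid Z_{-n}^0)$ be the filter initialized $n$ steps in the past; the forgetting estimate makes $\{\psi_0^{(n)}\}_n$ Cauchy in $d_H$ as $n \to \infty$, with an almost-sure limit $\psi_\infty$ measurable with respect to the infinite observation past $Z_{-\infty}^0$ and therefore not depending on $P_{U_1}$. Stationarity then identifies the law of $\psi_t$ with that of $\psi_0^{(t-1)}$, which converges weakly to the law of $\psi_\infty$. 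I expect the main obstacle to be the support bookkeeping in the third step --- bringing the two filters onto a common support before invoking the contraction, since the Birkhoff estimate is only informative between vectors of equal support --- but the clean equivalence ``subrectangular $\Leftrightarrow$ finite projective diameter'' is exactly what makes the remainder of the argument go through.
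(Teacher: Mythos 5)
This statement is Kaijser's theorem, which the paper imports from \cite{Kaijser} without proof, so there is no in-paper argument to compare against; your attempt has to stand on its own. The route you take---the filter recursion $\psi_{t+1} \propto \psi_t M(Z_{t+1})$, Birkhoff's contraction theorem in the Hilbert projective metric, and the identification of subrectangularity with finiteness of the projective diameter---is sound as far as it goes, and is a standard way to obtain filter-forgetting estimates: a subrectangular matrix has support equal to a full rectangle $R \times C$, it maps every nonnegative row vector charging $R$ to a vector supported exactly on $C$, and within that common face the cross-ratio bound gives a finite projective diameter $\Delta$ and contraction coefficient $\tanh(\Delta/4) < 1$.

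The genuine gaps are in your last two steps. First, the claim that the triggering word ``recurs infinitely often almost surely'' because ``the observation process inherits recurrence from the finite chain'' is false for a general finite Markov chain, which is all the statement assumes; in fact the statement as quoted is false without an ergodicity hypothesis. Take $P^u = I$ with $g$ injective: every $M(z)$ has a single nonzero entry and is therefore subrectangular, yet $\psi_t = \delta_{U_1}$ for all $t$, so the limit law visibly depends on $P_{U_1}$; a deterministic two-cycle with injective $g$ similarly destroys convergence altogether, by periodicity. Kaijser's actual theorem assumes the chain is irreducible and aperiodic (the paper's restatement silently drops this, and its application in Appendix~D satisfies it); your proof needs this hypothesis too, and even then infinitely many word occurrences follow from the ergodic theorem applied to the $m$-step block chain, not from Borel--Cantelli, since the window events are dependent. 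Second, your coupling-from-the-past step establishes convergence only for the \emph{stationary} initialization: the identity in law between $\psi_t$ and $\psi_0^{(t-1)} = \PR(U_0 \in \cdot \mid Z_{-t+1}^0)$ requires $U_1 \sim \pi^*$, and a non-stationary chain admits no two-sided stationary extension, so the conclusion ``therefore not depending on $P_{U_1}$'' does not follow for arbitrary $P_{U_1}$. To close this you need an extra argument: couple the $P_{U_1}$-chain with the stationary chain so that they coincide after an almost surely finite time, observe that from then on both filters are driven by the same observation sequence, and invoke your forgetting estimate once more so that their Hilbert (hence total-variation) distance vanishes and the law of $\psi_t$ merges with the stationary limit. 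Both gaps are fixable, but they are precisely where the theorem's true hypotheses and most of the work reside.
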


We will use the above theorem to prove that under policy $\mathbf
b^{\circ}$, $\{\xi_t\}_{t \ge 1}$ converges to a limit. For that
matter, let $\ALPHABET U = \ALPHABET S \times \ALPHABET Y$, $\ALPHABET Z =
\ALPHABET Y$, $U_t = (S_t, Y_{t-1})$ and $g(S_t, Y_{t-1}) = Y_{t-1}$. 

First, we show that $\{U_t\}_{t \ge 1}$ is a Markov chain. In particular, for
any realization $(s^{t+1}, y^{t})$ of $(S^{t+1}, Y^t)$, we have that
\begin{align*}
    &\PR^{\mathbf b^\circ}
    (U_{t+1} = (s_{t+1},y_t) \mid U^{t} = (s^{t},y^{t-1})) \\
    &= \sum_{\tilde x_{t} \in \mathcal X} P(U_{t+1} = (s_{t+1},y_t), X_t =
    \tilde x_t \mid U^{t} = (s^{t},y^{t-1})) \\
    &= \sum_{\tilde x_{t} \in \mathcal X} \IND_{s_{t+1}}\{y_{t} + s_{t} - \tilde x_{t}\} b^*(y_t | s_t - \tilde x_t) P_X(\tilde x_{t})
    \\
    &=\PR^{\mathbf b^\circ}
    (U_{t+1} = (s_{t+1},y_t) \mid U_{t} = (s_{t},y_{t-1})).
\end{align*}

Next, let $m = 2m_s$ and consider 
\[
  z^m = 
  \underbrace{111\cdots1}_{m_s~\text{times}}
  \underbrace{000\cdots0}_{m_s~\text{times}}.
\]
We will show that this $z^m$ satisfies the subrectangularity condition of
Theorem~\ref{thm:Kaijser}. The basic idea is the following. Consider any
initial state $u_1 = (s, y)$ and any final state $u_m = (s',0)$. We will
show that 
\begin{align}
  \label{eq:to-smax}
  \PR(S_{m_s} = m_s \mid U_1 = (s,y), Z^{m_s} = (111\dots1)) &> 0, \\
  \intertext{and}
  \label{eq:from-smax}
  \PR(S_{2m_s} = s' \mid U_{m_s} = (s_m,1), Z_{m_s+1}^{m_s} = (000\dots0)) &>
  0.
\end{align}
Eqs.~\eqref{eq:to-smax} and~\eqref{eq:from-smax} show
that given the observation sequence $z^m$, for \emph{any} initial state
$(s,y)$ there is a positive probability of observing \emph{any} final state
$(s',0)$.\footnote{Note that given the observation sequence $z^m$, the final
state must be of the form $(s',0)$.} Hence, the matrix $\prod_{t=1}^m M(z)$ is
subrectangular. Consequently, by Theorem~\ref{thm:Kaijser}, the process
$\{\psi_t\}_{t \ge 1}$ converges in distribution to a limit that is
independent of the initial distribution $P_{U_1}$. 

Now observe that $\theta_t(s) = \sum_{y \in \ALPHABET Y} \psi_t(s,y)$ and
$(\theta_t, \xi_t)$ are related according to Lemma~\ref{lem:equiv}. Since
$\{\psi_t\}_{t \ge 1}$ converges weakly independent of the initial condition,
so do $\{\theta_t\}_{t \ge 1}$ and $\{\xi_t\}_{t \ge 1}$. 

Let $\bar \theta$ and $\bar \xi$ denote the limit of
$\{\theta_t\}_{t \ge 1}$ and $\{\xi_t\}_{t \ge 1}$. Suppose the initial
condition is $(\theta^\circ,\xi^\circ)$. Since $b^\circ$ is an invariant
policy, $(\theta_t, \xi_t) = (\theta^\circ, \xi^\circ)$ for all~$t$.
Therefore, the limits $(\bar \theta, \bar \xi) = (\theta^\circ, \xi^\circ)$. 

\begin{proof}[Proof of Eq.~\eqref{eq:to-smax}]
  Given the initial state $(s,y)$, define $\bar s = m_s - s$, and consider the
  sequence
  \[
    x^{m_s} = 
    \underbrace{000\cdots0}_{\bar s~\text{times}}
    \underbrace{111\cdots1}_{s~\text{times}}.
  \]
  Under this sequence of demands, consider the sequence of consumption
  $y^{m_s - 1} = (11\dots1)$, which is feasible because the state of the
  battery increases by~$1$ for the first $\bar s$ steps (at which time it
  reaches $m_s$) and then remains constant for the remaining $s$ steps.
  Therefore,
  \begin{multline*}
    \PR(S_{m_s} = m_s \mid U_1 = (s,y), \\
    Y^{m_s-1} = (111\dots1),
    X^{m_s} = x^{m_s}) > 0.
  \end{multline*}
  Since the sequnce of demands $x^m$ has a positive probability, 
  \begin{multline*}
    \PR(S_{m_s} = m_s, X^{m_s} = x^{m_s} \mid U_1 = (s,y), \\
    Y^{m_s-1} = (111\dots1)) > 0.
  \end{multline*}
  Therefore,
  \[
    \PR(S_{m_s} = m_s \mid U_1 = (s,y), Y^{m_s-1} = (111\dots1)) > 0
  \]
  which completes the proof.
\end{proof}

\begin{proof}[Proof of Eq.~\eqref{eq:from-smax}]
  The proof is similar to the Proof of~\eqref{eq:to-smax}. Given the final
  state $(s',0)$, define $\bar s' = m_s - s'$ and consider the sequence
  \[
    x_{m_s + 1}^{2m_s} =
    \underbrace{111\cdots1}_{\bar s'~\text{times}}
    \underbrace{000\cdots0}_{s'~\text{times}}.
  \]
  Under this sequnce of demains and the sequence of consumption given by
  $y_{m_s}^{2m_s - 1} = (00\dots0)$, the state of the battery decreases by~$1$
  for the first $\bar s'$ steps (at which time it reaches $s'$) and then
  remains constant for the remaining $s'$ steps. Since $x_{m_s + 1}^{2m_s}$
  has positive probability, we can complete the proof by following an argument
  similar to that in the proof of~\eqref{eq:to-smax}. 
\end{proof}

\bibliographystyle{IEEEtran}
\bibliography{IEEEabrv,sm}

\end{document}